\DeclarePairedDelimiter\ceil{\lceil}{\rceil}
\def\nn{\nonumber}
\newcommand{\dbtilde}[1]{\accentset{\approx}{#1}}
\begin{document}
%
\title{
Finite-State Channels with Feedback and State Known at the Encoder
\thanks{
E. Shemuel was supported by the Ministry of Science and Technology of Israel. This work was supported by the German Research Foundation (DFG) via the German-Israeli Project Cooperation [DIP] and by the ISF research grant 818/17. 
The material in this paper was presented in part at the
56th Annual Allerton Conference on Communication, Control, and Computing, Monticello, IL, USA, October 2018, and at the IEEE International Symposium on Information Theory, Los Angeles, CA, USA, June 2020.
E. Shemuel and H. H. Permuter are with the Department of Electrical and
Computer Engineering, Ben-Gurion University of the Negev, Beer-Sheva
8410501, Israel (e-mail: els@post.bgu.ac.il; haimp@bgu.ac.il). O. Sabag is with the School of Engineering and Computer Science, 
The Hebrew University of Jerusalem, Jerusalem, Israel (e-mail: oron.sabag@mail.huji.ac.il).}}
%
%
%
\author{Eli Shemuel, Oron Sabag, Haim H. Permuter}
\maketitle
\begin{abstract}
We consider finite state channels (FSCs) with feedback and state information known causally at the encoder. 
This setting is quite general and includes: a memoryless channel with i.i.d. state (the Shannon strategy), Markovian states that include look-ahead (LA) access to the state and energy harvesting. 
We characterize the feedback capacity of the general setting as the directed information between auxiliary random variables with memory to the channel outputs. 
We also propose two methods for computing the feedback capacity: (i) formulating an infinite-horizon average-reward dynamic program; and (ii) a single-letter lower bound based on auxiliary directed graphs called $Q$-graphs. We demonstrate our computation methods on several examples. In the first example, we introduce a channel with LA and derive a closed-form, analytic lower bound on its feedback capacity. Furthermore, we show that the mentioned methods achieve the feedback capacity of known unifilar FSCs such as the trapdoor channel, the Ising channel and the input-constrained erasure channel. Finally, we analyze the feedback capacity of a channel whose state is stochastically dependent on the input.
\end{abstract}

\begin{IEEEkeywords}
channel capacity, channels with feedback, dynamic programming, finite-state channel, Q-graphs.  
\end{IEEEkeywords}
%
\IEEEpeerreviewmaketitle

\newtheorem{question}{Question}
\newtheorem{claim}{Claim}
\newtheorem{guess}{Conjecture}
\newtheorem{definition}{Definition}
\newtheorem{fact}{Fact}
\newtheorem{assumption}{Assumption}
\newtheorem{theorem}{Theorem}
\newtheorem{lemma}{Lemma}
\newtheorem{ctheorem}{Corrected Theorem}
\newtheorem{corollary}{Corollary}
\newtheorem{proposition}{Proposition}
\newtheorem{remark}{Remark}
\newtheorem{example}{Example}

\def\cS{{\mathcal S}}
\def\cX{{\mathcal X}}
\def\cU{{\mathcal U}}
\def\cV{{\mathcal V}}
\def\cQ{{\mathcal Q}}
\def\cY{{\mathcal Y}}

\section{Introduction}
%
%
%
%

\IEEEPARstart{T}{he} capacity of discrete memoryless channels (DMCs) with an independent and identically distributed (i.i.d.) state, where causal state information (SI) is known at the encoder, was studied by Shannon \cite{Shannon58}. Furthermore, Shannon showed that feedback does not increase the capacity of a DMC \cite{shannon56}, which holds also when causal SI is available. However, this is not the case for channels with memory. Channels with memory, which are common in wireless communication \cite{Sadeghi,ZHANG,turin1990performance,FSCTransWirelessComm,FSCTransWirelessComm1,pimentel2004finite,zhong2008model}, 
molecular communication \cite{MolecFSCTransComm,MolecularSurvey}
and magnetic recordings \cite{Immink}, can be often described by the finite-state channel (FSC) model \cite{Blackwell58,Gallager68,GBAA,Ziv85,PfisterISI,RGray}. 
The memory in FSCs is encapsulated in a channel state with a finite set of values.
In this paper, we generalize Shannon's work to the case of FSCs, i.e., we study FSCs with feedback and causal SI known at the encoder, as depicted in Fig. \ref{fig:setting}. 

This setting we study covers many interesting scenarios. One scenario is a channel with i.i.d. state when the SI is available at the encoder in advance with some finite look-ahead (LA) \cite{weissman2006source,DasNarayan,latticeStrategies}. The question whether feedback increases the capacity of this scenario is an open problem. Our setting covers this scenario with feedback since an i.i.d. state with a finite LA can be viewed as a Markovian state process causally known at the encoder. 
Additionally, our setting covers scenarios in which the state is input-dependent, that is, the state evolution depends on the channel inputs. A well-known problem in which the state is input-dependent is the energy-harvesting (EH) model \cite{ozel2012optimal,tutuncuoglu2012optimum,mao2013capacity,dong2014approximate,jog2014energy,shaviv2016capacity,tutuncuoglu2017binary}, motivated by many emerging wireless networks. 
The EH model can be viewed as a FSC where the
channel state is the current battery level governed by the channel inputs and the charging process,
thus it is covered by our setting in the presence of feedback. One more input-dependent scenario covered by Fig. \ref{fig:setting} is Noisy Output is the STate (NOST) channels, i.e., channels where the state is stochastically dependent on the channel output, with feedback and causal SI known at the encoder. The capacity of this scenario was derived in \cite{NOST} as a single-letter expression.

\begin{figure}[t]
\begin{center}
\begin{psfrags}
    \psfragscanon
    \psfrag{E}[][][1]{$M$}
    \psfrag{S}[][][1]{\begin{tabular}{@{}l@{}}
    $S^{i-1}$
    \end{tabular}}
    \psfrag{A}[\hspace{2cm}][][1]{Encoder}
	 \psfrag{F}[\hspace{1cm}][][1]{$X_i$}
	 \psfrag{B}[\hspace{2cm}]{{$P(s_i,y_i|x_i,s_{i-1})$}}
	 \psfrag{G}[][][1]{$Y_i$}
	 \psfrag{C}[\hspace{2cm}][][1]{Decoder}
	 \psfrag{K}[][][1]{$\hat{M}$}
	 \psfrag{H}[\hspace{2cm}][][1]{$Y_i$}
	 \psfrag{D}[\hspace{2cm}][][1]{Unit-Delay}
	 \psfrag{J}[\vspace{2cm}\hspace{2cm}][][1]{$Y_{i-1}$}
	 \psfrag{L}[\hspace{2cm}][][1]{Finite-State Channel}
	 \psfrag{I}[][][1]{}
\includegraphics[scale=0.8]{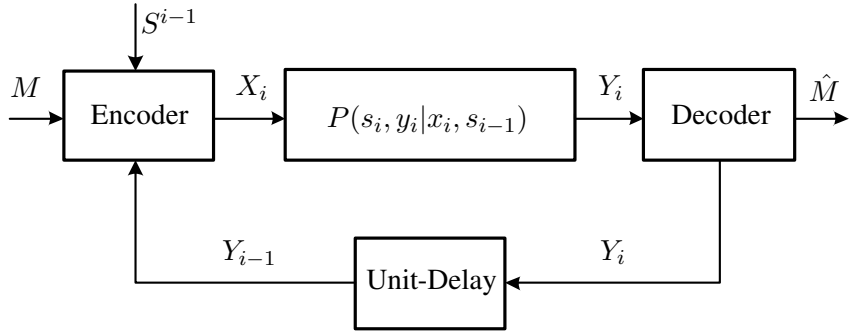}
\caption{
FSCs with feedback and SI known at the encoder. Note that the channel output at time $i$ depends on $S_{i-1}$. }
\label{fig:setting}
\psfragscanoff
\end{psfrags}
\end{center}
\end{figure}

We derive a multi-letter capacity formula for the general setting
expressed as the directed information between a sequence of auxiliary random variables (RVs) and the sequence of the channel outputs. As in the case of i.i.d. states, we show that the channel input is a function of the auxiliary RV and the state. However, in our setting, the auxiliary RVs should have memory. 
Although our capacity expression is multi-letter, we use it to propose sequences of lower and upper bounds on the feedback capacity, whose elements are finite-letter expressions. Furthermore, we develop two methods for computing the feedback capacity. The first method is a formulation of the capacity expression as an infinite-horizon dynamic programming (DP) optimization problem. 
In the second method, we derive a single-letter lower bound based on $Q$-graphs \cite{Sabag_UB_IT}. 

The DP framework was introduced in \cite{Tatikonda2000ControlUC} as a tool for computing the feedback capacity of FSCs. This paved the way for more works that formulated the feedback capacity of certain FSCs as a DP: Markov channels \cite{TatikondaMitter_IT09}, ISI channels \cite{yang2005feedback} 
and the most general class of unifilar FSCs \cite{Permuter06_trapdoor_submit}. For the latter case of unifilar FSCs, the new state is a function of the current state, input and output, and consequently the state is known to the encoder. Thus, even the general class of unifilar FSCs is captured in our framework of Fig. \ref{fig:setting}.

The $Q$-graph technique, introduced in \cite{Sabag_UB_IT}, is another tool for computing lower and upper bounds on the feedback capacity. It maps the receiver’s output sequences to a sequential quantization in a finite set of graphs represented by a directed graph, called a $Q$-graph. In \cite{Sabag_UB_IT}, single-letter lower and upper bounds on the feedback capacity were derived for unifilar FSCs for any $Q$-graph, while in \cite{Graph-Based}
it was shown how to compute them. We implement their method on our multi-letter capacity expression of the general setting in order to derive a single-letter lower bound on the feedback capacity for any $Q$-graph.

Generally, both the DP and the $Q$-graph tools can be used to compute achievable rates, but for special cases they capture the precise feedback capacity. We show the tightness of the $Q$-graph bound for known unifilar FSCs such as the trapdoor channel \cite{Permuter06_trapdoor_submit}, the Ising channel \cite {elishco2014capacity} and the binary erasure channel (BEC) with inputs constraint \cite{Sabag_BEC}. We also investigate the feedback capacity of a channel with LA SI known at the encoder, and the feedback capacity of a generalized Ising channel whose state is stochastically dependent on the input.
The remainder of the paper is organized as follows.
Section~\ref{sec:problem_def} defines the notation used in this paper and the setting, and provides preliminaries on DP and $Q$-graphs. Section~\ref{sec:main_results} presents the main results. Section~\ref{sec:dpQgraph} focuses on the DP and $Q$-graph methods. In Section~\ref{sec:examples}, we provide several FSCs and study their feedback capacity. Section~\ref{sec:proof_cap} proves our main result.
Finally, Section~\ref{sec:conclusions} concludes this work.
\section{The Communication Setup and Preliminaries}
\label{sec:problem_def}
In this section, we introduce the notation and the communication setup. We then provide preliminaries on DP and the $Q$-graphs. 
\subsection{Notation}
Lowercase letters denote sample values (e.g. $x,y$), and uppercase letters denote discrete RVs (e.g. $X,Y$). Subscripts and superscripts denote vectors in the following way: $x_i^j=(x_i,x_{i+1},...,x_j)$ and $X_i^j=(X_i,X_{i+1},...,X_j)$ for $1\leq i \leq j$. $x^n$ and $X^n$ are shorthand for $x_1^n$ and $X_1^n$, respectively.
We use calligraphic letters (e.g. $\cX,\cY$) to denote alphabets, and $|\cX|$ to denote the cardinality of the alphabet. For two RVs $X,Y$ the probability mass function (PMF) of $X$ is denoted by $P(X=x)$, the conditional PMF of $X=x$ given $Y=y$ is denoted by $P(X=x|Y=y)$, and the joint PMF is denoted by $P(X=x,Y=y)$; the shorthand $P(x),P(x|y),P(x,y)$ are used for the above, respectively. The indicator function is denoted by $\mathbbm{1}(\cdot)$. We use $\oplus$ to denote the binary XOR operation. We define $\bar{a}=1-a$ for some $a \in [0,1]$. For a pair of integers $n\le m$, we define the discrete interval $[n:m]\triangleq \{n,n+1,\dots,m\}$. We use logarithms to base $2$; thus the entropy is measured in bits. 

The \textit{directed information} (DI) between $X^N$ to $Y^N$ conditioned on $S$, introduced by Massey \cite{massey1990causality} and employed with conditioning in \cite{PermuterWeissmanGoldsmith09}, is defined as 
\begin{equation}
I(X^N\rightarrow Y^N|S)\triangleq \sum_{i=1}^{N} I(X^i;Y_i|Y^{i-1},S).
\end{equation}
The \textit{causally conditional distribution} (CCD) conditioned on $s_0$, introduced in \cite{Kramer03,permuter2006capacity} and employed with conditioning in \cite{Permuter06_trapdoor_submit}, is defined as
\begin{equation}
    P(x^N||y^{N-1},s)\triangleq \prod_{i=1}^N P(x_i|x^{i-1},y^{i-1},s). 
\end{equation}
\subsection{The Setting}
We consider FSCs as shown in Fig. \ref{fig:setting}. A FSC consists of finite input, output and channel state alphabets $\cX,\cY,\cS$, respectively.
It is defined by ($\cX \times \cS$, $P_{Y,S^+|X,S}$, $\cY \times \cS$) where $S,S^+$ are the channel state at the beginning and at the end of the transmission, respectively. The initial state is distributed according to $P(s_0)$, and it is available to the encoder but not to the decoder. At time $i$, the encoder has access to the message $m\in \mathcal{M}$, the output feedback and the channel state. The channel is time invariant, and at each time $i$ it has the Markov property
\begin{equation}
P(y_i,s_i|x^i,s_0^{i-1},y^{i-1.},m) = P_{Y,S^+|X,S}(y_i,s_i|x_i,s_{i-1}).
\label{eq:BasicFSCMarkov}
\end{equation}
The encoder's mapping at time $i$ is denoted as 
\begin{align}
\label{eq:enc}
    f_i:\mathcal{M} \times \cS_{0}^{i-1} \times \cY^{i-1} \to \cX,
\end{align}
and the decoder's mapping is
\begin{align}
\label{eq:dec}
    \hat{m}: \cY^n \to \mathcal{M}.
\end{align}
An $(2^{nR},n)$ code is a pair of encoding and decoding mappings \eqref{eq:enc}-\eqref{eq:dec} with a message set $\mathcal{M}=[1:\ceil{2^{nR}}]$, and $M$ is uniformly distributed over $\mathcal{M}$.
A rate $R$ is \textit{achievable} if there exists a sequence of codes $(n,\ceil{2^{nR}})$ such that the \textit{average probability of error} defined as $P_e^{(n)}\triangleq \Pr (\hat{m}\neq m| \text{message $m$ was sent})$ tends to zero as $n\to \infty$. The \textit{capacity} of the setting is defined as the supremum over all achievable rates, and is denoted by $C_{\text{fb-csi}}$. 
Furthermore, we assume that the FSC is \textit{strongly connected}.
\begin{definition}[Connectivity] \cite[Def.~2]{Permuter06_trapdoor_submit}
A FSC is strongly connected if for all $s',s \in \cS$ there exist $T(s)$ and input distribution of the form $\{P(x_i|s_{i-1})\}_{i=1}^{T(s)}$ that may depend on $s$, such that $\sum_{i=1}^{T(s)} P(S_i=s|S_0=s')>0$.
\end{definition}

\subsection{Average-Reward Dynamic Programming}
\label{subsection:prelDP}
A DP is defined by a septuple $(\mathcal{Z},\mathcal{U},\mathcal{W},F,P_Z,P_w,g)$.
We consider a discrete-time dynamic system evolving according to
\begin{equation*}
  z_i=F(z_{i-1},a_i,w_i),\  i=1,2,\dots
\end{equation*}
Each state, $z_i$, takes values in a Borel space $\mathcal{Z}$.
Each action, $a_i$, takes values in a compact subset $\mathcal{A}$ of a Borel space.
Each disturbance, $w_i$, takes values in a measurable space $\mathcal{W}$, and is drawn from a distribution $P_w(\cdot|z_{i-1},a_i)$ that depends on the state $z_{i-1}$, and action $a_i$.
The initial state, $z_0$, is drawn from a distribution $P_Z$. All functions considered in this section are assumed to be measurable. The history, \textcolor{black}{$h_{i}=(z_0,w_1^{i-1})$}, summarizes information available to the controller at time \textcolor{black}{$i$, prior to the selection of the $i$th action}. At time \textcolor{black}{$i$}, the controller selects the action, $a_i$, by a function \textcolor{black}{$\mu_{i}$} that maps histories to actions, i.e., \textcolor{black}{$a_i = \mu_{i}(h_{i})$}. Given a policy, denoted by $\pi=\{\mu_1,\mu_2,\dots\}$, and the history, \textcolor{black}{$h_{i}$}, one can compute the actions vector, $a^i$, and the past state vector of the system, $z^{i-1}$.

Given a bounded reward function $g: \cal{Z} \times \cal{A} \rightarrow \mathbb{R}$,
the objective is to maximize the infinite-horizon average reward. For a policy $\pi$, it is defined by
\begin{equation}
\label{average_reward}
\rho_{\pi} = \liminf_{N \to \infty} \frac{1}{N} \mathbb{E}_{\pi}\left\{\sum_{i=1}^{N}
g(Z_{i-1}, \mu_{i}(H_{i})) \right\},
\end{equation}
where the subscript $\pi$ indicates that actions $a_i$ are generated
by the policy $\pi$. The optimal average reward is defined by
\begin{equation}
\rho^*= \sup_{\pi} \rho_\pi.
\end{equation}

\subsection{$Q$-graphs}
\label{subsection:prelQgraph}
\begin{figure}[t]
\begin{center}
\begin{psfrags}
    \psfragscanon
    \psfrag{Q}[][][1]{$Y=1$}
    \psfrag{E}[][][1]{$Y=0$}
    \psfrag{F}[][][1]{$Y=?$}
    \psfrag{O}[][][1]{$Y=0/?/1$}
    \psfrag{L}[][][1]{$Q=2$}
    \psfrag{H}[][][1]{$Q=1$}
    \includegraphics[scale = 0.5]{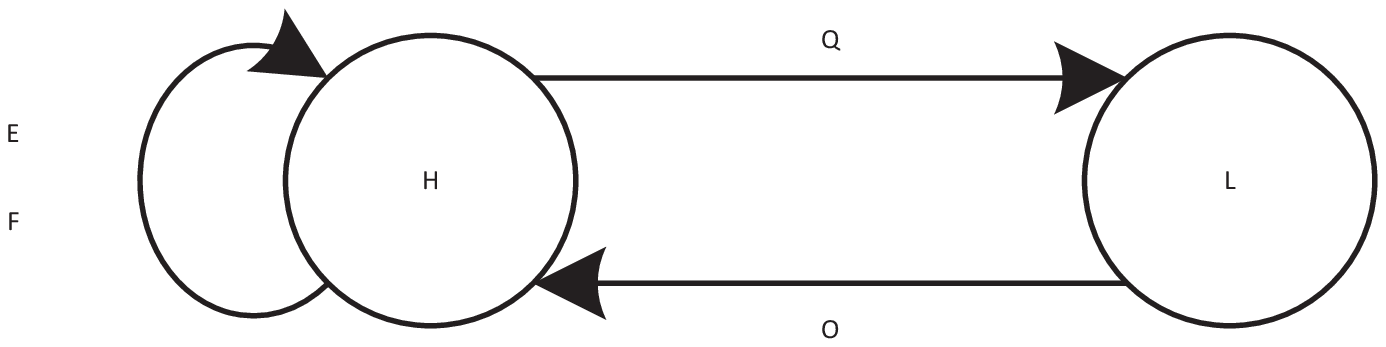}
    \caption{An example of a $Q$-graph with $|\mathcal{Q}|=2$ and $\mathcal{Y}=\{0,1,?\}$. }
    \label{fig:q_graph_ex}
    \psfragscanoff
\end{psfrags}
\end{center}
\end{figure}
The $Q$-graph is a technique that enables us to derive a single-letter lower and upper bounds on the feedback capacity out of a multi-letter expression, based on \textit{Quantized-graphs} ($Q$-graphs). A $Q$-graph is defined as a directed, connected graph with a finite number of nodes. Each node takes a different value $q \in \cQ$, and it has exactly $|\cY|$ outgoing edges that are labeled with distinct symbols from $\cY$. From the $Q$-graph definition, it follows that given an initial node, $q_0$, and an output sequence, $y^i$, walking along the corresponding labeled edges determines uniquely a final node, $q_i$. This induced mapping can be represented by $\Phi_{i}:{\cY}^{i}\to \cQ$ as well as by a time-invariant function, $g:\cQ\times\cY\to\cQ$, where the current graph node and the channel output determine a new node. An example of a $Q$-graph is depicted in Fig.~\ref{fig:q_graph_ex}.

\section{Main Results}
\label{sec:main_results}
In this section, we present our main results. The following two theorems characterize the capacity of the setting.

\begin{theorem}
\label{theorem:capDI}
The feedback capacity of a strongly connected FSC with SI known causally at the encoder is given by
\begin{align}
    C_{\text{fb-csi}}&=\lim_{N \to \infty} \frac{1}{N}  \max_{P(u^N||y^{N-1})} I(U^N \to Y^N)\label{eq:capDI},
\end{align}
where $\{U_i\}_{i \ge 1}$ are auxiliary RVs with $|\cU|=|\cX|^{|\cS|}$,
and the joint distribution is given by 
\begin{align}
    \label{eq:joint_distDI}
    P(x^N,s_0^N,y^N,u^N)&=P(s_0)P(u^N||y^{N-1})\prod_{i=1}^N\mathbbm{1}\{x_i = f(u_i,s_{i-1})\}P_{Y,S^+|X,S}(y_i,s_i|x_i,s_{i-1})
    \text{.}
\end{align}
Each  $u\in \cU$ corresponds to a distinct function from the set $\{f_u (s):\cS \to \cX \}$.
\end{theorem}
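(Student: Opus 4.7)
The proof proceeds in two parts, achievability and converse, lifting Shannon's strategy paradigm for memoryless i.i.d.-state channels to the FSC setting through the directed-information coding framework of \cite{Kramer03, PermuterWeissmanGoldsmith09}. For achievability I would fix a causal conditional $P(u^N||y^{N-1})$ and generate a random codebook of strategy sequences $\{U^N(m)\}_{m \in [1:\ceil{2^{NR}}]}$ drawn from it. At time $i$ the encoder, which observes the state $S_{i-1}$ and the feedback $Y^{i-1}$, transmits $X_i = U_i(S_{i-1})$, where $U_i$ is the $i$-th symbol of the codeword selected by $(M, Y^{i-1})$; the decoder searches for the unique message whose strategy sequence is jointly typical with $Y^N$. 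Because the joint law induced by this scheme matches \eqref{eq:joint_distDI}, the standard directed-information error analysis shows that any rate strictly below $\frac{1}{N} I(U^N \to Y^N)$ is achievable for each $N$, which gives the forward direction after taking the limit in $N$.

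For the converse, Fano's inequality together with $M \perp S_0$ yields $nR \le I(M;Y^N|S_0) + n\epsilon_n$. I would identify the auxiliary strategy
\begin{equation*}
U_i(s) \triangleq f_i(M, S_0^{i-2}, s, Y^{i-1}),
\end{equation*}
a function from $\cS$ to $\cX$ that takes values in a set of cardinality $|\cX|^{|\cS|}$ and satisfies $X_i = U_i(S_{i-1})$ by construction. The central Markov chain is $M - (U^i, S_0^{i-1}, Y^{i-1}) - Y_i$, which holds because once the strategies and the full past state trajectory are fixed, $Y_i$ is produced by the FSC law \eqref{eq:BasicFSCMarkov} with no further dependence on $M$. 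Combined with a telescoping bound $\sum_i I(S_0;Y_i|U^i,Y^{i-1}) \le H(S_0) \le \log|\cS|$ and standard chain-rule manipulations, this yields $I(M;Y^N|S_0) \le I(U^N \to Y^N) + \log|\cS|$. Normalizing by $N$, passing to the limit, and optimizing over the induced causal conditional $P(u^N||y^{N-1})$ gives $R \le C_{\text{fb-csi}}$.

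The main obstacle I expect is the converse reconciliation between the ``physically natural'' Markov chain, which requires conditioning on the full past state trajectory $S_0^{i-1}$, and the capacity expression, in which $S_0$ enters only through $P(s_0)$ and $U^N$ is causally conditioned only on $Y^{N-1}$. Bridging the gap calls for a careful bookkeeping of the information accumulated about intermediate states throughout the sequence, and crucially leverages the finite size of $|\cS|$ together with the strong-connectivity assumption, which ensures that the normalized capacity converges to a limit independent of the initial state.
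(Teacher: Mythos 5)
Your achievability sketch is essentially aligned with the paper's: the paper also treats the strategy letter $U$ as the input of a new FSC $P_{Y,S^+|U,S}$ without SI, but rather than rebuilding a random-coding argument it invokes the known lower bound $\lim_N \frac{1}{N}\max_{P(u^N||y^{N-1},s_0)}\min_{s_0} I(U^N\to Y^N|s_0)$ of \cite{PermuterWeissmanGoldsmith09} and then removes the $\min_{s_0}$ via strong connectivity (Lemma~\ref{lemma:LemmaConnected}) and the $\log|\cS|$ bound on $|I(U^N\to Y^N)-I(U^N\to Y^N|S_0)|$. Your version silently skips the initial-state dependence in the error analysis, but since you flag strong connectivity at the end this is sketchiness rather than error.

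The converse, however, has a genuine gap that you name but do not close, and it is exactly the step the paper's proof is organized around. Defining $U_i(s)\triangleq f_i(M,S_0^{i-2},s,Y^{i-1})$ makes $U_i$ a function of the hidden past states, so the Markov chain $U_i-(U^{i-1},Y^{i-1})-S_{i-1}$ built into \eqref{eq:joint_distDI} fails: under the true code-induced law, $U_i$ and $S_{i-1}$ are correlated given $(U^{i-1},Y^{i-1})$ through $S_1^{i-2}$, so the directed information you compute is not of the form being maximized in \eqref{eq:capDI}, and your telescoping bound $\sum_i I(S_0;Y_i|U^i,Y^{i-1})\le\log|\cS|$ only accounts for $S_0$, not for this running correlation. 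The paper avoids the problem by choosing $U_i\triangleq(M,Y^{i-1})$, which is state-independent by construction so the causal-conditioning structure $P(u_i|u_{i-1},y^{i-1})$ is automatic; it then shows (Lemma~\ref{lem:UpperBound}) that the objective depends only on $\{P(u_i|u_{i-1},y^{i-1})P(x_i|u_i,s_{i-1})\}$, applies the Functional Representation Lemma to replace the stochastic kernel $P(x_i|u_i,s_{i-1})$ by a deterministic $x_i=f_i(u_i,v_i,s_{i-1})$ with $V_i$ independent, and absorbs $V_i$ and the time index into an enlarged auxiliary; the cardinality $|\cU|=|\cX|^{|\cS|}$ is then recovered separately by a merging argument (Appendix~\ref{appendix:cardinality}) using the fact that the directed information depends on $u^i$ only through the induced strategies. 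Your choice of auxiliary buys the cardinality bound for free, but without an argument that restricting to state-independent $U_i$ is without loss of optimality the converse does not go through; to repair it you would need either the paper's functional-representation route or a direct proof that the offending correlation cannot increase $\sum_i I(U_i;Y_i|Y^{i-1})$ beyond the maximum over laws of the form \eqref{eq:joint_distDI}.
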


The feedback capacity can also be expressed as follows.
\begin{theorem}
\label{theorem:capTrunc}
The feedback capacity of a strongly connected FSC with SI known causally at the encoder is given by
\begin{align}
C_{\text{fb-csi}}&= 
\lim_{N \to \infty} \frac{1}{N}  \max_{\substack{\{P(u_i|u_{i-1},y^{i-1})\}_{i=1}^N\\x_i=f(u_i,s_{i-1})}} \sum_{i=1}^{N} I(U_i,U_{i-1};Y_i|Y^{i-1}),  \label{eq:capTrunc}
\end{align}
where $\{U_i\}_{i \ge 1}$ are auxiliary RVs, and the joint distribution is given by 
\begin{align}
    \label{eq:joint_distTrunc}
    P(x^N,s_0^N,y^N,u^N)=P(s_0)\prod_{i=1}^N
    P(u_i|u_{i-1},y^{i-1})\mathbbm{1}\{x_i = f(u_i,s_{i-1})\}P_{Y,S^+|X,S}(y_i,s_i|x_i,s_{i-1}) \text{.}
\end{align}
\end{theorem}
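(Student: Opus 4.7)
I plan to prove Theorem \ref{theorem:capTrunc} by showing that, for every $N$, the maximum on its right-hand side equals the maximum on the right-hand side of Theorem \ref{theorem:capDI}. Since the latter has already been identified with $C_{\text{fb-csi}}$, this identifies the two expressions and yields the claim. The argument is two inequalities between suprema, with no need for a separate achievability/converse.

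\textbf{Step 1 (the $\le$ direction).} I would first observe that every distribution of the form $\prod_{i=1}^{N}P(u_i|u_{i-1},y^{i-1})$ is a special case of the causally conditional distribution $P(u^N\|y^{N-1})$, hence feasible for Theorem \ref{theorem:capDI} under the joint law induced by \eqref{eq:joint_distTrunc}. Since $\{U_{i-1},U_i\}\subseteq U^i$, the inequality $I(A;Y)\le I(A,B;Y)$ yields $I(U_i,U_{i-1};Y_i|Y^{i-1})\le I(U^i;Y_i|Y^{i-1})$. Summing on $i$ and using the DI chain rule $I(U^N\to Y^N)=\sum_i I(U^i;Y_i|Y^{i-1})$ shows that Theorem \ref{theorem:capTrunc}'s maximum is upper-bounded by Theorem \ref{theorem:capDI}'s.

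\textbf{Step 2 (the $\ge$ direction).} For the reverse inequality, I would take any feasible $P(u^N\|y^{N-1})$ for Theorem \ref{theorem:capDI} and embed it into Theorem \ref{theorem:capTrunc}'s setup by enlarging the auxiliary alphabet. Set $\tilde U_i\triangleq U^i$; then the transition law
\begin{equation*}
P(\tilde u_i|\tilde u_{i-1},y^{i-1}) \;=\; P(u_i|u^{i-1},y^{i-1})\,\mathbbm{1}\{\tilde u_i=(\tilde u_{i-1},u_i)\}
\end{equation*}
has the Markov structure required by \eqref{eq:joint_distTrunc}, and the encoding $x_i=f(u_i,s_{i-1})$ is preserved by letting the function read off only the last coordinate of $\tilde u_i$. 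Since $\tilde U_{i-1}$ is a deterministic function of $\tilde U_i$, one obtains $I(\tilde U_i,\tilde U_{i-1};Y_i|Y^{i-1})=I(\tilde U_i;Y_i|Y^{i-1})=I(U^i;Y_i|Y^{i-1})$, and summing gives exactly $I(U^N\to Y^N)$. Thus Theorem \ref{theorem:capTrunc}'s maximum is at least Theorem \ref{theorem:capDI}'s.

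\textbf{Main obstacle.} The embedding above inflates the alphabet of $\tilde U_i$ to size $|\cU|^i$, which grows with $i$. Because Theorem \ref{theorem:capTrunc} places no explicit cardinality bound on $\{U_i\}$, the construction is admissible, but it is the most delicate conceptual point: I would need to verify that the max in \eqref{eq:capTrunc} is genuinely interpreted over auxiliaries of arbitrary (finite-for-each-$N$) alphabet. Restricting instead to the natural Shannon-strategy alphabet $|\cU|=|\cX|^{|\cS|}$ as in Theorem \ref{theorem:capDI} would require a separate approximation argument; that truncation, and its justification, is carried out downstream via the DP reformulation and the $Q$-graph lower bound of Section \ref{sec:dpQgraph}, so for the identity claimed in Theorem \ref{theorem:capTrunc} itself the growing-alphabet embedding suffices.
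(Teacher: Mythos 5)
Your overall route is the one the paper itself takes: Theorem \ref{theorem:capTrunc} is not proved from scratch but is identified with Theorem \ref{theorem:capDI} (this is exactly Lemma \ref{cor:TheoremsEq}), and your Step 2 embedding $\tilde U_i=(\tilde U_{i-1},U_i)$ with $I(\tilde U_i,\tilde U_{i-1};Y_i|Y^{i-1})=I(U^i;Y_i|Y^{i-1})$ is precisely the paper's proof sketch of that lemma, with the cardinality $|\cX|^{|\cS|^i}$ growing in $i$ just as you describe. So the $\ge$ direction is correct and essentially verbatim the paper's argument.

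The one genuine soft spot is Step 1. As literally written, a Theorem-\ref{theorem:capTrunc}-feasible policy with $|\cU_i|>|\cX|^{|\cS|}$ is \emph{not} feasible for Theorem \ref{theorem:capDI}, whose maximization is over auxiliaries of cardinality exactly $|\cX|^{|\cS|}$; your chain only bounds \eqref{eq:capTrunc} by the \emph{unbounded-cardinality} version of \eqref{eq:capDI}. The missing ingredient is the strategy-merging argument of Appendix \ref{appendix:cardinality}: because $x_i=f(u_i,s_{i-1})$ and $Y_i$ depends on $u_i$ only through the induced map $s\mapsto f(u_i,s)$, symbols of $\cU$ inducing the same strategy can be merged without changing either $H(Y_i|Y^{i-1})$ or $H(Y_i|Y^{i-1},U^i)$, which closes the gap between the unbounded and the $|\cX|^{|\cS|}$-bounded maxima. (Alternatively, one can bypass the merging by noting that the converse chain ending at \eqref{UB:last_step} already upper-bounds $C_{\text{fb-csi}}$ by the unbounded-cardinality expression.) Your ``Main obstacle'' paragraph correctly senses that something is needed here, but attributes the fix to the DP and $Q$-graph machinery of Section \ref{sec:dpQgraph}; those are lower-bound computation tools and play no role in this reduction --- the relevant argument is the cardinality appendix, which is proved as part of Theorem \ref{theorem:capDI} and is therefore available to you. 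With that citation corrected, the proof is complete.
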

The capacity expressions in Theorems \ref{theorem:capDI} and \ref{theorem:capTrunc} provide two alternative capacity expressions for FSCs with feedback, and their corresponding joint distributions, \eqref{eq:joint_distDI} and \eqref{eq:joint_distTrunc}, imply that $X_i$ depends on $S_{i-1}$ and not on $S^{i-2}$ via the time-invariant function $f:\cU \times \cS \to \cX$.
To compare the objective functions and maximization domains between \eqref{eq:capDI} and \eqref{eq:capTrunc}, recall that the DI and the CCD
in the former can be written as $I(U^N \to Y^N)\triangleq \sum_{i=1}^N I(U^i;Y_i|Y^{i-1})$ and $P(u^N||y^{N-1})\triangleq \prod_{i=1}^N P(u_i|u^{i-1},y^{i-1})$, respectively.
On the other hand, \eqref{eq:capDI} has a finite cardinality bound for for all $\{U_i\}_{i\ge1}$, while in \eqref{eq:capTrunc} their cardinality may be unbounded. 
We note that in Theorem \ref{theorem:capDI}, there is no maximization over the functions $f$ as the cardinality of $|\cU| = |\cX|^{|\cS|}$ covers all possible mappings from $\cS$ to $\cX$ (called \textit{strategies}).

\begin{lemma}
    \label{cor:TheoremsEq}
Both alternatives, i.e., Theorems \ref{theorem:capDI} and \ref{theorem:capTrunc}, can be shown to be equal.
\end{lemma}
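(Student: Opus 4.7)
The plan is to prove $R_1^{(N)} = R_2^{(N)}$ at every block length $N$, where $R_1^{(N)}$ and $R_2^{(N)}$ denote the maxima in \eqref{eq:capDI} and \eqref{eq:capTrunc}; dividing by $N$ and letting $N\to\infty$ then yields the lemma. Each inclusion is proved by a lifting that turns an optimizer of one formulation into a feasible solution of the other without changing the objective value.

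For the direction $R_2^{(N)}\ge R_1^{(N)}$, take a Theorem~\ref{theorem:capDI}-feasible $P(u^N\|y^{N-1})$ where each $U_i$ ranges over the $|\cX|^{|\cS|}$ strategies, and introduce the telescoping auxiliary $\tilde U_i \triangleq U^i$. The transition $P(\tilde u_i|\tilde u_{i-1},y^{i-1}) = P(u_i|u^{i-1},y^{i-1})$ is well-defined because $\tilde U_{i-1}$ records all of $U^{i-1}$, and the input may be written $X_i=\tilde f(\tilde U_i,S_{i-1})$ by reading off the last coordinate. Since $\tilde U_{i-1}$ is a deterministic function of $\tilde U_i$,
\begin{equation*}
    I(\tilde U_i,\tilde U_{i-1};Y_i|Y^{i-1}) = I(\tilde U_i;Y_i|Y^{i-1}) = I(U^i;Y_i|Y^{i-1}),
\end{equation*}
and summing over $i$ recovers $I(U^N\to Y^N)$. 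The unbounded cardinality of $\tilde U_i$ is permitted in \eqref{eq:capTrunc}.

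For the converse $R_1^{(N)}\ge R_2^{(N)}$, take a Theorem~\ref{theorem:capTrunc}-feasible pair $(f,\{P(u_i|u_{i-1},y^{i-1})\})$ with $U_i$ of arbitrary cardinality, and reduce to strategies by defining $V_i \triangleq f(U_i,\cdot)\in\cX^{\cS}$, so that $X_i=V_i(S_{i-1})$ and the Theorem~\ref{theorem:capDI} cardinality bound is met. The crux is the identity
\begin{equation*}
    I(U^i;Y_i|Y^{i-1}) = I(V^i;Y_i|Y^{i-1}), \qquad i=1,\dots,N,
\end{equation*}
which I derive by marginalizing the posterior of $S_{i-1}$: using the joint \eqref{eq:joint_distTrunc}, the factor $\prod_{j=1}^{i} P(u_j|u_{j-1},y^{j-1})$ cancels between the numerator and denominator of $P(s_{i-1}|u^i,y^{i-1})$, leaving an expression that depends on $u^i$ only through $v^{i-1}$. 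Hence
\begin{equation*}
    P(y_i|u^i,y^{i-1}) = \sum_{s_{i-1}} P(y_i|v_i(s_{i-1}),s_{i-1})\,P(s_{i-1}|v^{i-1},y^{i-1}) = P(y_i|v^i,y^{i-1}),
\end{equation*}
which gives the Markov chain $Y_i-(V^i,Y^{i-1})-U^i$. Combined with the chain-rule estimate $I(U^i;Y_i|Y^{i-1}) \ge I(U_i,U_{i-1};Y_i|Y^{i-1})$ (enlarging the first argument of a mutual information can only increase it) and summing, I obtain
\begin{equation*}
    I(V^N\to Y^N) = I(U^N\to Y^N) \ge \sum_{i=1}^{N} I(U_i,U_{i-1};Y_i|Y^{i-1}),
\end{equation*}
so $V$ is a Theorem~\ref{theorem:capDI}-feasible auxiliary achieving at least the Theorem~\ref{theorem:capTrunc} objective.

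The main obstacle is the Markov chain $Y_i-(V^i,Y^{i-1})-U^i$: although the intuition is transparent because the channel sees the input only through $X_i=V_i(S_{i-1})$, the formal cancellation in the posterior of $S_{i-1}$ is delicate and relies on the exact product form of \eqref{eq:joint_distTrunc}. One also needs to verify that the marginalized joint on $(V^N,X^N,S_0^N,Y^N)$ retains the structure \eqref{eq:joint_distDI}; this reduces to checking that $P(v^N\|y^{N-1}) = \sum_{u^N:\,f(u_i,\cdot)=v_i}\prod_i P(u_i|u_{i-1},y^{i-1})$ is a valid CCD, which is immediate from the chain rule.
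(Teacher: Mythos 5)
Your proof is correct and follows essentially the same route as the paper: your telescoping auxiliary $\tilde U_i\triangleq U^i$ is exactly the paper's identification $V_i=(V_{i-1},U_i)$, which is the entire content of the paper's proof sketch. The reverse inclusion, which you work out carefully via the Markov chain $Y_i-(V^i,Y^{i-1})-U^i$ obtained by cancelling the causal-conditioning factors in the posterior of $S_{i-1}$, is only asserted in the paper's sketch (``$V_i$ recovers all possible strategies until time $i$''), and your argument for it is the same mechanism the paper uses in its cardinality-bound proof in Appendix~\ref{appendix:cardinality}.
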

\begin{proof}[Proof sketch]
For distinction, we rename the auxiliary RVs of Theorem \ref{theorem:capTrunc} with $\{V_i\}_{i \ge 1}$, while $\{U_i\}_{i \ge 1}$ are remained to denote the auxiliary RVs of Theorem \ref{theorem:capDI}. By defining $V_i=(V_{i-1},U_i)$, Theorem \ref{theorem:capDI}
becomes Theorem \ref{theorem:capTrunc} with $|\cV_i|= |\cX|^{{|\cS|}^i}$ as for all $i$, $V_i$ recovers all possible strategies $\{f_u (s):\cS \to \cX \}$ until time $i$, that were encapsulated by $x_j=f(u_j,s_{j-1})$ for $j=1,\dots,i$. 
\end{proof}
Due to Lemma \ref{cor:TheoremsEq}, we prove only Theorem \ref{theorem:capDI} in Section~\ref{sec:proof_cap}. Although both Theorems consist of a multi-letter capacity expressions, we utilize them to derive computable lower and upper bounds as given in the following results. First, we obtain sequences of achievable rates and upper bounds on the feedback capacity that are computable for any positive integer $N$, as given in the next theorem. Let $\underline{C}_N$ and $\overline{C}_N$ denote
\begin{align}
    \underline{C}_N &= \frac{1}{N} \max_{P(u^N||y^{N-1})} \min_{s_0} I(U^N \to Y^N|s_0), \\
    \overline{C}_N &= \frac{1}{N} \max_{P(u^N||y^{N-1})} \max_{s_0} I(U^N \to Y^N|s_0),
\end{align}
where $|\cU|=|\cX|^{|\cS|}$, and the joint distribution is 
\begin{align}
    P(x^N,s^N,y^N,u^N|s_0)&=P(u^N||y^{N-1})\prod_{i=1}^N\mathbbm{1}\{x_i = f(u_i,s_{i-1})\}P_{Y,S^+|X,S}(y_i,s_i|x_i,s_{i-1}).
\end{align}
\begin{theorem}
\label{theorem:anyN_LB_UB}
The feedback capacity of any FSC $P_{Y,S^+|X,S}$ with SI known causally at the encoder is bounded by
\begin{align}
\label{eq:corLBUB}
\underline{C}_N-\frac{\log |\cS|}{N}
\le C_{\text{fb-csi}} \le  \overline{C}_N+\frac{\log |\cS|}{N}, \quad N=1,2,\dots
\end{align}

\end{theorem}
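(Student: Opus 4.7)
The plan is to start from the multi-letter formula in Theorem~\ref{theorem:capDI}, $C_{\text{fb-csi}}=\lim_{M\to\infty}\frac{1}{M}\psi_M$ with $\psi_M:=\max_{P(u^M||y^{M-1})} I(U^M\to Y^M)$, and to sandwich $\psi_{kN}/(kN)$ between $\underline{C}_N-\log|\cS|/N$ and $\overline{C}_N+\log|\cS|/N$ uniformly in $k$. Letting $k\to\infty$ with $N$ fixed will then yield the two-sided bound on $C_{\text{fb-csi}}$.

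Two ingredients are needed. The first is an initial-state comparison: expanding $I(U^M, S_0\to Y^M)$ via the DI chain rule in two orders yields
\[
I(U^M\to Y^M\mid S_0)-I(U^M\to Y^M)=\sum_{i=1}^M I(S_0;Y_i\mid U^i,Y^{i-1})-I(S_0;Y^M),
\]
and since each of the two nonnegative quantities on the right is bounded by $H(S_0)\le\log|\cS|$ (the first sum telescopes via $H(S_0\mid U^i,Y^{i-1})\le H(S_0\mid U^{i-1},Y^{i-1})$ to $I(S_0;U^M,Y^M)\le H(S_0)$), we get $\bigl|I(U^M\to Y^M)-I(U^M\to Y^M\mid S_0)\bigr|\le\log|\cS|$. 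The second ingredient is a block decomposition of $I(U^{kN}\to Y^{kN}\mid s_0)=\sum_{j=1}^k\sum_{i\in\text{block }j}I(U^i;Y_i\mid Y^{i-1},s_0)$ into $k$ sums of length $N$. For the upper-bound side, I would introduce the block-boundary state $S_{(j-1)N}$ into each block's conditioning via the identity $I(A;B\mid C)=I(A;B\mid D,C)+I(D;B\mid C)-I(D;B\mid A,C)$, invoke the FSC's Markov property (given $S_{(j-1)N}$, the $j$-th block's outputs depend on the past only through $S_{(j-1)N}$) to bound the resulting $S$-conditioned sum by $N\overline{C}_N$, and absorb the penalty $I(S_{(j-1)N};Y_{(j-1)N+1}^{jN}\mid Y^{(j-1)N},s_0)\le\log|\cS|$ once per boundary ($j\ge 2$); this produces $I(U^{kN}\to Y^{kN}\mid s_0)\le kN\overline{C}_N+(k-1)\log|\cS|$. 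For the lower-bound side, I would restrict to block-restart distributions, where the max-min optimal policy $P^*$ attaining $\underline{C}_N$ is applied within each block using only within-block history; since $I_{P^*}(U^N\to Y^N\mid S_0=s')\ge N\underline{C}_N$ for every $s'$, the symmetric chain-rule manipulation yields $\max_P I(U^{kN}\to Y^{kN}\mid s_0)\ge kN\underline{C}_N-(k-1)\log|\cS|$.

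Combining the ingredients: the first gives $\bigl|\psi_{kN}-\max_P I(U^{kN}\to Y^{kN}\mid S_0)\bigr|\le\log|\cS|$, and applying $\min_{s_0}\le\mathbb{E}_{S_0}\le\max_{s_0}$ to the $s_0$-conditioned DI together with the block decomposition yields
\[
\underline{C}_N-\tfrac{\log|\cS|}{N}-\tfrac{\log|\cS|}{kN}\le\tfrac{\psi_{kN}}{kN}\le\overline{C}_N+\tfrac{\log|\cS|}{N}+\tfrac{\log|\cS|}{kN}
\]
for every $k\ge 1$; letting $k\to\infty$ with $N$ fixed and invoking Theorem~\ref{theorem:capDI} closes the argument. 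The main obstacle is the block decomposition in the lower-bound direction: one must verify that the block-restart policy is an admissible causal-conditional distribution in the $kN$-step class, and then carefully apply the FSC's Markov property to replace the global past $(Y^{(j-1)N},s_0)$ by the sufficient statistic $S_{(j-1)N}$ at each block boundary, so that the within-block contribution reduces to the single-block quantity $N\underline{C}_N$ that $P^*$ is engineered to achieve uniformly in the block-initial state.
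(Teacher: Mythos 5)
Your overall route coincides with the paper's in its combinatorial core: the paper also reduces both inequalities to the super-additivity of $N\underline{C}_N-\log|\cS|$ and the sub-additivity of $N\overline{C}_N+\log|\cS|$ (which it imports from the proofs of Theorems 4 and 16 of \cite{PermuterWeissmanGoldsmith09} rather than re-deriving), together with the $\log|\cS|$ initial-state comparison that you correctly establish via the two chain-rule expansions of $I(U^M,S_0\to Y^M)$. Your block decomposition along $M=kN$ with one $\log|\cS|$ penalty per block boundary, the block-restart policy for the lower direction, and the $S_{(j-1)N}$-insertion for the upper direction are exactly the content of those super-/sub-additivity arguments; you have essentially unrolled Fekete's lemma into an explicit subsequence limit. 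That part of the proposal is sound.

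The one genuine gap is your anchor. You invoke Theorem \ref{theorem:capDI}, $C_{\text{fb-csi}}=\lim_M\psi_M/M$, to convert the sandwich on $\psi_{kN}/(kN)$ into a two-sided bound on the capacity. But Theorem \ref{theorem:capDI} is stated and proved only for \emph{strongly connected} FSCs --- connectivity enters through Lemma \ref{lemma:LemmaConnected}, which is what reconciles the $\min_{s_0}$ in the achievable expression with the initial-state-free converse expression into a single equality. Theorem \ref{theorem:anyN_LB_UB} is explicitly claimed for \emph{any} FSC, and the paper's proof is arranged to avoid connectivity: the left inequality uses only the achievability lower bound $C_{\text{fb-csi}}\ge\lim_N\frac1N\max_P\min_{s_0}I(U^N\to Y^N|s_0)$, valid for arbitrary FSCs via the induced channel $P_{Y,S^+|U,S}$ and \eqref{eq:LB_anyFSC}, while the right inequality uses only the Fano-based converse \eqref{eq:UBepsilonN}, also valid for arbitrary FSCs. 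As written, your argument therefore establishes the theorem only for strongly connected channels. The repair is routine --- replace ``$C=\lim_M\psi_M/M$'' by these two one-sided statements and run your $k\to\infty$ sandwich separately on each side, using a $\liminf_n$ on the converse side since the Fano bound holds for every $n$ --- but it should be made explicit.
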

The proof of Theorem \ref{theorem:anyN_LB_UB} is given in Section~\ref{sec:proof_cap}. Notice that it holds for any FSC, not necessarily connected.

From Theorem \ref{theorem:capTrunc}, we also derive computable lower bounds using the DP and the $Q$-graph methods that were introduced in
Sections \ref{subsection:prelDP} and \ref{subsection:prelQgraph}, respectively, as shown in the following theorems.
\begin{theorem}[DP Formulation]\label{theorem:MDP}
For a fixed, finite cardinality $\left|\mathcal{U}\right|$ and a function $f: \mathcal U \times \mathcal S \to \mathcal X$, the capacity expression in \eqref{eq:capTrunc} can be formulated as an average-reward DP (Table \ref{table:DP}).
\begin{table}[t]
\caption{The DP formulation} \centering
\label{table:DP}
\begin{tabular}[h]{|l|l|}
\hline 
DP Notations & FSC with Feedback and causal state at the encoder\\
\hline \hline
DP state, $z_{i-1}$& $P(u_{i-1},s_{i-1}|y^{i-1})$ \\
\hline
Disturbance, $w_i$ & $y_i$ - the channel output \\
\hline
Action, $a_i$ & $P(u_i|u_{i-1},y^{i-1})$ \\
\hline
DP state evolution, $z_i=F(z_{i-1},a_{i},w_{i})$  & Eq. \eqref{eq:evolution1}\\
\hline
Reward, $g(z_{i-1},a_i)$ & $I(U_i,U_{i-1};Y_i|y^{i-1})$\\
\hline 
\end{tabular}
\end{table}
\end{theorem}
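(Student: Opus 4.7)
The plan is to identify each ingredient of the DP septuple with a component of the joint distribution \eqref{eq:joint_distTrunc}, and then verify that (i) the state evolution, the reward, and the disturbance law depend only on the current DP state and action, and (ii) the sum $\sum_{i=1}^N I(U_i,U_{i-1};Y_i|Y^{i-1})$ equals the total expected DP reward under the natural bijection between the admissible choices $\{P(u_i|u_{i-1},y^{i-1})\}$ and the DP policies.

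First, take the DP state to be the posterior $z_{i-1}(u,s)=P(U_{i-1}=u,S_{i-1}=s\mid y^{i-1})$, living in the simplex over $\mathcal U\times\mathcal S$; the action to be the stochastic kernel $a_i(u\mid u')=P(U_i=u\mid U_{i-1}=u',y^{i-1})$; the disturbance $w_i=y_i$; and the initial state $z_0(u,s)$ the product of the prior on $S_0$ with any fixed distribution on the dummy $U_0$. Using \eqref{eq:joint_distTrunc}, Bayes' rule, and the channel Markov property, the posterior update reads
\begin{align*}
z_i(u,s) = \frac{\sum_{u',s'} z_{i-1}(u',s')\,a_i(u\mid u')\,P_{Y,S^+|X,S}(y_i,s\mid f(u,s'),s')}{\sum_{u,s,u',s'} z_{i-1}(u',s')\,a_i(u\mid u')\,P_{Y,S^+|X,S}(y_i,s\mid f(u,s'),s')},
\end{align*}
so $z_i=F(z_{i-1},a_i,w_i)$ is a deterministic function of its three arguments; the denominator coincides with $P(y_i\mid z_{i-1},a_i)$, so the disturbance law depends only on $(z_{i-1},a_i)$ as required.

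Next, marginalizing the numerator above over $(s,s')$ expresses $P(u_{i-1},u_i,y_i\mid y^{i-1})$ as a function of $(z_{i-1},a_i)$ alone; hence
\begin{equation*}
g(z_{i-1},a_i)\triangleq I(U_i,U_{i-1};Y_i\mid Y^{i-1}=y^{i-1})
\end{equation*}
is a bounded, well-defined per-stage reward. By the tower property, $\sum_{i=1}^N \mathbb{E}_\pi[g(Z_{i-1},\mu_i(H_i))] = \sum_{i=1}^N I(U_i,U_{i-1};Y_i\mid Y^{i-1})$, and every admissible $\{P(u_i\mid u_{i-1},y^{i-1})\}$ corresponds to a DP policy and vice versa. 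Dividing by $N$ and taking the limit identifies the objective in \eqref{eq:capTrunc} with the optimal average reward $\rho^*$ of \eqref{average_reward}.

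The main obstacle is establishing that $z_{i-1}$ is a sufficient statistic for both the update and the per-stage reward: this relies on chaining the conditional independencies built into \eqref{eq:joint_distTrunc} with the determinism of $x_i=f(u_i,s_{i-1})$, which together factor the joint $P(U_{i-1},S_{i-1},U_i,S_i,Y_i\mid y^{i-1})$ as $z_{i-1}(u',s')\,a_i(u\mid u')\,P_{Y,S^+|X,S}(y_i,s\mid f(u,s'),s')$ and permit the required marginalizations. A secondary technicality is the initial dummy $U_0$, absorbed by an arbitrary choice of prior whose effect on $\tfrac{1}{N}\sum g$ vanishes as $N\to\infty$ by boundedness of mutual information.
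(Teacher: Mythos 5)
Your verification of the three DP properties (state evolution via the BCJR-type recursion, the disturbance law $P(y_i\mid z_{i-1},a_i)$, and the reward as a function of $(z_{i-1},a_i)$ through $P(u_i,u_{i-1},y_i\mid y^{i-1})$) is correct and coincides with the paper's Lemma \ref{lem:DP_formulation} and its proof in Appendix \ref{appendix:lem_DP_proof}. The factorization of $P(U_{i-1},S_{i-1},U_i,S_i,Y_i\mid y^{i-1})$ that you isolate as ``the main obstacle'' is indeed the crux of that lemma, and your treatment of it is sound.

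However, there is a genuine gap in your final step, where you write that ``dividing by $N$ and taking the limit identifies the objective in \eqref{eq:capTrunc} with the optimal average reward $\rho^*$.'' The expression in \eqref{eq:capTrunc} is $\lim_{N\to\infty}\frac{1}{N}\max_{\{P(u_i|u_{i-1},y^{i-1})\}_{i=1}^N}\sum_{i=1}^N I(U_i,U_{i-1};Y_i|Y^{i-1})$, a limit of \emph{finite-horizon} optima in which the maximizer may change with $N$, whereas $\rho^*=\sup_\pi\liminf_{N\to\infty}\frac{1}{N}\mathbb{E}_\pi[\sum_i g]$ requires a \emph{single} infinite-horizon policy whose liminf time-average attains the value. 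The easy direction ($\rho^*\le$ the limit of maxima) follows from your bijection between input distributions and policies, but the reverse direction does not follow from ``taking the limit'': one must show that near-optimal horizon-$N$ policies can be concatenated into one infinite-horizon policy without losing rate in the liminf. The paper supplies exactly this missing piece as Lemma \ref{lem:LB_supliminf}, whose proof rests on the super-additivity of the sequence $N\bigl(\underline{C}_N-\frac{\log|\cS|}{N}\bigr)$ and on the strong-connectivity argument of Lemma \ref{lemma:LemmaConnected} (the encoder can steer the channel back to a desired state in bounded expected time, so restarting a finite-horizon policy costs only $O(1)$ per block). Without an argument of this type, your proof establishes that the DP is well defined and that $\rho^*$ lower-bounds the capacity expression, but not the claimed equality of the two.
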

\begin{figure*}[!b]
\begin{center}
\hrulefill
\begin{equation}
P(u_i,s_i|y^i)
 = \frac{\sum_{u_{i-1},s_{i-1}}\beta_{i-1}(u_{i-1},s_{i-1})a_i(u_i,u_{i-1},y^{i-1})P_{Y,S^+|X,S}(y_i,s_i|f(u_i,s_{i-1}),s_{i-1})}{\sum_{u_{i-1},u'_i,s_{i-1}}\beta_{i-1}(u_{i-1},s_{i-1})a_i(u_i',u_{i-1},y^{i-1})P_{Y|X,S}(y_i|f(u'_i,s_{i-1}),s_{i-1})} \label{eq:evolution1}
\end{equation}
\end{center}
\end{figure*}
The proof of Theorem \ref{theorem:MDP} is given in Section~\ref{sec:dpQgraph}. 
Theorem \ref{theorem:MDP} serves as a tool to compute achievable rates for the feedback capacity either numerically or analytically. For example, 
we can use the value iteration algorithm (VIA), or solve the corresponding Bellman equation. In cases that there is a cardinality bound $|\cU_i|=|\cU|<\infty$, the DP characterizes the feedback capacity itself, and such examples will be given in Section~\ref{sec:examples}.

The next theorem provides another tool for computing lower bounds based on the $Q$-graph method. It provides a single-letter lower bound on the feedback capacity for any choice of a $Q$-graph. Given a $Q$-graph, cardinality $|\cU|$, a function $f: \mathcal U \times \mathcal S \to \mathcal X$ and conditional distribution $P(u^+|u,q)$, there is a transition matrix $P(s^+,u^+,q^+|s,u,q)$ given by
\begin{align}
\label{eq:suq_transition}
P(s^+,u^+,q^+|s,u,q) &= \sum_{x,y}  P(s^+,u^+,q^+,x,y|s,u,q) \nonumber \\
&=\sum_{x,y} P(u^+|u,q)\mathbbm{1}\{x=f(u^+,s)\} \mathbbm{1}\{ q^+=g(q,y) \} P_{Y,S^+|X,S}(y,s^+|x,s).
\end{align}
\begin{theorem}[$Q$-graph Lower Bound]
 \label{theorem:qgraph_LB}
For any Q-graph, given a fixed, finite cardinality $\left|\mathcal{U}\right|$ ($U^+,U \in \cU$) and a function $f: \mathcal U \times \mathcal S \to \mathcal X$, the feedback capacity is lower bounded by
\begin{align}\label{eq:Theorem_Lower}
C_{\text{fb-csi}}\geq I(U^+,U;Y|Q),
\end{align}
for all $P(u^+|u,q)\in\mathcal{P}_\pi$ that are BCJR-invariant, where the joint distribution is 
\begin{align}
\label{eq:Qgraph_joint_dist}
  P(s,u,q,x,y,s^+,u^+,q^+)= \pi(s,u,q) P(u^+|u,q) \mathbbm{1}\{x = f(u^+,s)\} P(y,s^+|x,s) \mathbbm{1}\{q^+ = g(q,y)\}.
\end{align}
\end{theorem}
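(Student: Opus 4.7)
The plan is to obtain the lower bound by specializing the multi-letter expression of Theorem \ref{theorem:capTrunc} to a class of input distributions tied to the given $Q$-graph. Specifically, I would restrict $P(u_i \mid u_{i-1}, y^{i-1})$ to depend on $y^{i-1}$ only through $q_{i-1} = \Phi_{i-1}(y^{i-1})$, so that $P(u_i \mid u_{i-1}, y^{i-1}) = P(u_i \mid u_{i-1}, q_{i-1})$, and take the kernel $P(u^+\mid u, q)$ to be time invariant. Because this restriction shrinks the maximization domain in \eqref{eq:capTrunc}, any resulting value of $\frac{1}{N}\sum_i I(U_i, U_{i-1}; Y_i \mid Y^{i-1})$ is an achievable rate, hence a lower bound on $C_{\text{fb-csi}}$.

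Under this restriction, the triple $(S_i, U_i, Q_i)$ becomes a time-homogeneous Markov chain whose transition kernel is exactly \eqref{eq:suq_transition}. The assumption $P(u^+\mid u, q) \in \mathcal{P}_\pi$ guarantees that $\pi(s,u,q)$ is a stationary distribution of this chain, and initializing the chain at $\pi$ makes the joint law coincide with \eqref{eq:Qgraph_joint_dist} at every time step. The BCJR-invariance property then plays the crucial role: it ensures that the decoder's posterior factorizes through the graph state, i.e., $P(s_{i-1}, u_{i-1} \mid y^{i-1}) = \pi(s_{i-1}, u_{i-1} \mid q_{i-1})$ for every realization of $y^{i-1}$ under the induced law.

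With this factorization in hand, a direct computation yields $P(u_i, y_i \mid u_{i-1}, y^{i-1}) = P(U^+ = u_i, Y = y_i \mid U = u_{i-1}, Q = q_{i-1})$, from which it follows that $I(U_i, U_{i-1}; Y_i \mid Y^{i-1}) = I(U_i, U_{i-1}; Y_i \mid Q_{i-1})$, and by stationarity this equals the single-letter quantity $I(U^+, U; Y \mid Q)$ evaluated under the joint law \eqref{eq:Qgraph_joint_dist}. Averaging over $i$ and taking $N \to \infty$ then gives \eqref{eq:Theorem_Lower}.

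The main obstacle will be making the BCJR-invariance step fully rigorous. One must verify by induction that a single BCJR update of a posterior of the form $\pi(s,u\mid q)$—combined with the channel law and the $Q$-graph transition $q^+ = g(q, y)$—produces a posterior of the same form on $q^+$, and that the implicit fixed-point condition defining $\mathcal{P}_\pi$ is well-posed so that such a $\pi$ exists and is unique. A secondary but nontrivial issue is that the encoder does not actually start from $\pi$ but from the true initial law on $S_0$; bridging this gap requires an ergodicity/mixing argument on the induced chain, together with a $o(1)$ correction term that vanishes when divided by $N$, in the spirit of the analogous passage to single-letter bounds used for unifilar FSCs in \cite{Sabag_UB_IT,Graph-Based}.
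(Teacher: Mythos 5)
Your proposal matches the paper's proof essentially step for step: restrict to time-invariant $Q$-graph policies $P(u^+|u,q)$ inside the sup--liminf capacity expression (the paper's Lemma \ref{lem:LB_supliminf}), use BCJR-invariance to show by induction that $P(u_{i-1},s_{i-1}\mid y^{i-1})=\pi(u_{i-1},s_{i-1}\mid q_{i-1})$, conclude the Markov chain $Y_i-Q_{i-1}-Y^{i-1}$ so each summand collapses to $I(U_i,U_{i-1};Y_i\mid Q_{i-1})$, and invoke aperiodicity/convergence of the induced chain to pass to the stationary single-letter quantity $I(U^+,U;Y\mid Q)$. The only cosmetic difference is that the paper reaches the $Q_{i-1}$-conditioned expression via the inequality $H(Y_i\mid U_i,U_{i-1},Y^{i-1})\le H(Y_i\mid U_i,U_{i-1},Q_{i-1})$ together with the vanishing of $I(Y_i;Y^{i-1}\mid Q_{i-1})$, rather than your direct equality of conditional laws.
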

The notation $\mathcal{P}_{\pi}$ denotes the set of distributions $P(u^+|u,q)$ that induce a transition matrix $P(s^+,u^+,q^+|s,u,q)$ with a unique stationary distribution on the $(S,U,Q)$-graph, denoted by $\pi(s,u,q)$. 
An input distribution is \textit{BCJR-invariant} if the Markov chain $(S^+,U^+)-Q^+-(Q,Y)$ holds. The proof of Theorem \ref{theorem:qgraph_LB} is given in Section~\ref{sec:dpQgraph}. 

The $Q$-graph and the DP methods are detailed in Section~\ref{sec:dpQgraph}, and in Section~\ref{sec:examples} we demonstrate their usefulness on several examples. 
\section{Capacity Computation}
In this section, we elaborate on the DP and $Q$-graph methods of computing the feedback capacity, and prove Theorems \ref{theorem:MDP} and \ref{theorem:qgraph_LB}.

\label{sec:dpQgraph}
\subsection{Dynamic Programming Formulation and Proof of Theorem \ref{theorem:MDP}}
In words, the DP summarized in Table \ref{table:DP} is as follows. The DP state, $z_{i-1}$, is chosen as the conditional joint distribution matrix whose elements are  $\beta_{i-1}(u_{i-1},s_{i-1})= P(u_{i-1},s_{i-1}|y^{i-1})$, for $u_{i-1},s_{i-1} \in \cU \times \cS$. The action space, $\mathcal{A}$, is the set of stochastic matrices $P(u_i|u_{i-1},y^{i-1})$. The disturbance is taken to be the channel output, i.e., $w_i=y_i$. Finally, the reward function at time $i$ is $I(U_i,U_{i-1};Y_i|y^{i-1})$. Now in order to prove that this is a valid DP formulation, we need to prove the following lemma.
\begin{lemma}[DP formulation]
\label{lem:DP_formulation}
The formulation presented in Table \ref{table:DP} satisfies the DP model. That is,
\begin{enumerate}
    \item The DP state is a time-invariant function of the previous DP state, action and disturbance. 
    \item The disturbance is conditionally dependent on the DP state and action.
    \item The reward is a time-invariant function of the state and action.
\end{enumerate}
\end{lemma}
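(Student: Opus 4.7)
The plan is to verify the three DP requirements by showing that each quantity in Table~\ref{table:DP} can be expressed as a time-invariant functional of the pair $\big(\beta_{i-1}(u_{i-1},s_{i-1}),a_i(u_i,u_{i-1},y^{i-1})\big)$ together with the disturbance $y_i$, using only the fixed channel kernel $P_{Y,S^+|X,S}$ and the deterministic rule $x_i=f(u_i,s_{i-1})$. The technical workhorse in all three items is a single factorization of the conditional joint $P(u_{i-1},s_{i-1},u_i,s_i,y_i\mid y^{i-1})$ derived from the joint law in \eqref{eq:joint_distTrunc}.

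For item~1, I would start from Bayes' rule and the chain rule to write
\[
P(u_{i-1},s_{i-1},u_i,s_i,y_i\mid y^{i-1}) = P(u_{i-1},s_{i-1}\mid y^{i-1})\,P(u_i\mid u_{i-1},s_{i-1},y^{i-1})\,P(y_i,s_i\mid u_i,u_{i-1},s_{i-1},y^{i-1}).
\]
Two Markov reductions make this match the summand of \eqref{eq:evolution1}. First, the joint law \eqref{eq:joint_distTrunc} specifies $P(u_i\mid u_{i-1},y^{i-1})$ without conditioning on $s_{i-1}$, so $P(u_i\mid u_{i-1},s_{i-1},y^{i-1})=a_i(u_i,u_{i-1},y^{i-1})$. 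Second, the FSC Markov property \eqref{eq:BasicFSCMarkov} together with $x_i=f(u_i,s_{i-1})$ gives $P(y_i,s_i\mid u_i,u_{i-1},s_{i-1},y^{i-1})=P_{Y,S^+|X,S}(y_i,s_i\mid f(u_i,s_{i-1}),s_{i-1})$. Substituting these into $P(u_i,s_i\mid y^i)=P(u_i,s_i,y_i\mid y^{i-1})/P(y_i\mid y^{i-1})$ and summing out $u_{i-1},s_{i-1}$ yields exactly \eqref{eq:evolution1}; the resulting map $F$ depends on $i$ only through its arguments, establishing time-invariance of the state evolution.

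For items~2 and~3, the same factorization gives the marginal
\[
P(y_i\mid y^{i-1})=\sum_{u_{i-1},u_i,s_{i-1}}\beta_{i-1}(u_{i-1},s_{i-1})\,a_i(u_i,u_{i-1},y^{i-1})\,P_{Y|X,S}(y_i\mid f(u_i,s_{i-1}),s_{i-1}),
\]
which is a time-invariant function of $(\beta_{i-1},a_i)$, so the disturbance distribution $P_w(\cdot\mid z_{i-1},a_i)$ is well-defined and does not depend on $i$. For the reward, I would write $I(U_i,U_{i-1};Y_i\mid y^{i-1})=H(Y_i\mid y^{i-1})-H(Y_i\mid U_i,U_{i-1},y^{i-1})$, and observe that both entropies depend only on the joint $P(u_{i-1},u_i,y_i\mid y^{i-1})$ obtained by marginalizing the factored expression above over $s_{i-1},s_i$. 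Hence $g(z_{i-1},a_i)$ is a deterministic, time-invariant function of $(\beta_{i-1},a_i)$, completing item~3.

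The main obstacle, and essentially the only non-bookkeeping step, is the careful justification of the two Markov reductions at the heart of item~1, namely that conditioning on $s_{i-1}$ does not alter $a_i$ and that the channel is memoryless given $(x_i,s_{i-1})$ even after conditioning on the richer past $(u_i,u_{i-1},y^{i-1})$. Both are read off the joint law \eqref{eq:joint_distTrunc}, but one must be careful to also marginalize out the hidden variables $M$ and $S_0^{i-2}$ without introducing spurious dependencies; this is handled by the fact that, given $y^{i-1}$, the action $a_i$ is deterministic in the realized feedback and the input rule $x_i=f(u_i,s_{i-1})$ depends on the past only through $s_{i-1}$ and $u_i$.
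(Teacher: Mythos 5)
Your proposal is correct and follows essentially the same route as the paper's proof in Appendix~\ref{appendix:lem_DP_proof}: a single factorization of $P(u_{i-1},s_{i-1},u_i,s_i,y_i\mid y^{i-1})$ using the Markov chain $U_i-(U_{i-1},Y^{i-1})-S_{i-1}$ and the channel kernel with $x_i=f(u_i,s_{i-1})$, from which the BCJR state update, the disturbance kernel, and the reward are all read off by marginalization.
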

Consequently, the optimal average reward is
\begin{align}
&\rho^*= \sup_{\pi} \liminf_{N\to\infty} \frac{1}{N} \sum_{i=1}^{N} I_\pi (U_i,U_{i-1};Y_i|Y^{i-1}), \nonumber
\end{align}
where the subscript $\pi$ indicates that the mutual information corresponds to the policy denoted by $\pi$. The proof of Theorem \ref{theorem:MDP} is a direct consequence of Lemma \ref{lem:DP_formulation} and the following Lemma \ref{lem:LB_supliminf}, whose proofs are given in Appendices \ref{appendix:lem_DP_proof} and \ref{appendix:LB_supliminf}, respectively.
\begin{lemma}
\label{lem:LB_supliminf}
The feedback capacity of a connected FSC with SI known causally at the encoder can be expressed by
\begin{equation}
\label{eq:supliminfcapCSI}
    C_{\text{fb-csi}} = \sup \liminf_{N\to\infty}   \frac{1}{N}  \sum_{i=1}^{N}  I(U_i,U_{i-1};Y_i|Y^{i-1}) \text{,}
\end{equation}
where the supremum is taken with respect to ${\{P(u_i|u_{i-1},y^{i-1})\}}_{i\geq1}$, and the joint distribution is given by \eqref{eq:joint_distTrunc}.
\end{lemma}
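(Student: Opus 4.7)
The plan is to derive Lemma~\ref{lem:LB_supliminf} from Theorem~\ref{theorem:capTrunc}, which states $C_{\text{fb-csi}}=\lim_{N\to\infty}C_N^*/N$, where $C_N^*:=\max_{\{P(u_i|u_{i-1},y^{i-1})\}_{i=1}^N}\sum_{i=1}^N I(U_i,U_{i-1};Y_i|Y^{i-1})$. The lemma rewrites this as a supremum over infinite-horizon policies of the liminf of normalized sums, so the task is to verify this equivalence in both directions.

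The easy direction $\sup_\pi\liminf_N\le C_{\text{fb-csi}}$ follows by noting that for any infinite-horizon policy $\pi=\{P(u_i|u_{i-1},y^{i-1})\}_{i\ge 1}$, the first $N$ conditionals form a feasible $N$-letter policy, so $\tfrac{1}{N}\sum_{i=1}^N I_\pi(U_i,U_{i-1};Y_i|Y^{i-1})\le C_N^*/N$. Taking $\liminf_N$ and then supremum over $\pi$ yields the bound since $\lim_N C_N^*/N$ exists and equals $C_{\text{fb-csi}}$.

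The reverse inequality $\sup_\pi\liminf_N\ge C_{\text{fb-csi}}$ is the crux, and requires constructing an explicit infinite policy whose liminf attains the target. Fixing $\epsilon>0$, I would choose $N$ with $C_N^*/N>C_{\text{fb-csi}}-\epsilon$ and let $\pi_0$ be a corresponding optimal $N$-letter policy tied to the initial state distribution $P(s_0)$ of \eqref{eq:joint_distTrunc}. Then I would construct a period-$(N+T)$ infinite policy: within each period, the first $N$ slots apply $\pi_0$ based only on the in-period history, while the last $T$ slots apply a state-steering input that, by strong connectivity of the FSC, drives the channel state distribution within total variation $\delta(T)\to 0$ of $P(s_0)$. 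By continuity of mutual information in the joint distribution, each period then contributes at least $C_N^*-N\eta(\delta(T))$ to the cumulative reward, with $\eta(0)=0$. Periodicity forces $\liminf_N$ of the normalized cumulative reward to equal its limit, $(C_N^*-N\eta(\delta(T)))/(N+T)$; choosing $T$ large and then $N$ much larger makes this exceed $C_{\text{fb-csi}}-2\epsilon$, and sending $\epsilon\to 0$ closes the argument.

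The main obstacle is the state-steering phase: I must invoke strong connectivity to exhibit, for each possible entry state, a finite-length input distribution whose induced terminal state law approximates the prescribed $P(s_0)$ in total variation, uniformly across entry states, and then propagate this mismatch through the $N$ successive mutual-information terms via a continuity estimate. Both ingredients are standard, but the bookkeeping---in particular securing uniformity of $\delta(T)$ over all possible block-entry states, and ensuring the $\liminf$ (not just the $\limsup$) of the average reward hits the target---is the chief technical burden.
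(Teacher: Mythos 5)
Your overall two-direction strategy is sound, and the easy direction ($\sup_\pi \liminf_N \le C_{\text{fb-csi}}$ by truncating an infinite-horizon policy to a feasible $N$-letter one) is correct. Note, though, that the paper takes a different route for the hard direction: it does not build an explicit infinite policy here, but instead reduces the statement to a super-additivity result (Lemma~\ref{lem:limInfHaim}, the analogue of \cite[Lemma~4]{Permuter06_trapdoor_submit}), combines it with the equalities already established in the achievability/converse chain of Theorem~\ref{theorem:capDI}, and finishes with the relabeling $V_i=(V_{i-1},U_i)$ to pass from the directed-information form to the pairwise $(U_i,U_{i-1})$ form. The block-concatenation work is thereby outsourced to the cited reference rather than redone.

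The genuine gap is in your state-steering phase. Strong connectivity as defined in the paper only guarantees that every state is reachable from every other state with \emph{positive probability} in finite time; it does not let you drive the controlled state \emph{distribution} to an arbitrary target law $P(s_0)$ in total variation as $T\to\infty$ (a fixed open-loop steering rule would at best converge to its own stationary law, which need not be $P(s_0)$). The mechanism that actually works --- and is what the paper's Lemma~\ref{lemma:LemmaConnected}, following \cite[Lemma~2]{Permuter06_trapdoor_submit}, uses --- exploits the encoder's causal knowledge of the state: apply a steering input until the state \emph{first hits} a designated state exactly (a random but finite-expectation time $L$), then restart the near-optimal policy; the per-block penalty is then a bounded additive constant (roughly $\mathbb{E}[L]\log|\cY|$), not a continuity modulus $\eta(\delta(T))$. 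A second, related problem is that your continuity estimate is applied to the \emph{marginal} block-entry state distribution, whereas the rewards $I(U_i,U_{i-1};Y_i|Y^{i-1})$ condition on the entire past output sequence; what must be controlled is the \emph{conditional} block-entry state law given each realization of $Y^{i-1}$, and closeness of the marginal in total variation gives no such realization-wise control. Both issues are repaired by the exact-hitting argument, after which the remaining $s_0$-bookkeeping (the $\min_{s_0}$ versus averaged versus unconditioned directed informations, each within $\log|\cS|/N$ of one another) must still be carried out as in the paper.
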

We note that if the feedback capacity can be achieved with $|\cU_i|=|\cU|<\infty$,
the supremum in \eqref{eq:supliminfcapCSI} is also taken with respect to a time-invariant function $f:\cU \times \cS \to \cX$.
\subsection{Q-Graph Technique and Proof of Theorem \ref{theorem:qgraph_LB}}
\label{sec:qgraph}
An altrenative method to compute lower bounds on the feedback capacity, besides the DP, is the $Q$-graph technique in Theorem \ref{theorem:qgraph_LB}. 
Here, we explain this theorem, and prove it in the next part of this section. The main idea of the proof is to embed an auxiliary graph into the capacity expression in \eqref{eq:capTrunc}. Throughout this section, we assume a fixed cardinality $|\cU|$ and a fixed function $f: \mathcal U \times \mathcal S \to \mathcal X$.

The FSC $P_{Y,S^+|X,S}$ is embedded into a given $Q$-graph by constructing a new directed, connected graph termed an \textit{$(S,U,Q)$-graph} to include the information on the $Q$-graph and on the evolution of the state and the auxiliary RV pair, $(S,U)$. The $(S,U,Q)$-graph is constructed as follows:
\begin{enumerate}
    \item Each node in the $Q$-graph is split 
    to $|\cS|\times |\cU|$ new nodes represented by $(s,u,q)\in\cS\times\cU\times\cQ$.
    \item An edge $(s,u,q)\rightarrow(s^+,u^+,q^+)$, with a label $(x,y)$, exists if and only if there exists a pair $(x,y)$ such that $x=f(u^+,s)$, $P_{Y,S^+|X,S}(y,s^+|x,s)>0$ and $q^+=g(q,y)$. 
\end{enumerate}
We denote by $\mathcal{P}_{\pi}$ the set of input distributions $P(u^+|u,q)$ that induce a unique stationary distribution on $(S,U,Q)$, i.e., their corresponding $(S,U,Q)$-graph is irreducible and aperiodic. An input distribution $P(u^+|u,q)$ is said to be \textit{aperiodic} if its $(S,U,Q)$-graph is aperiodic. An aperiodic input distribution is \textit{BCJR-invariant} if it induces the Markov chain
\begin{equation}
(S^+,U^+)-Q^+-(Q,Y).    
\end{equation}

For the proof of Theorem \ref{theorem:qgraph_LB}, we use similar ideas as in the proof of \cite[Theorem 3]{Sabag_UB_IT} based on our Lemma~\ref{lem:LB_supliminf}. We show that a BCJR-invariant input distribution induces for all $i$ the Markov chain $Y_i-Q_{i-1}-Y^{i-1}$, which leads to the fact that $I(U^+,U;Y|Q)$ with the chosen input distribution is a lower bound on the feedback capacity.

For an integer $i$ we define $q_{i}\triangleq \Phi_{i}(y^{i})$ and prove by induction that $P(u_{i},s_{i}|y^{i},q_{i})=\pi(u_{i},s_{i}|q_{i})$ 
for the choice of a BCJR-invariant input distribution. At time $i-1$, assume that $P(u_{i-1},s_{i-1}|y^{i-1},q_{i-1})=\pi(u_{i-1},s_{i-1}|q_{i-1})$. Then, at time $i$ we have 
\begin{align}
P(u_{i},s_{i}|y^{i},q_{i})&\stackrel{(a)}=P(u_{i},s_{i}|q_{i-1},y_i) \nn\\
&\stackrel{(b)}=P(u_{i},s_{i}|q_{i}).
\end{align}
Eq. (a) follows from the fact that the BCJR recursive equation \eqref{eq:evolution1}, for any $u,s \in \cU\times \cS$,
can be computed from $\{P(u_{i-1},s_{i-1}|y^{i-1})\},\{P(u_i|u_{i-1},y^{i-1})\}$ and $y_i$, while for the first distribution we use the induction hypothesis, and for the second distribution we use the assumption that the inputs are of the form $\{P(u_i|u_{i-1},q_{i-1}(y^{i-1}))\}$. Eq. (b) follows from $q_i=g(q_{i-1},y_i)$ and from the BCJR-invariant property.
As a result, the Markov chain $Y_i-Q_{i-1}-Y^{i-1}$ holds for all $i$:
\begin{align}
    P(y_i|y^{i-1},q_{i-1})&=\sum_{s_{i-1},u_{i-1},u_i}P(s_{i-1},u_{i-1},u_i,y_i|y^{i-1},q_{i-1}) \nonumber\\
    &=\sum_{s_{i-1},u_{i-1},u_i} P(u_{i-1},s_{i-1}|y^{i-1},q_{i-1}) P(u_i|u_{i-1},s_{i-1},y^{i-1},q_{i-1}) P(y_i|f(u_i,s_{i-1}),s_{i-1}) \nonumber\\
    &\stackrel{(a)}=\sum_{s_{i-1},u_{i-1},u_i}P(u_{i-1},s_{i-1}|y^{i-1},q_{i-1})P(u_i|u_{i-1},q_{i-1})P(y_i|f(u_i,s_{i-1}),s_{i-1}) \nonumber\\
    &\stackrel{(b)}=\sum_{s_{i-1},u_{i-1},u_i}\pi(u_{i-1},s_{i-1}|q_{i-1})P(u_i|u_{i-1},s_{i-1},q_{i-1})P(y_i|f(u_i,s_{i-1}),s_{i-1}) \nonumber\\
    &=P(y_i|q_{i-1}) \label{eq:markovYQ},
\end{align}
where
\begin{enumerate}[label={(\alph*)}]
\item follows from the assumption on the form of the input distribution;
\item follows from the inductive argument shown above.
\end{enumerate}
We now turn to prove Theorem \ref{theorem:qgraph_LB}.
\begin{proof}[Proof of Theorem \ref{theorem:qgraph_LB}]
The proof of the theorem is completed by the following chain of inequalities:
\begin{align}
    C_{\text{fb-csi}}&\stackrel{(a)}= \sup_{\substack{\{P(u_i|u_{i-1},y^{i-1})\}_{i\geq1},\\x_i=f(u_i,s_{i-1})}} \liminf_{N\to\infty}  \frac{1}{N} \sum_{i=1}^N I(U_i,U_{i-1};Y_i|Y^{i-1}) \nonumber\\
    &\stackrel{(b)}\geq \sup_{\substack{\{P(u_i|u_{i-1},y^{i-1})\}_{i\geq1},\\x_i=f(u_i,s_{i-1})}} \liminf_{N\to\infty}  \frac{1}{N} \sum_{i=1}^N I(U_i,U_{i-1};Y_i|Q_{i-1})-I(Y_i;Y^{i-1}|Q_{i-1})\nonumber\\
    &\stackrel{(c)}\geq \liminf_{N\to\infty}  \frac{1}{N} \sum_{i=1}^N I(U_i,U_{i-1};Y_i|Q_{i-1})-I(Y_i;Y^{i-1}|Q_{i-1})\nonumber\\
    &\stackrel{(d)}= \liminf_{N\to\infty}  \frac{1}{N} \sum_{i=1}^N I(U_i,U_{i-1};Y_i|Q_{i-1})\nonumber\\
    &\stackrel{(e)}= I(U^+,U;Y|Q), \label{eq:single-letter}
\end{align}
where
\begin{enumerate}[label={(\alph*)}]
\item follows from Lemma \ref{lem:LB_supliminf}.
\item follows from adding and subtracting $H(Y_i|Q_{i-1})$, and from \\$H(Y_i|U_i,U_{i-1},Y^{i-1})\leq H(Y_i|U_i,U_{i-1},Q_{i-1})$.
\item follows by considering BCJR input distribution $P(u_i|u_{i-1},y^{i-1})=P(u^+|u,q)\in \mathcal{P}_\pi$ for all $i$.
\item follows from the Markov chain $Y_i-Q_{i-1}-Y^{i-1}$.
\item follows from the convergence of Markov chains due to the aperiodic input distribution, and the continuity of the mutual information with respect to the joint distribution.
\end{enumerate}
\end{proof}

\section{Examples}
\label{sec:examples}
In this section, we demonstrate the DP and the $Q$-graph methods analytically and numerically for several examples.

\subsection{Look-Ahead State-Dependent Channels}
\begin{figure}[t]
\begin{center}
\begin{psfrags}
    \psfragscanon
    \psfrag{E}[][][1]{$M$}
    \psfrag{S}[][][1]{$\quad S^{i-1+l}$}
    \psfrag{A}[\hspace{2cm}][][1]{Encoder}
	 \psfrag{F}[\hspace{1cm}][][1]{$X_i$}
	 \psfrag{B}[\hspace{2cm}]{{$P(y_i|x_i,s_{i-1})$}}
	 \psfrag{G}[][][1]{$Y_i$}
	 \psfrag{C}[\hspace{2cm}][][1]{Decoder}
	 \psfrag{K}[][][1]{$\hat{M}$}
	 \psfrag{H}[\hspace{2cm}][][1]{$Y_i$}
	 \psfrag{D}[\hspace{2cm}][][1]{Unit-Delay}
	 \psfrag{J}[\vspace{2cm}\hspace{2cm}][][1]{$Y_{i-1}$}
	 \psfrag{L}[\hspace{2cm}][][1]{Finite-State Channel}
	 \psfrag{I}[][][1]{}
\includegraphics[scale=0.8]{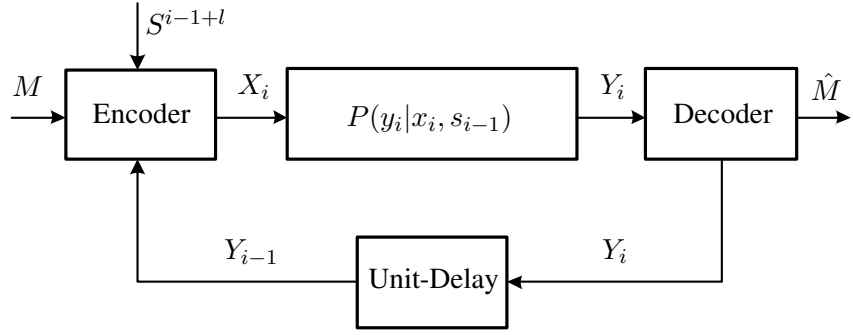}
\caption{
A state-dependent channel where the state is available at the encoder with a finite look-ahead $l>0$ in the presence of feedback. Note that the current state at time $i$ is $S_{i-1}$.}
\label{fig:look-ahead}
\psfragscanoff
\end{psfrags}
\end{center}
\end{figure}

\begin{figure}[b]
\begin{center}
\begin{psfrags}
    \psfragscanon
    \psfrag{A}[][][0.8]{$0$}
    \psfrag{B}[][][0.8]{$1$}
    \psfrag{C}[][][0.8]{$0$}
    \psfrag{D}[][][0.8]{$1$}
    \psfrag{E}[][][1]{$s_{i-1}=0$}
    \psfrag{F}[][][1]{$x_i$}
    \psfrag{G}[][][1]{$y_i$}
    \psfrag{H}[][][0.8]{$1$}
    \psfrag{I}[][][0.8]{\raisebox{-0.4cm}{$0.5$\hspace{-0.1cm}}}
    \psfrag{J}[][][0.8]{$0.5$\hspace{-0.4cm}}
    %
    \psfrag{K}[][][0.8]{$0$}
    \psfrag{L}[][][0.8]{$1$}
    \psfrag{M}[][][0.8]{$0$}
    \psfrag{N}[][][0.8]{$1$}
    \psfrag{O}[][][1]{$s_{i-1}=1$}
    \psfrag{P}[][][1]{$x_i$}
    \psfrag{Q}[][][1]{$y_i$}
    \psfrag{R}[][][0.8]{$0.5$}
    \psfrag{S}[][][0.8]{\raisebox{-0.4cm}{$1$}}
    \psfrag{T}[][][0.8]{$0.5$\hspace{0.6cm}}
\includegraphics[scale=0.28]{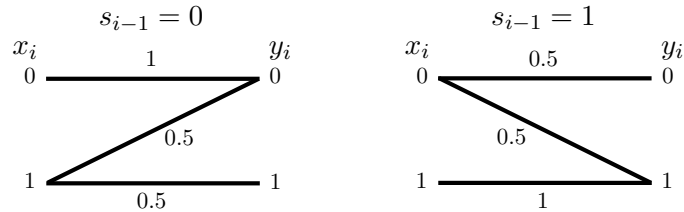}
\caption{The $ZS$ channel topology for $P_{Y|X,S}(y_i|x_i,s_{i-1})$.} \label{fig:ZS-channel}
\psfragscanoff
\end{psfrags}
\end{center}
\end{figure}
Consider DMCs $P_{Y|X,S}(y_i|x_i,s_{i-1})$ with memoryless states that are distributed according to $P(s)$. When the states are
available at the encoder causally or non-causally, 
the capacity is achieved by the Shannon strategy (SS) \cite{Shannon58} or by the Gel'fand–Pinsker (GP) coding scheme \cite{GePi80,HeegardElGamal_state_encoded83}, respectively. An intermediate situation that may occur in practice is when the states are known at the encoder with a finite LA. That is, at time $i$ (with current state $s_{i-1}$), the encoder has access to the states sequence $(s_0,\dots,s_{i-1+l})$, where $l>0$ is a finite LA parameter, as illustrated in 
Fig. \ref{fig:look-ahead}. The capacity of this problem is unknown as a computable expression, and only sequences of lower and upper bounds are known \cite{weissman2006source}. 

In the LA setting, the access of the encoder to future states can be reformulated as a causal access to states that are Markovian. For instance, if $l=1$, at time $i$ the encoder knows $s^{i}$, that is, it has access to the future state $s_i$. If we define a new state as the tuple $\tilde{s}_{i-1}={s}_{i-1}^{i}$, we obtain a FSC characterized by $P_{Y,\tilde{S}^+|X,\tilde{S}}(y_i,\tilde{s}_{i-1}|x_i,\tilde{s}_{i-2})=P_{Y|X,\tilde{S}}(y_i|x_i,\tilde{s}_{i-2})P_{\tilde{S}^+|\tilde{S}}(\tilde{s}_{i-1}|\tilde{s}_{i-2})$, that is, the states are Markovian (rather than memoryless), since both $\tilde{s}_{i-1}$ and $\tilde{s}_{i-2}$ share the element $s_{i-1}$. This transformation can be trivially extended to $l>1$. The advantage of this formulation is that Markovian states are a special case of our general FSC setting.

We consider a state-dependent channel with binary state $S_i\stackrel{\text{i.i.d.}}\sim$ Bernoulli($0.5$) and $|\cX| = |\cY| = 2$, where the output depends on the input and the state according to the $ZS$-channel topology in Fig.~\ref{fig:ZS-channel}. 
We thus call this channel \textit{the i.i.d. $ZS$-channel}. For $l=1$, we derive a closed-form lower bound on its feedback capacity utilizing the $Q$-graph method in Theorem~\ref{theorem:qgraph_LB}.

\begin{theorem}
\label{theorem:LB-ZS-LA}
The feedback capacity of the i.i.d. $ZS$-channel when the state is available at the encoder with a single LA, $C_{\text{FB-LA-}1}$, is lower-bounded by
\begin{align}
    C_{\text{FB-LA-}1} \ge \textstyle \frac{1}{2} \left(1 - H(\frac{1}{8})\right)=\frac{7}{16} \log(7)-1 \approx 0.228217.     \label{eq:lb-zs-la}
\end{align}
\end{theorem}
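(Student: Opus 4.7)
The plan is to apply the $Q$-graph lower bound of Theorem~\ref{theorem:qgraph_LB} to the LA channel, recast as a FSC with Markovian extended state. Following the transformation spelled out before the statement, the augmented state $\tilde{S}_{i-1} = (S_{i-1}, S_i) \in \{0,1\}^2$ is causally known at the encoder, and evolves as $P_{\tilde{S}^+|\tilde{S}}\big((s_i, s_{i+1}) \mid (s_{i-1}, s_i)\big) = \tfrac{1}{2}$ for every $s_{i+1} \in \{0,1\}$. Hence $|\tilde{\cS}| = 4$ and the Shannon-strategy alphabet has cardinality $|\cU| \le 2^4 = 16$, although only a small subset of strategies will actually be active in the construction below.

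The form $\tfrac{1}{2}(1-H(1/8))$ of the target rate is strongly suggestive: it is the rate of a BSC($1/8$) used on half of the time slots, which points to a two-slot block scheme. In one slot the encoder would inject (through its dependence on the current augmented state) an effective BSC-like symbol, while in the other slot it would emit a symbol fully determined by the state, contributing zero mutual information but shepherding the graph back to the ``information'' mode. The error probability $1/8$ should arise naturally as the fraction of the $2^{3}=8$ equiprobable state triples $(s_{i-1}, s_i, s_{i+1})$ in which the LA-based encoding rule is forced into a collision with a deterministic Z- or S-channel output. I would therefore pick a $2$-node $Q$-graph, paired with a function $f:\cU\times\tilde{\cS}\to\cX$ and a conditional input distribution $P(u^+|u,q)$ implementing this two-slot scheme while keeping the node transitions compatible with $q^+ = g(q,y)$.

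Once the pair $(Q\text{-graph},\, P(u^+|u,q))$ is chosen, the remaining work is verification: that the induced $(\tilde{S}, U, Q)$-graph is irreducible and aperiodic (so $P(u^+|u,q)\in\mathcal{P}_\pi$), and that the BCJR-invariance Markov chain $(\tilde{S}^+, U^+) - Q^+ - (Q,Y)$ holds. The latter should follow from the alternating-slot structure, which makes $Q^+$ together with $Y$ already determine which slot we are in and thus the next augmented strategy. By symmetry, the stationary distribution of $Q$ should be uniform on $\{0,1\}$, so that
\[
I(U^+,U;Y|Q) = \tfrac{1}{2}\,I(U^+,U;Y|Q=0) + \tfrac{1}{2}\,I(U^+,U;Y|Q=1),
\]
and direct evaluation should give $1 - H(1/8)$ on the information node and $0$ on the preparation node, yielding the bound in \eqref{eq:lb-zs-la}. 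The principal obstacle lies in pinning down the specific $(Q\text{-graph},\, P(u^+|u,q))$ pair that simultaneously satisfies BCJR-invariance and evaluates cleanly to this closed form; once that pair is fixed, the rest is routine substitution into Theorem~\ref{theorem:qgraph_LB}.
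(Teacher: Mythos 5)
Your overall strategy is the right one --- reformulate the single-LA channel as a FSC with the Markovian augmented state $\tilde{S}_{i-1}=(S_{i-1},S_i)$ and apply the $Q$-graph lower bound of Theorem~\ref{theorem:qgraph_LB} with a $2$-node graph --- and this is indeed how the paper proceeds. However, the proposal stops exactly where the proof begins: you explicitly defer ``pinning down the specific $(Q\text{-graph},P(u^+|u,q))$ pair'' to later, and that pair, together with the verification of BCJR-invariance and the evaluation of the reward, \emph{is} the entire content of the proof. As written, nothing in the proposal establishes the bound \eqref{eq:lb-zs-la}.

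Moreover, the structure you conjecture does not match the construction that actually works, so filling the gap along your proposed lines would likely fail or at least require substantial redesign. You posit a time-alternating ``information slot / preparation slot'' scheme in which one $Q$-node contributes $1-H(\tfrac{1}{8})$ and the other contributes $0$. In the paper's construction the $2$-node $Q$-graph of Fig.~\ref{fig:Qgraph_2node} is \emph{output-driven} (each node records the last channel output $Y$, not the parity of the time index), the cardinality is $|\cU|=4$ with the strategy matrix \eqref{eq:fLA} whose rows depend only on the look-ahead component $s_i$ of $\tilde{s}$, and \emph{both} nodes yield the identical reward \eqref{eq:RewardQ}, namely $H\!\left(\frac{44-21\alpha-3\beta}{64}\right)-\frac{1}{2}\left[H\!\left(\frac{1}{8}\right)+1\right]$, which equals $\frac{1}{2}\left(1-H(\frac{1}{8})\right)$ per node once the BCJR constraints force $\{\alpha=\tfrac{4}{7},\beta=0\}$ or $\{\alpha=\tfrac{3}{7},\beta=1\}$. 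The $\tfrac{1}{8}$ arises from the conditional entropy $H(Y|U^+,U,Q)$ under the stationary distribution $\pi(u,\tilde{s}|q)$ (entries such as $\tfrac{7}{64}$ and $\tfrac{1}{64}$), not from your counting heuristic over the $8$ state triples. To complete the proof you must exhibit concrete choices of $f$ and $P(u^+|u,q)$, compute the stationary distribution of the induced $(\tilde{S},U,Q)$-graph, verify the Markov chain $(\tilde{S}^+,U^+)-Q^+-(Q,Y)$, and evaluate $I(U^+,U;Y|Q)$; none of these steps is routine until the policy is actually specified.
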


\begin{table}[t]
\caption{Our achievable rate for the i.i.d. $ZS$-channel with $1$ LA and feedback, $R_{\text{FB-LA-}1}$, compared with the capacity of various scenarios of SI availability at the encoder and the decoder.}
\centering
\label{table:BoundsLA}
\begin{tabular}[h]{|c|c|c|c|}
\hline
\textbf{Encoder's SI} & \textbf{Decoder's SI} & \textbf{Rate} \\
\hline
-- & -- & $C=1-H(\textstyle\frac{1}{4}) \approx 0.188722$\\
\hline
Causal ($l=0$) & -- & $C_{\text{CSI-E}}=1-H(\textstyle\frac{1}{4}) \approx 0.188722$\\
\hline
Single LA ($l=1$) & -- & $R_{\text{FB-LA-}1}=\textstyle \frac{1}{2} (1 - H(\frac{1}{8}))=\frac{7}{16} \log(7)-1 \approx 0.228217$ \\
\hline
Non-causal ($l=\infty$) & -- & $C_{\text{SI-E}}\approx 0.271553$\\
\hline
Causal / Non-causal & Causal / Non-causal & $C_{\text{SI-ED}}=\textstyle H(\frac{1}{5})-\frac{2}{5}=\log (\frac{5}{4}) \approx 0.321928$ \\
\hline
\end{tabular}
\end{table}

The proof of Theorem~\ref{theorem:LB-ZS-LA} is given below and relies on a particular choice of policy in Theorem~\ref{theorem:qgraph_LB}. In Table \ref{table:BoundsLA}, we compare the achievable rate in Theorem \ref{theorem:LB-ZS-LA}, denoted by $R_{\text{FB-LA-}1}$, with the capacities of the SS ($l=0$), the GP \footnote{For evaluating $C_{\text{SI-E}}$, which is concave, we programmed a code that utilizes CVX, a MATLAB-based modeling system for convex optimization. Our code is available online in \url{https://github.com/Eli-BGU/Gelfand-Pinsker-capacity-computation/}.} ($l=\infty$), the scenario when the state is not available at either the encoder or the decoder and the scenario when it is available at the both parties. It can be observed that causal SI ($l=0$) known at the encoder does not increase the capacity of this channel (see a detailed proof in Appendix~\ref{appendix:CSInotIncreaseZScapacity}). However, we note that even a single LA ($l=1$) increases the feedback capacity by at least $\frac{R_{\text{FB-LA-}1}-C_{\text{CSI-E}}}{C_{\text{SI-E}}-C_{\text{CSI-E}}}\times 100\approx 48\%$ compared to $C_{\text{SI-E}}$ ($l=\infty$). 

We recall that feedback does not increase the capacity for the scenarios when the SI is available causally/noncausally at the encoder (see, e.g., \cite[Prob.~17.17]{el2011network}). However, it is still unknown whether feedback increases the capacity of LA state-dependent channels. Our result does not provide an answer to this open problem, but we suspect that the answer is yes, since the feedback is necessary in the derivation of $R_{\text{FB-LA-}1}$ in the proof of Theorem \ref{theorem:LB-ZS-LA}. That is, the policy is feedback-dependent and relies on a $2$-nodes $Q$-graph. If there exists an upper bound on the non-feedback LA capacity, which is smaller than $R_{\text{FB-LA-}1}$, it would establish our claim. The upper bound given in \cite{weissman2006source} is represented as a function of a chosen parameter. To obtain meaningful upper bounds, the parameter should be chosen large, but this comes at the expense of infeasible computability. 

\begin{remark}
$R_{\text{FB-LA-}1}$ relies on an auxiliary RV with $|\cU|=4$, and a particular choice of $x=f(u^+,\tilde{s})$. However, we were not able to improve this achievable rate numerically. In particular, we increased $|\cU|$, optimized $x=f(u^+,\tilde{s})$, and evaluated the lower bound $Q$-graphs (up to size $|\cQ|=7$), but all led to the same lower bound on the feedback capacity. These simulations may indicate that $R_{\text{FB-LA-}1}$ is actually the feedback capacity for $l=1$, but we do not have a matching converse.
\end{remark}

\begin{remark}
As mentioned, causal SI does not increase the capacity in this example. This fact is also reflected from the policy used in the proof of Theorem \ref{theorem:LB-ZS-LA} with a function $x_i=f(u_i,\tilde{s}_{i-1})$ that only depends on $s_i$. That is, the optimal strategy function is independent of $s_{i-1}$.
\end{remark}


\begin{figure}[b]
\begin{center}
\begin{psfrags}
    \psfragscanon
    \psfrag{A}[\hspace{2cm}][][0.8]{$Q=1$}
    \psfrag{B}[\hspace{2cm}][][0.8]{$Q=2$}
    \psfrag{C}[][][0.8]{$Y=1$\hspace{0.5cm}}
    \psfrag{D}[\hspace{2cm}][][0.8]{$Y=0$\hspace{-0.5cm}}
    \psfrag{E}[\hspace{2cm}][][0.8]{$Y=0$}
    \psfrag{F}[\hspace{2cm}][][0.8]{$Y=1$}
\includegraphics[scale=0.4]{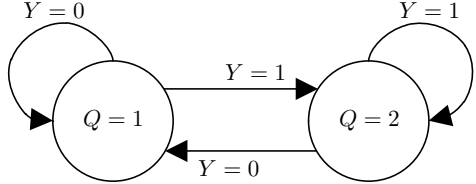}
\caption{A $2$-node $Q$-graph where each node corresponds to $Y=0$ or $Y=1$. 
} \label{fig:Qgraph_2node}
\psfragscanoff
\end{psfrags}
\end{center}
\end{figure}
\begin{proof}[Proof of Theorem \ref{theorem:LB-ZS-LA}]
The proof follows from specifying the equivalent FSC with SI known causally at the encoder and feedback, then utilizing the $Q$-graph lower bound in Theorem \ref{theorem:qgraph_LB}.
Since $l=1$, we have FSC states $\tilde{\cS}=\{0,1,2,3\}$
corresponding to pairs of consecutive memoryless states
$(s_{i-1},s_{i})\in \{(0,0),(0,1),(1,0),(1,1)\}$, respectively.
Consider the $Q$-graph in Fig. \ref{fig:Qgraph_2node}, $|\cU|=4$, a function $f(u^+,\tilde{s})$ given by the following matrix whose rows and columns represent $u^+$ and $s$, respectively, and matrices $P(u^+|u,Q=q)$ whose rows and columns represent $u$ and $u^+$, respectively:

\begin{align}
\label{eq:fLA}
\scriptstyle x=f(u^+,\tilde{s}) =\scriptstyle
\begin{bmatrix}
\scriptstyle 0 & \scriptstyle 0 & \scriptstyle 0 & \scriptstyle 0 \\
\scriptstyle 0 & \scriptstyle 1 & \scriptstyle 0 & \scriptstyle 1 \\
\scriptstyle 1 & \scriptstyle 0 & \scriptstyle 1 & \scriptstyle 0 \\
\scriptstyle 1 & \scriptstyle 1 & \scriptstyle 1 & \scriptstyle 1 \\
\end{bmatrix}
 \ , \
\scriptstyle P(u^+|u,Q=1)=
\begin{bmatrix}
\scriptstyle 0 & \scriptstyle \alpha & \scriptstyle \bar{\alpha} & \scriptstyle 0 \\
\scriptstyle 1 & \scriptstyle 0 & \scriptstyle 0 & \scriptstyle 0 \\
\scriptstyle 0 & \scriptstyle 0 & \scriptstyle 0 & \scriptstyle 1 \\
\scriptstyle 0 & \scriptstyle \beta & \scriptstyle \bar{\beta} & \scriptstyle 0 
\end{bmatrix} 
\ , \
\scriptstyle P(u^+|u,Q=2)=
\begin{bmatrix}
\scriptstyle 0 & \scriptstyle \beta & \scriptstyle \bar{\beta} & \scriptstyle 0  \\
\scriptstyle 0 & \scriptstyle 0 & \scriptstyle 0 & \scriptstyle 1 \\
\scriptstyle 1 & \scriptstyle 0 & \scriptstyle 0 & \scriptstyle 0 \\
\scriptstyle 0 & \scriptstyle \alpha & \scriptstyle \bar{\alpha} & \scriptstyle 0
\end{bmatrix}
\ , \
\; \alpha,\beta \in [0,1].
\end{align}
The transition matrix $P(\tilde{s}^+,u^+,q^+|\tilde{s},u,q)$ (given by Eq. \eqref{eq:suq_transition}) for the corresponding $(\tilde{S},U,Q)$-graph has a unique stationary distribution $\pi(\tilde{s},u,q)$ 
with the following conditional distribution matrices whose rows and columns represent $u$ and $\tilde{s}$, respectively:
\begin{align}
\scriptstyle \pi(u,\tilde{s}|Q=1)&=
\begin{bmatrix}
\scriptscriptstyle \frac{7}{64} & \scriptscriptstyle \frac{7}{64} & \scriptscriptstyle \frac{7}{64} & \scriptscriptstyle \frac{7}{64} \\
  \scriptscriptstyle \frac{21\alpha + 3\beta}{128} & \scriptscriptstyle \frac{21\alpha + 3\beta}{128} & \scriptscriptstyle \frac{7\alpha + \beta}{128} & \scriptscriptstyle \frac{7\alpha + \beta}{128}\\
\scriptscriptstyle \frac{8-7\alpha-\beta}{128} & \scriptscriptstyle \frac{8-7\alpha-\beta}{128} & \scriptscriptstyle \frac{3(8-7\alpha-\beta)}{128}  & \scriptscriptstyle \frac{3(8-7\alpha-\beta)}{128} \\
\scriptscriptstyle \frac{1}{64} & \scriptscriptstyle \frac{1}{64} & \scriptscriptstyle \frac{1}{64} & \scriptscriptstyle \frac{1}{64}
\end{bmatrix},
\scriptstyle \pi(u,\tilde{s}|Q=2)=
\begin{bmatrix}
\scriptscriptstyle \frac{1}{64} & \scriptscriptstyle \frac{1}{64} & \scriptscriptstyle \frac{1}{64} & \scriptscriptstyle \frac{1}{64}  \\
\scriptscriptstyle \frac{7\alpha + \beta}{128} & \scriptscriptstyle \frac{7\alpha + \beta}{128} & \scriptscriptstyle \frac{21\alpha + 3\beta}{128} & \scriptscriptstyle \frac{21\alpha + 3\beta}{128} \\
\scriptscriptstyle \frac{3(8-7\alpha-\beta)}{128} & \scriptscriptstyle \frac{3(8-7\alpha-\beta)}{128} & \scriptscriptstyle \frac{8-7\alpha-\beta}{128} & \scriptscriptstyle \frac{8-7\alpha-\beta}{128} \\
\scriptscriptstyle \frac{7}{64} & \scriptscriptstyle \frac{7}{64} & \scriptscriptstyle \frac{7}{64} & \scriptscriptstyle \frac{7}{64}
\end{bmatrix}.
\end{align}
Furthermore, by $P(u^+,u,y|q)= \sum_{\tilde{s}} \pi(u,\tilde{s}|q) P(u^+|u,q)P_{Y|X,\tilde{S}}(y|f(u^+,\tilde{s}),\tilde{s})$, one can calculate that
\begin{align}
I(U^+,U;Y|Q=1)&=I(U^+,U;Y|Q=2)=\textstyle H\left( \frac{44-21 \alpha -3 \beta}{64} \right)-\frac{1}{2} \left[ H\left( \frac{1}{8}\right) +1 \right]. \label{eq:RewardQ}
\end{align}
Recall that the BCJR-invariant property has to be satisfied. The BCJR-invariant property has $|\cS| \times |\cU| \times |\cY| \times |\cQ|=64$ constraints that can be reduced to the constraints:
\begin{align}
      &7 \alpha +\beta -4=0, \nn\\
      &49 \alpha ^2+14 \alpha  \beta -28 \alpha +\beta ^2-4 \beta =0, \nn\\
      &49 \alpha ^2+14 \alpha  \beta -84 \alpha +\beta ^2-12 \beta +32=0, \nn
\end{align}
which altogether have two solutions: $\textstyle \{\alpha=\frac{4}{7}, \beta=0\}$ or $\textstyle \{\alpha=\frac{3}{7}, \beta=1\}$. It can be verified that both solutions maximize \eqref{eq:RewardQ}. 
Substituting either of these solutions in \eqref{eq:RewardQ}, we conclude that  $C_{\text{FB-LA-}1} \ge \frac{1}{2} (1 - H(\frac{1}{8}))$, which completes the proof.
\end{proof} 

\subsection{Unifilar FSCs}
A FSC is called a unifilar FSC if for any time $i$ the new channel state is a deterministic function of the current state, input and output, i.e., $s_i=g(s_{i-1},x_i,y_i)$ for some deterministic function $g(\cdot)$. The encoder can calculate all of the states causally from $s_0$, by using its sent inputs and the outputs feedback; this explains why it is a special case of the setting. Here, we pick three examples of strongly connected unifilar FSCs whose feedback capacity is known from the literature. For each example, we show analytically that its feedback capacity can be achieved by any of our two computation tools with $|\cU|=2$.
\subsubsection{Trapdoor Channel}
\label{subsec:Trapdoor}
The trapdoor channel \cite{blackwell1961information}, is a unifilar FSC which has binary inputs, outputs and states, and its state evolution is $s^+ = s\oplus x\oplus y$. The channel output depends on the input and the channel state according to the $ZS$ channel topology (see Fig. \ref{fig:ZS-channel}).
The feedback capacity of the trapdoor channel was shown in \cite{Permuter06_trapdoor_submit} to be $C_{\text{FB-Trapdoor}}=\log \phi$, where $\phi\triangleq \frac{\sqrt{5}+1}{2}$ is the known golden ratio.  
Here, we provide an alternative achievability for the feedback capacity that can be derived using either of the tools presented, i.e., the DP formulation (Theorem \ref{theorem:MDP}) and the $Q$-graph (Theorem \ref{theorem:qgraph_LB}).
\begin{figure}[t]
\begin{center}
\begin{psfrags}
    \psfragscanon
    \psfrag{A}[\hspace{2cm}][][0.8]{$Q=1$}
    \psfrag{B}[\hspace{2cm}][][0.8]{$Q=2$}
    \psfrag{C}[\hspace{2cm}][][0.8]{$Q=3$}
    \psfrag{D}[\hspace{2cm}][][0.8]{$Q=4$}
    \psfrag{E}[\hspace{2cm}][][0.8]{$Y=1$\hspace{0.5cm}}
    \psfrag{F}[\hspace{2cm}][][0.8]{$Y=1$}
    \psfrag{G}[\hspace{2cm}][][0.8]{\raisebox{0.5cm}{$Y=0$}}
    \psfrag{H}[][][0.8]{\raisebox{2.5cm}{$Y=1$}\hspace{0.5cm}}
    \psfrag{I}[\hspace{2cm}][][0.8]{$Y=0$\hspace{0.5cm}}
    \psfrag{J}[\hspace{2cm}][][0.8]{\raisebox{2.5cm}{$Y=1$}\hspace{0.4cm}}
    \psfrag{K}[][][0.8]{\raisebox{0.5cm}{$Y=0$}}
    \psfrag{L}[\hspace{2cm}][][0.8]{\raisebox{0.2cm}{$Y=0$}}
\includegraphics[scale=0.4]{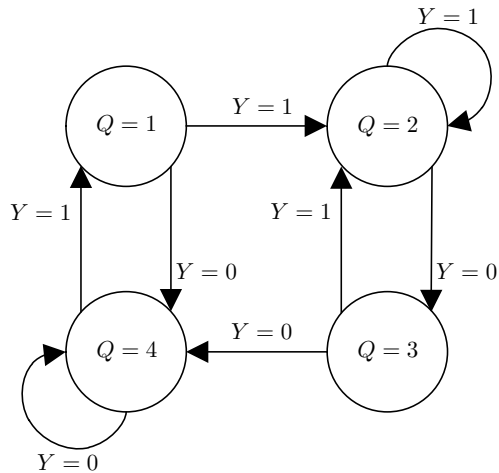}
\caption{A $Q$-graph for the trapdoor channel.} \label{fig:Qgraph_trapdoor}
\psfragscanoff
\end{psfrags}
\end{center}
\end{figure}
\begin{corollary}[by Theorem \ref{theorem:qgraph_LB}]
\label{corollary:trapdoor}
The $Q$-graph in Fig. \ref{fig:Qgraph_trapdoor} with $|\cU|=2$ achieves the known feedback capacity of the trapdoor channel, i.e.,
\begin{equation}
\label{eq:cap_trapdoor}
    C_{\text{FB-Trapdoor}}=
    \max_{\substack{\{P(u^+|u,q)\}\in\mathcal{P}_{\text{BCJR}} \\x=f(u^+,s)}}
I(U^+,U;Y|Q) =\log \phi,
\end{equation}
with specified $f(u^+,s)=u^+\oplus s$ and $P_{U^+|U,Q}$ given by $P_{U^+|U,Q}(0|0,q)=\frac{\sqrt{5}-1}{2}$ and $P_{U^+|U,Q}(0|1,q)=1$ for $q=1,...,4$.
\end{corollary}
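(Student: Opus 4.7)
The plan is to apply Theorem~\ref{theorem:qgraph_LB} with the specified $Q$-graph, cardinality $|\cU|=2$, function $f(u^+,s)=u^+\oplus s$, and the prescribed distribution $P(u^+|u,q)$ with parameter $\tfrac{\sqrt{5}-1}{2} = 1/\phi$, and show that the resulting single-letter expression $I(U^+,U;Y|Q)$ equals $\log \phi$. Since the feedback capacity of the trapdoor channel is already known to be $\log \phi$ from \cite{Permuter06_trapdoor_submit}, matching this value makes the lower bound from Theorem~\ref{theorem:qgraph_LB} tight without requiring an independent converse.

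The first concrete step is to write down the $(S,U,Q)$-graph explicitly. Using $f(u^+,s)=u^+\oplus s$, the trapdoor state update $s^+ = s\oplus x\oplus y = s \oplus (u^+ \oplus s)\oplus y = u^+\oplus y$, and the $ZS$-channel transition law $P_{Y|X,S}$ from Fig.~\ref{fig:ZS-channel}, I would compute the transition kernel $P(s^+,u^+,q^+|s,u,q)$ via Eq.~\eqref{eq:suq_transition}, where $q^+=g(q,y)$ is read off from Fig.~\ref{fig:Qgraph_trapdoor}. Next, I would solve the linear system $\pi = \pi P$ for the stationary distribution $\pi(s,u,q)$ on the $(S,U,Q)$-graph; the symmetries of the trapdoor channel and of the $Q$-graph (which has a natural pairing $Q=1\leftrightarrow Q=4$, $Q=2\leftrightarrow Q=3$) should reduce this to a small system with a clean closed-form solution in terms of $\phi$. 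I would then check BCJR-invariance, i.e., that $(S^+,U^+)-Q^+-(Q,Y)$ holds under $\pi$; thanks to the determinism $s^+ = u^+\oplus y$ and $q^+=g(q,y)$, this should reduce to a finite number of algebraic identities in $\phi$ that follow from $\phi^2=\phi+1$.

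The final step is to evaluate
\begin{equation*}
I(U^+,U;Y|Q)=\sum_{q}\pi(q)\bigl[H(Y|Q=q)-H(Y|U^+,U,Q=q)\bigr]
\end{equation*}
using the joint distribution from Eq.~\eqref{eq:Qgraph_joint_dist}. With the chosen $P(u^+|u,q)$, the conditional output distribution given $(U^+,U,Q)$ is either deterministic or a simple binary distribution determined by the marginal $\pi(s|u,q)$, so the conditional entropy terms will collapse into expressions in $1/\phi$ and $1-1/\phi = 1/\phi^2$, and the identities $H(1/\phi^2)=\log\phi\cdot(\phi+1)/\phi^2$ together with $\phi^2=\phi+1$ should yield exactly $\log\phi$. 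The main obstacle I anticipate is neither conceptual nor particularly deep: it is the bookkeeping of verifying BCJR-invariance and solving for $\pi$, which, because $|\cS|\cdot|\cU|\cdot|\cQ|=16$ nodes on the product graph, risks becoming opaque unless one exploits the symmetries from the outset. With those symmetries the computation reduces to a handful of scalar identities, and the corollary follows directly from Theorem~\ref{theorem:qgraph_LB} combined with the known capacity result.
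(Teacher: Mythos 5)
Your proposal follows essentially the same route as the paper's proof: instantiate Theorem~\ref{theorem:qgraph_LB} with the given $Q$-graph, $|\cU|=2$, $f(u^+,s)=u^+\oplus s$ and the $1/\phi$-parametrized policy, compute the stationary distribution of the $(S,U,Q)$-graph, verify BCJR-invariance, evaluate $I(U^+,U;Y|Q)=\log\phi$, and invoke the known capacity of \cite{Permuter06_trapdoor_submit} for tightness. The only discrepancy is cosmetic: the symmetry that actually emerges pairs $Q=1$ with $Q=3$ and $Q=2$ with $Q=4$ (equal rewards and equal marginals $\pi(Q=q)$ within each pair), not $1\leftrightarrow4$ and $2\leftrightarrow3$ as you guessed, but this does not affect the validity of the argument.
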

The technical proof is given in Appendix \ref{appendix:ProofCorTrapdoor}.
The policy in Corollary \ref{corollary:trapdoor} can also be extracted from a standard evaluation of the VIA with the choice of
$|U|=2$ and $f(u^+,s)=u^+\oplus s$.
We also explain why this choice achieves the feedback capacity. Recall that the feedback capacity of connected unifilar FSCs \cite[Theorem~1]{Permuter06_trapdoor_submit} is
\begin{align}
    C_{\text{FB}}=\sup_{\{P(x_i|s_{i-1},y^{i-1})\}_{i\ge 1}} \liminf_{N \to \infty} \frac{1}{N} \sum_{i=1}^N I(X_i,S_{i-1};Y_i|Y^{i-1}). \label{eq:Cap_unifilar}
\end{align}
Comparing our expression in \eqref{eq:supliminfcapCSI} with \eqref{eq:Cap_unifilar}
for any summand $i\ge 1$ gives:
\begin{align}
    I(U_i,U_{i-1};Y_i|Y^{i-1})&=H(Y_i|Y^{i-1})-H(Y_i|U_i,U_{i-1},Y^{i-1}) \nn\\
    &\stackrel{(a)}\le H(Y_i|Y^{i-1})-H(Y_i|X_i(U_i,S_{i-1}),S_{i-1},Y^{i-1}) \nn\\
    &=I(X_i(U_i,S_{i-1}),S_{i-1};Y_i|Y^{i-1}) \label{eq:uToUnifilar}
\end{align}
where (a) follows from the Markov chain $Y_i-(X_i,S_{i-1})-(U^{i},X^{i-1},Y^{i-1},S^{i-2})$. However, for our choice of $|U|=2$ and $f(u_i,s_{i-1})=u_i \oplus s_{i-1}$ it follows that $u_i=s_{i-1} \oplus x_i$, and from the channel model we then have $s_i=s_{i-1} \oplus x_i \oplus y_i=u_i \oplus y_i$ which implies that $s_{i-1}=u_{i-1} \oplus y_{i-1}$, i.e., $s_{i-1}$ is a deterministic function of $(u_{i-1},y_{i-1})$. Hence, (a) is achieved with equality since under this choice
\begin{align}
H(Y_i|U_i,U_{i-1},Y^{i-1})=H(Y_i|U_i,U_{i-1},Y^{i-1},S_{i-1}) =H(Y_i|X_i,S_{i-1},Y^{i-1}). \label{eq:secondHEquality}
\end{align}

\subsubsection{Ising Channel}
The Ising channel is another unifilar FSC with $\cX=\cY=\cS=\{0,1\}$. This channel also has the $ZS$-channel topology as the trapdoor channel, given in Fig. \ref{fig:ZS-channel}, but the state evolution differs, i.e., $s^+=x$, therefore the feedback capacity differs as well.
Its feedback capacity was derived in \cite{elishco2014capacity,Ising_artyom_IT}, and was shown in \cite{elishco2014capacity} to be $C_{\text{FB-Ising}}=\max_{a\in [0,1]} 2 H(a)/(3+a) \approx 0.5755$. 
\begin{corollary}[by Theorem \ref{theorem:qgraph_LB}] \label{corollary:ising}
The $Q$-graph in Fig. \ref{fig:Qgraph_ising} with $|\cU|=2$ achieves the known feedback capacity of the Ising channel, i.e., 
\begin{equation}
\label{eq:cap_ising}
    C_{\text{FB-Ising}}= \max_{a \in[0,1]} \frac{2 H(a)}{3+a}.
\end{equation}
with specified $f(u^+,s)=u^+$ and $P_{U^+|U,Q}$ given by $P_{U^+|U,Q}(0|0,i)=1$, $P_{U^+|U,Q}(0|1,q)=0$ for $q=1,3$, and $P_{U^+|U,Q}(0|0,2)=a$,
$P_{U^+|U,Q}(0|1,4)=\bar{a}$ for some $a \in [0,1]$. 
\end{corollary}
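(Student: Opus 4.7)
The plan is to invoke Theorem~\ref{theorem:qgraph_LB} with the $Q$-graph depicted in Fig.~\ref{fig:Qgraph_ising}, cardinality $|\cU|=2$, strategy $f(u^+,s)=u^+$, and the parameterized conditional distribution $P_{U^+|U,Q}$ specified in the statement. First I would construct the induced $(S,U,Q)$-graph by enumerating the transitions $(s,u,q)\to(s^+,u^+,q^+)$ via \eqref{eq:suq_transition}; since the Ising channel is unifilar with $s^+=x=u^+$ and the $ZS$ topology, many of these transitions are deterministic, making the graph easy to describe explicitly. I would then check that this graph is irreducible and aperiodic for every $a\in(0,1)$, so that a unique stationary distribution $\pi(s,u,q)$ exists.

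Next I would solve $\pi\cdot P(s^+,u^+,q^+\mid s,u,q)=\pi$ to obtain $\pi(s,u,q)$ as a rational function of $a$. A short calculation should yield $\pi(q)$ proportional to the ``length'' each $q$-node contributes to a typical cycle; based on the policy structure (deterministic at $q\in\{1,3\}$, randomized at $q\in\{2,4\}$), I anticipate a total normalization factor of $3+a$, which is precisely the denominator appearing in the target expression. With $\pi$ in hand, I would verify the BCJR-invariance condition $(S^+,U^+)-Q^+-(Q,Y)$. Because $f(u^+,s)=u^+$ and $s^+=u^+$, the pair $(S^+,U^+)$ is entirely determined by $U^+$, so this Markov chain reduces to checking a small finite set of linear identities, which I expect to hold automatically from the symmetry of the chosen policy.

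Having established BCJR-invariance, I would evaluate the single-letter reward
\begin{equation*}
I(U^+,U;Y\mid Q)=\sum_{q}\pi(q)\,I(U^+,U;Y\mid Q=q),
\end{equation*}
using the joint law \eqref{eq:Qgraph_joint_dist}. The contributions from $q\in\{1,3\}$ are zero (deterministic policy producing a deterministic output) while $q\in\{2,4\}$ each contribute $H(a)$ worth of entropy. Combined with the factor $\pi(q=2)+\pi(q=4)=2/(3+a)$, this should simplify exactly to $\frac{2H(a)}{3+a}$, and maximization over $a\in[0,1]$ matches the known capacity expression from \cite{elishco2014capacity}. Finally, to argue that this lower bound is in fact tight, I would repeat the reasoning used for the trapdoor channel at the end of Section~\ref{subsec:Trapdoor}: since $f(u^+,s)=u^+$ and $s_i=x_i=u_i$ in the Ising channel, the state $S_{i-1}$ is a deterministic function of $U_{i-1}$, so the inequality in the analogue of \eqref{eq:uToUnifilar} becomes an equality, and our expression equals the unifilar feedback capacity formula \eqref{eq:Cap_unifilar} specialized to this channel.

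The main obstacle I foresee is algebraic: carrying out the stationary-distribution computation on the sixteen-element $(S,U,Q)$ state space and cleanly collapsing $I(U^+,U;Y\mid Q)$ to the compact form $\frac{2H(a)}{3+a}$ without spurious terms. A secondary difficulty is verifying BCJR-invariance exhaustively over all $(q^+,y,s^+,u^+)$ tuples; however, because of the deterministic structure at $q\in\{1,3\}$ and the symmetry between $(q=2,u=0)$ and $(q=4,u=1)$, most of these constraints should reduce to trivial identities, leaving only a handful of genuine equations that the parameterization by a single $a$ already satisfies.
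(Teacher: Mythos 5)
Your overall route is the same as the paper's: apply Theorem~\ref{theorem:qgraph_LB} with the stated $Q$-graph, $|\cU|=2$, $f(u^+,s)=u^+$ and the one-parameter policy, build the $(S,U,Q)$-graph, solve for the stationary distribution (your anticipated marginal $\pi(Q=2)=\pi(Q=4)=\frac{1}{a+3}$, $\pi(Q=1)=\pi(Q=3)=\frac{a+1}{2(a+3)}$ is exactly what the paper obtains), verify BCJR-invariance, and sum the per-node rewards. However, your evaluation of the per-node rewards contains a genuine error. You claim $I(U^+,U;Y\mid Q=q)=0$ for $q\in\{1,3\}$ because the policy there is deterministic and the output is then determined; but that argument only shows $H(Y\mid U^+,U,Q=q)=0$, not that the mutual information vanishes. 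Indeed, at $q=1$ the stationary law puts mass $\frac{\bar{a}}{a+1}$ on $(u,s)=(0,0)$ and $\frac{2a}{a+1}$ on $(1,1)$, so $H(Y\mid Q=1)=H\bigl(\frac{2a}{a+1}\bigr)>0$ and the reward equals $H\bigl(\frac{2a}{a+1}\bigr)$, not $0$. Likewise at $q\in\{2,4\}$ the reward is not $H(a)$: there $(u,s)$ is deterministic, the output satisfies $P(Y=u\mid Q=2)=\frac{a+1}{2}$, and $H(Y\mid U^+,U,Q=2)=\bar{a}$ (the $ZS$ noise when $x\neq s$), giving $H\bigl(\frac{a+1}{2}\bigr)+a-1$. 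Your claimed decomposition happens to sum to the correct total because of the identity
\begin{equation*}
(a+1)\,H\!\left(\tfrac{2a}{a+1}\right)+2\left(H\!\left(\tfrac{a+1}{2}\right)+a-1\right)=2H(a),
\end{equation*}
but the individual values you assign to the nodes are wrong, and a literal execution of your plan would surface the discrepancy at the first node you computed. The fix is simply to compute the four rewards from $P(u^+,u,y\mid q)=\sum_{s}\pi(u,s\mid q)P(u^+\mid u,q)P_{Y|X,S}(y\mid f(u^+,s),s)$ as the paper does, and then verify that the weighted sum collapses to $\frac{2H(a)}{3+a}$.

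Two smaller remarks. First, your concluding tightness argument (adapting the reasoning around \eqref{eq:uToUnifilar}, using that $s_{i-1}=u_{i-1}$ under $f(u^+,s)=u^+$ and $s^+=x$) is sound but unnecessary here: the paper establishes the corollary by matching the lower bound to the known capacity value of \cite{elishco2014capacity}, so exhibiting the achievable rate $\max_a \frac{2H(a)}{3+a}$ already suffices. Second, "BCJR-invariance should hold automatically from symmetry" is too quick; the condition $(S^+,U^+)-Q^+-(Q,Y)$ must actually be checked against the computed stationary conditionals $\pi(u,s\mid q)$ (it does hold for this policy, but that is a verification, not a consequence of symmetry alone).
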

The technical proof is given in Appendix \ref{appendix:ProofCorIsing}, where it is also clarified that $P_{U^+|U,Q}(0|1,2)$ and $P_{U^+|U,Q}(0|0,4)$ are irrelevant due to the fact that $P_{U|Q}(1|2)=0$, $P_{U|Q}(0|4)=0$.
\begin{figure}[t]
\begin{center}
\begin{psfrags}
    \psfragscanon
    \psfrag{A}[\hspace{2cm}][][0.8]{$Q=1$}
    \psfrag{B}[\hspace{2cm}][][0.8]{$Q=2$}
    \psfrag{C}[\hspace{2cm}][][0.8]{$Q=3$}
    \psfrag{D}[\hspace{2cm}][][0.8]{$Q=4$}
    \psfrag{E}[\hspace{2cm}][][0.8]{$Y=0$}
    \psfrag{F}[\hspace{2cm}][][0.8]{$Y=1$}
    \psfrag{G}[\hspace{2cm}][][0.8]{$Y=0$\hspace{-0.1cm}}
    \psfrag{H}[\hspace{2cm}][][0.8]{$Y=0$\hspace{0.1cm}}
    \psfrag{I}[\hspace{2cm}][][0.8]{$Y=1$}
    \psfrag{J}[\hspace{2cm}][][0.8]{$Y=1$\hspace{0.1cm}}
    \psfrag{K}[][][0.8]{$Y=1$\hspace{-0.1cm}}
    \psfrag{L}[\hspace{2cm}][][0.8]{$Y=0$}
\includegraphics[scale=0.40]{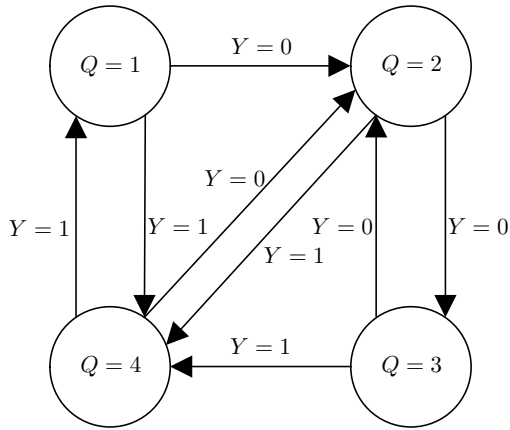}
\caption{A $Q$-graph for the Ising channel.} \label{fig:Qgraph_ising}
\psfragscanoff
\end{psfrags}
\end{center}
\end{figure}

\subsubsection{Input-constrained BEC}
\label{subsec:constrainedBEC}
This channel has a binary input sequence with
the $(1,\infty)$-RLL constraint, i.e., it contains no consecutive ones; and instead of binary output \cite{Sabag_BIBO_IT} it comprises of a BEC. The input-constrained BEC, which does not fall into the classical definition of unifilar FSCs, can be considered so by viewing the input letter as a channel state representing the input constraint, i.e., $s^+=x$. The feedback capacity of this problem was calculated in \cite{Sabag_BEC} 
to be $\max_{p \in [0,0.5]} \frac{H(p)}{\frac{1}{\bar{\epsilon}}+p}$, where $\epsilon$ is the erasure probability parameter, and it was generalized in \cite{PeledSabagBEC} for the case of the $(0, k)$-RLL when swapping ‘$0$’s and ‘$1$’s.
\begin{corollary}[by Theorem \ref{theorem:qgraph_LB}]
\label{corollary:BEC}
The $Q$-graph in Fig. \ref{fig:Qgraph_bec} with $|\cU|=2$ achieves the known feedback capacity of the 
input-constrained BEC, i.e.,
\begin{equation}
\label{eq:cap_bec}
   C_{\text{FB-cBEC}}= \max_{p \in [0,0.5]} \frac{H(p)}{\frac{1}{\bar{\epsilon}}+p}.
\end{equation}
with specified
\begin{align}
\label{eq:x_bec}
    f(u^+,s)=\begin{cases}
            1, & s=0 \text{ and } u^+=1\\
            0, & \text{otherwise}
		   \end{cases}
\end{align}
and $P_{U^+|U,Q}$ given by $P_{U^+|U,Q}(0|1,1)=1$, $P_{U^+|U,Q}(0|0,2)=\bar{p}$, $P_{U^+|U,Q}(0|0,3)=\frac{1-2p}{\bar{p}}$, $P_{U^+|U,Q}(0|1,3)=1$, for some $p \in [0,0.5]$. 
\end{corollary}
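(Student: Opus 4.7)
The plan is to follow the same template as in the proofs of Corollaries \ref{corollary:trapdoor} and \ref{corollary:ising}: specialize Theorem~\ref{theorem:qgraph_LB} to the specified $Q$-graph, function $f(u^+,s)$, and input distribution $P(u^+|u,q)$, and then show that the resulting single-letter lower bound matches the known feedback capacity $\max_{p\in[0,0.5]} H(p)/(\tfrac{1}{\bar\epsilon}+p)$. Because the bound is an achievable rate and it coincides with a known converse from \cite{Sabag_BEC}, tightness is automatic; the work is purely in evaluating the three quantities that enter the bound.

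First, I would construct the $(S,U,Q)$-graph on the node set $\cS\times\cU\times\cQ=\{0,1\}\times\{0,1\}\times\{1,2,3\}$ using the transition rule \eqref{eq:suq_transition}. Because $f$ forces $x=0$ whenever $s=1$ and maps $(u^+=1,s=0)\mapsto x=1$, many of the $12$ candidate nodes turn out to be unreachable from the others; exactly as in the Ising example, the stationary distribution $\pi(s,u,q)$ will be supported on a small sub-collection of $(s,u,q)$ triples, and the entries $P(u^+|u,q)$ left unspecified in the statement will correspond to tuples $(u,q)$ with $\pi(u|q)=0$ and thus will not affect $I(U^+,U;Y|Q)$. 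I would solve the balance equations for $\pi$ in closed form in the parameter $p$; I expect the resulting mass on each $q$ to be proportional to $\tfrac{1}{\bar\epsilon}+p$ when averaged appropriately, which is what makes the ratio in \eqref{eq:cap_bec} appear naturally.

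Second, I would verify the BCJR-invariance condition $(S^+,U^+)-Q^+-(Q,Y)$ that is required for Theorem~\ref{theorem:qgraph_LB}. This reduces, after exploiting the deterministic $f$ and the fact that $S^+=X$ is determined by $(U^+,S)$, to checking a handful of equalities of the form $P(s^+,u^+|q^+,q,y)=P(s^+,u^+|q^+)$ for the nonzero entries in the support of $\pi$. The $Q$-graph in Fig.~\ref{fig:Qgraph_bec} is designed so that each node $q^+$ is reached either only via an erasure $y=?$ or only via a specific input letter, which collapses most of these constraints to trivial identities; the remaining nontrivial ones are polynomial in $p$ and $\epsilon$, and I would confirm they hold for every $p\in[0,0.5]$.

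Finally, I would compute $I(U^+,U;Y|Q)=H(Y|Q)-H(Y|U^+,U,Q)$. Conditioned on $(U^+,U,Q)$ the input $X$ is deterministic, so $H(Y|U^+,U,Q)=\pi(X=1|Q)\,H(\epsilon)$ weighted over $Q$; conditioned only on $Q$, the output is a mixture of $\{0,1,?\}$ whose entropy will simplify to $H(p)$ times the probability of being in an ``active'' state plus the erasure contribution. Collecting terms, these will combine into the exact ratio $H(p)/(\tfrac{1}{\bar\epsilon}+p)$, and optimizing over $p\in[0,0.5]$ gives the right-hand side of \eqref{eq:cap_bec}. The main obstacle I anticipate is not conceptual but bookkeeping: carefully identifying the support of $\pi$ so that the BCJR equations and the mutual information collapse into the clean single-parameter expression; once that is done, matching the known converse of \cite{Sabag_BEC} closes the proof.
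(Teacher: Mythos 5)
Your proposal follows exactly the route the paper takes: build the $(S,U,Q)$-graph from \eqref{eq:suq_transition}, solve for the stationary distribution $\pi(s,u,q)$ in terms of $p$ (whose $Q$-marginal indeed has denominator $1+p\bar{\epsilon}=\bar{\epsilon}(\tfrac{1}{\bar{\epsilon}}+p)$), verify BCJR-invariance on the support, observe that the unspecified entries of $P(u^+|u,q)$ are irrelevant because $\pi_{U|Q}$ is degenerate at $q=1,2$, and evaluate the per-node rewards ($0$ at $q=1$ and $H_3(p\bar{\epsilon},\epsilon)-H_2(\epsilon)=\bar{\epsilon}H(p)$ at $q=2,3$) so that $\mu\cdot r^T=H(p)/(\tfrac{1}{\bar{\epsilon}}+p)$, with tightness from the known converse of \cite{Sabag_BEC}. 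This is the same argument as the paper's Appendix proof, so no further comparison is needed.
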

The technical proof is given in Appendix \ref{appendix:ProofCorBEC}, where it is also clarified that $P_{U^+|U,Q}(0|0,1)$ and $P_{U^+|U,Q}(0|1,2)$ are irrelevant due to the fact that $P_{U|Q}(0|1)=0$, $P_{U|Q}(1|2)=0$.
\begin{figure}[t]
\begin{center}
\begin{psfrags}
    \psfragscanon
     \psfrag{A}[\hspace{2cm}][][0.8]{$Q=1$}
     \psfrag{B}[\hspace{2cm}][][0.8]{$Q=2$}
     \psfrag{C}[\hspace{2cm}][][0.8]{$Q=3$}
     \psfrag{D}[\hspace{2cm}][][0.8]{$Y=0 \quad Y=?$}
     \psfrag{E}[\hspace{2cm}][][0.8]{$Y=1$}
     \psfrag{F}[\hspace{2cm}][][0.8]{$Y=0$}
     \psfrag{G}[\hspace{2cm}][][0.8]{$Y=?$}
     \psfrag{H}[\hspace{2cm}][][0.8]{$\quad Y=0$}
     \psfrag{I}[\hspace{2cm}][][0.8]{$Y=?$}
     \psfrag{J}[\hspace{2cm}][][0.8]{$Y=1$}
\includegraphics[scale=0.4]{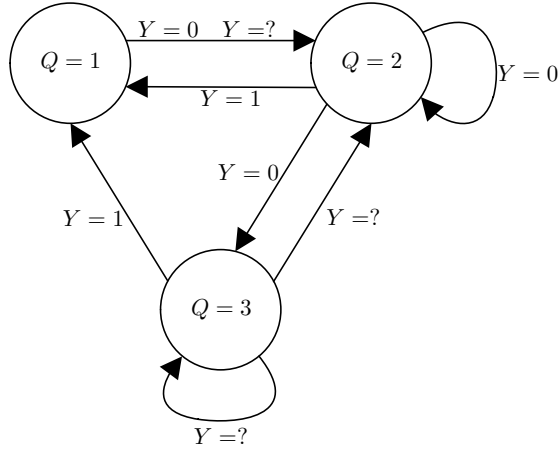}
\caption{A $Q$-graph for the input constrained BEC.} \label{fig:Qgraph_bec}
\psfragscanoff
\end{psfrags}
\end{center}
\end{figure}
Next, we focus on a generalization the Ising channel with a stochastic state evolution rather than deterministic.

\subsection{The Noisy-Ising($\eta$) Channel}
\label{subsec:noisyIsing}
We study a generalization of the Ising channel, where the channel state is obtained as the output of a binary symmetric channel (BSC) whose input is the channel input. That is, $P(s_i|x_i)$ is a BSC with crossover probability $\eta\in [0,1]$; we call this generalized channel the \textit{noisy-Ising($\eta$) channel} and denote its feedback capacity given a parameter $\eta$ by $C_{\text{fb-csi}}^{\text{n-Ising}}(\eta)$. The problem symmetry implies that its feedback capacity is symmetric in $\eta$, i.e., $C_{\text{fb-csi}}^{\text{n-Ising}}(\bar{\eta})=C_{\text{fb-csi}}^{\text{n-Ising}}(\eta)$, thus our focus is limited to $\eta \in [0,0.5]$. For these values of $\eta$, we provide an analytic, closed-form lower bound in Theorem \ref{theorem:LB_noisyIsing} below and two numerical lower bounds using the DP and $Q$-graph methods. 
\begin{figure}[t]
\begin{center}
    \includegraphics[scale=0.65]{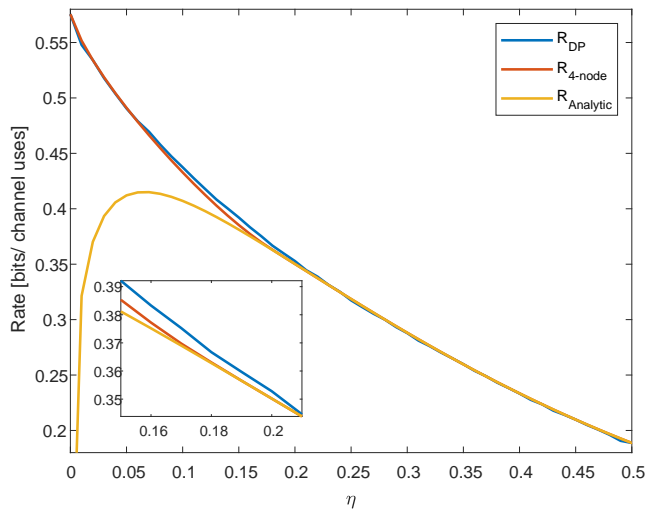}
\end{center}
\caption{Lower bounds on the feedback capacity of the noisy-Ising($\eta$) channel:
$R_{\text{Analytic}}$ are analytically given in \eqref{eq:LB-N-Ising}, $R_{\text{DP}}$ and $R_{4\text{-node}}$ are calculated numerically based on the DP tool and the $4$-node $Q$-graph in Fig. \ref{fig:Qgraph_ising}, respectively.}
\label{fig:capNoisyIsing}
\end{figure}

\begin{theorem}
\label{theorem:LB_noisyIsing}
The feedback capacity of the noisy-Ising($\eta$) channel with SI known causally at the encoder is lower-bounded by
\begin{align}
   C_{\text{fb-csi}}^{\text{n-Ising}}(\eta)& \ge \textstyle H\left(\frac{(2-\eta)(a \bar{\eta}+\eta)}{a-2a\eta+2}\right)-\frac{2-a\eta-a}{a-2a\eta+2}H\left(\frac{\eta}{2}\right)-\frac{a(2-\eta)}{a-2a\eta+2}H\left(\frac{\bar{\eta}}{2}\right), \quad \eta\in[0,0.5], \label{eq:LB-N-Ising}
\end{align}
where $a=\frac{4 \eta ^2+\sqrt{16 \eta ^2-31 \eta +16} \sqrt{\eta }-5 \eta }{4 \eta ^2-10 \eta +4}$ for any $\eta\in [0,0.5)$, and $a=\frac{1}{3}$ if $\eta=0.5$.
\end{theorem}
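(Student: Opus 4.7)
The plan is to apply Theorem \ref{theorem:qgraph_LB} with the same $4$-node $Q$-graph (Fig.~\ref{fig:Qgraph_ising}) that was used in Corollary \ref{corollary:ising} to obtain the feedback capacity of the (deterministic) Ising channel. I take $|\cU|=2$, the strategy $f(u^+,s)=u^+$, and parametrize the input distribution $P_{U^+|U,Q}$ by a single scalar $a\in[0,1]$ in the same symmetric shape used for the Ising case, namely deterministic transitions at nodes $Q=1,3$ and $(a,\bar{a})$ mixtures at nodes $Q=2,4$. The scalar $a$ will be fixed by BCJR-invariance, and the resulting reward will be shown to match the right-hand side of \eqref{eq:LB-N-Ising}.

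The first concrete step is to write the $(S,U,Q)$-graph transition matrix \eqref{eq:suq_transition} with the stochastic state kernel $P(s^+|x)$ taken to be a BSC$(\eta)$ rather than the deterministic $s^+=x$. Exploiting the $0/1$ symmetry of both the channel and the $Q$-graph, the stationary distribution $\pi(s,u,q)$ collapses to only a handful of distinct entries, each a rational function of $(a,\eta)$. Imposing the BCJR condition $(S^+,U^+)-Q^+-(Q,Y)$ then reduces, after symmetry-based cancellations, to a single nontrivial algebraic equation in $a$ and $\eta$. Because the BSC introduces $\eta^{2}$ terms in \eqref{eq:suq_transition}, this equation turns out to be quadratic in $a$; the admissible root is precisely
\[
a(\eta)=\frac{4\eta^{2}+\sqrt{\eta}\sqrt{16\eta^{2}-31\eta+16}-5\eta}{4\eta^{2}-10\eta+4},
\]
with the value $a=\tfrac{1}{3}$ at $\eta=\tfrac{1}{2}$ recovered as a continuity limit, where numerator and denominator simultaneously vanish.

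With $\pi$ and $a(\eta)$ in hand, the reward $I(U^+,U;Y|Q)$ is computed via $P(u^+,u,y|q)=\sum_{s}\pi(u,s|q)\,P(u^+|u,q)\,P_{Y|X,S}(y|f(u^+,s),s)$; by the $0/1$ symmetry the four per-node terms $I(U^+,U;Y|Q=q)$ again coincide, mirroring the $Q$-symmetry exploited in \eqref{eq:RewardQ}. A direct simplification, in which $H(\eta/2)$ and $H(\bar{\eta}/2)$ emerge as the conditional output entropies produced by the interplay of the Ising output noise and the BSC state noise, consolidates the formula into the three-term expression of \eqref{eq:LB-N-Ising}. The main obstacle is algebraic rather than conceptual: solving the BCJR fixed-point with a stochastic state evolution produces noticeably messier expressions than in the deterministic case of Corollary \ref{corollary:ising}, and selecting the correct quadratic root for $a(\eta)$ is the most delicate substep.
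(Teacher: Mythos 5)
Your overall strategy---invoke Theorem \ref{theorem:qgraph_LB} with $|\cU|=2$ and $f(u^+,s)=u^+$, pin down a single parameter $a$ via the BCJR-invariance constraints, and then evaluate $I(U^+,U;Y|Q)$---is the right template, but you have chosen the wrong $Q$-graph, and this breaks the derivation. The paper obtains \eqref{eq:LB-N-Ising} from the \emph{two}-node $Q$-graph of Fig.~\ref{fig:Qgraph_2node} (each node recording the last output), with the policy $P_{U^+|U,Q}(1|0,1)=P_{U^+|U,Q}(0|1,2)=a$ and $P_{U^+|U,Q}(1|1,1)=P_{U^+|U,Q}(0|0,2)=1$; note that in this policy every node has one mixed row and one deterministic row, which is a different shape from the Ising-corollary policy you propose (deterministic at $Q=1,3$, mixed at $Q=2,4$ on the four-node graph of Fig.~\ref{fig:Qgraph_ising}). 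It is the two-node BCJR fixed point that collapses to the quadratic $(2-5\eta+2\eta^2)a^2+(5-4\eta)\eta a-2(1-\eta)\eta=0$ and hence to the stated root $a(\eta)$; there is no reason the four-node BCJR system would reduce to the same equation, and you give no computation supporting that claim.

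The gap is not merely cosmetic. The paper explicitly computes the four-node bound $R_{4\text{-node}}$ numerically and reports that the analytic expression $R_{\text{Analytic}}$ of \eqref{eq:LB-N-Ising} only approximates it for $\eta\in[0.15,0.5]$; for smaller $\eta$ the two bounds differ (the four-node one is better). Consequently your computation, if carried out correctly on the four-node graph, cannot ``consolidate into the three-term expression of \eqref{eq:LB-N-Ising}'' --- it would produce a different (and apparently non-closed-form) rate. One could in principle still deduce the theorem from a four-node bound by separately proving it dominates the right-hand side of \eqref{eq:LB-N-Ising} for all $\eta\in[0,0.5]$, but you do not attempt this, and it would require a closed form for the four-node rate that neither you nor the paper provides. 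To repair the proof, switch to the two-node $Q$-graph and the policy \eqref{eq:policyR_analytic}; the rest of your outline (symmetry reduction of $\pi(s,u,q)$, the quadratic in $a$ from the BSC$(\eta)$ state kernel, and the limit $a\to\tfrac{1}{3}$ at $\eta=\tfrac{1}{2}$ where numerator and denominator of $a(\eta)$ both vanish) then goes through as you describe.
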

Theorem \ref{theorem:LB_noisyIsing} is a direct consequence of Theorem \ref{theorem:qgraph_LB} with the $2$-node $Q$-graph in Fig. \ref{fig:Qgraph_2node} with $|\cU|=2$, function $x=u^+$ and policy
\begin{align}
 P_{U^+|U,Q}(1|0,1)&=P_{U^+|U,Q}(0|1,2)=a\in[0,1], \nn\\
 P_{U^+|U,Q}(1|1,1)&=P_{U^+|U,Q}(0|0,2)=1,
 \label{eq:policyR_analytic}
\end{align}
and the resulting BCJR constraints reduce to the quadratic equation in $a$:
\begin{align}
    (2 - 5 \eta + 2 \eta^2) a^2 +(5 - 4 \eta) \eta a -2 (1 - \eta) \eta=0, \quad \eta\in[0,0.5], \label{eq:aEquation}
\end{align}
whose positive solution is $a$ given in the theorem.
Specifically, for $\eta=0.5$, an i.i.d. state is obtained, and 
\begin{align}
R_{\text{Analytic}}(0.5)=\textstyle H\left(\frac{3(a+1)}{8}\right)-H\left(\frac{1}{4}\right)=\textstyle 1-H\left(\frac{1}{4}\right) \approx 0.1887,
\end{align}
where $a=\frac{1}{3}$ maximizes \eqref{eq:LB-N-Ising} and satifies the BCJR condition \eqref{eq:aEquation} simultaneously. This rate is the capacity itself, which is
$C_{\text{CSI-E}}$ previously given in Table \ref{table:BoundsLA}. The lower bound in Theorem \ref{theorem:LB_noisyIsing} for any $\eta \in [0,0.5]$ is illustrated in Fig. \ref{fig:capNoisyIsing}, and denoted by $R_{\text{Analytic}}$.

\begin{remark}
The choice of $P_{U^+|U,Q}$ in \eqref{eq:policyR_analytic}, not only provides a simple, closed-form achievable rate, but it also implies a simple coding scheme that utilizes the choice of $x=u^+$. Using the feedback, the encoder can compare $y_{i-1}$ with the transmitted symbol $u_{i-1}=x_{i-1}$. If $y_{i-1}\ne x_{i-1}$, i.e., the decoder received the wrong symbol, then the encoder keeps transmitting the same symbol until $y_{i-1}=x_{i-1}$, i.e., until the decoder received the desired symbol. Only after the decoder receives it properly, the encoder transmits the opposite symbol with probability $a$.
\end{remark}

We implemented the $Q$-graph method (Theorem \ref{theorem:qgraph_LB}) on this channel with $|\cU|=2,3,\dots$ and noticed that increasing this cardinality beyond $2$ does not improve the $Q$-graph lower bound. Thus, we suspect that for this channel, the auxiliary RVs in \eqref{eq:capTrunc} have cardinality $|\cU|=2$. Fixing $|\cU|=2$, we compared the resulting achievable rates of all the strategy functions $f:\cU \times \cS \to \cX$ via the DP and $Q$-graph methods, and discovered empirically that the optimal function with the greatest rates is, in fact, $x=u^+$, for any arbitrary $\eta \in [0,0.5]$ that was examined. 

After discovering the optimality of the $x=u^+$ function assuming $|\cU|=2$, we utilized the DP method (by the VIA) and the $Q$-graph method with $|\cQ|=4$ to evaluate numerical achievable rates for $\eta\in [0,0.5]$, denoted by $R_{\text{DP}}$ and $R_{4-\text{node}}$, respectively, as given in Fig. \ref{fig:capNoisyIsing}. For the latter, we identified that the optimal $4$-node $Q$-graph is the same $Q$-graph for the original Ising channel previously shown in Fig. \ref{fig:Qgraph_ising}. It can be shown that $R_{\text{DP}}$ and $R_{4-\text{node}}$ are approximate for all $\eta\in [0,0.5]$, and $R_{\text{Analytic}}$ is approximate to both of them for $\eta \in [0.15,0.5]$.

We note that if the choice $|\cU|=2, x_i=u_i$ in \eqref{eq:capTrunc}
is indeed optimal, that is, the cardinality of $|\cU|$ is bounded and the optimal strategy function is independent of the state, it implies that the SI known causally at the encoder does not increase the feedback capacity in this example. On the one hand, substituting $u_i$ by $x_i$ in the capacity expression \eqref{eq:capTrunc} gives the achievable rate
\begin{align}
    R_{\text{fb-csi}}= \lim_{N \to \infty} \frac{1}{N}  \max_{\substack{\{P(x_i|x_{i-1},y^{i-1})\}_{i=1}^N}} \sum_{i=1}^{N} I(X_i,X_{i-1};Y_i|Y^{i-1}),  \label{eq:LB_Psx}
\end{align}
which is the feedback capacity of the noisy-Ising($\eta$) channel with SI known causally at the encoder (if the mentioned choice is optimal). On the other hand, it can be shown that \eqref{eq:LB_Psx} characterizes the feedback capacity of any connected FSC without SI where the state depends only on the input, $P_{S^+|X}$, as we can define a new state $x_{i-1}$. That is, the Markov chain $P(y_i|x^i,y^{i-1})=P(y_i|x_i,x_{i-1})= \sum_{s_{i-1}}P_{S^+|X}(s_{i-1}|x_{i-1})Q_{Y|X,S}(y_i|x_i,s_{i-1})$ holds, and the DI between $X^N$ to $Y^N$ reduces to \eqref{eq:LB_Psx}.

\begin{remark}
For each of the given examples, the corresponding choice of $|\cU|$ and $f(\cdot)$ satisfies the Markov chain 
\begin{align}
\label{markovCard}
S_{i-1}-(U_{i-1},Y^{i-1})-U^{i-2},    
\end{align}
which in turn satisfies the Markov chain $Y_i-(U_i,U_{i-1},Y^{i-1})-U^{i-2}$ and renders the DI between $U^N$ and $Y^N$ in the first capacity expression \eqref{eq:capDI} the alternative capacity expression \eqref{eq:capTrunc} (which generally lacks a cardinality bound $|\cU|$), i.e., $I(U^N \to Y^N)=\sum_{i=1}^N I(U_i,U_{i-1};Y_i|Y^{i-1})$. The Markov chain is shown as follows:
\begin{align}
P(y_i|u^{i},y^{i-1})&\stackrel{(a)}=\sum_{s_{i-1}} P(s_{i-1}|u^{i-1},y^{i-1})P(y_i|u^{i},s_{i-1},y^{i-1}) \nn \\
&\stackrel{(b)}=\sum_{s_{i-1}} P(s_{i-1}|u_{i-1},y^{i-1})P(y_i|u^{i},s_{i-1},y^{i-1}) \nn \\
&\stackrel{(c)}=\sum_{s_{i-1}} P(s_i|u_{i-1},y^{i-1})P_{Y|X,S}(y_i|f(u_i,s_{i-1}),s_{i-1}) \nn\\
&=P(y_i|u_i,u_{i-1},y^{i-1}),
\end{align}
where
\begin{enumerate}[label={(\alph*)}]
\item follows from the law of total probability and the Markov chain $U_i-(U^{i-1},Y^{i-1})-S_{i-1}$ implied from the joint distribution in \eqref{eq:joint_distDI};
\item follows from the choice of $|\cU|$ and $f(\cdot)$ inducing
the Markov chain \eqref{markovCard};
\item follows from the channel mode.
\end{enumerate}
\end{remark}

\section{Proof of the Feedback Capacity}
\label{sec:proof_cap}
In this section, we prove Theorem \ref{theorem:capDI}. In the achievability part (Sec. \ref{subsec:achievabilityProof}), we show that any rate less than 
$\lim_{N \to \infty} \frac{1}{N}  \max_{P(u^N||y^{N-1},s_0)} I(U^N \to Y^N)$ is achievable, while in the converse part (Sec. \ref{subsec:converseProof}), we show that rates greater than $\lim_{N \to \infty} \frac{1}{N}  \max_{P(u^N||y^{N-1})} I(U^N \to Y^N)$ are not achievable. Because the former expression is greater than the latter due to the maximization domain that may depend on $s_0$, it will be deduced that \eqref{eq:capDI} characterizes the capacity of the setting. The proof will be concluded by showing the cardinality bound $|\cU|\le|\cX|^{|\cS|}$ as given in Appendix \ref{appendix:cardinality}. In the last part (Sec.~\ref{subsec:computableN}) we explain the derivation of Theorem \ref{theorem:anyN_LB_UB} based on the achievability and converse proofs.
\subsection{Theorem \ref{theorem:capDI} - Proof of Achievability}
\label{subsec:achievabilityProof}
We prove that every rate $R< \lim_{N \to \infty} \frac{1}{N}   \max_{\substack{P(u^N||y^{N-1},s_0)}} I(U^N\rightarrow Y^N)$, where $\cU$ is the set of all strategies, is achievable. The proof 
is established on the feedback capacity expression of FSCs without SI, and comprises of three main steps shown in \eqref{eq:Ach1}-\eqref{eq:Ach3} below.

The feedback capacity of any FSC without SI was shown in \cite{PermuterWeissmanGoldsmith09} to be lower bounded by
\begin{align}
\label{eq:LB_anyFSC}
    C_{\text{fb}}&\ge\lim_{N \to \infty} \frac{1}{N}  {\max_{\substack{P(x^N||y^{N-1})}}} \min_{s_0} I(X^N\rightarrow Y^N|s_0).
\end{align}
Assuming that for such a FSC the encoder is informed of the initial state at the beginning of each communication block, it immediately follows that the feedback capacity is lower bounded by
\begin{align}
    C_{\text{fb}}&\ge\lim_{N \to \infty} \frac{1}{N}  {\max_{\substack{P(x^N||y^{N-1},s_0)}}} \min_{s_0} I(X^N\rightarrow Y^N|s_0). \label{eq:LB_NOcsi}
\end{align}
Based on \eqref{eq:LB_NOcsi}, we conclude that the feedback capacity of the setting, i.e., strongly connected FSCs with SI known causally at the encoder, is lower bounded by
\begin{align}
    C_{\text{fb-csi}}&\ge\lim_{N \to \infty} \frac{1}{N}  {\max_{\substack{P(u^N||y^{N-1},s_0)}}} \min_{s_0} I(U^N\rightarrow Y^N|s_0) \label{eq:Ach1} \\
    &=\lim_{N \to \infty} \frac{1}{N}  {\max_{\substack{P(u^N||y^{N-1},s_0)}}} I(U^N\rightarrow Y^N|S_0) \label{eq:Ach2} \\
    &=\lim_{N \to \infty} \frac{1}{N}  {\max_{\substack{P(u^N||y^{N-1},s_0)}}} I(U^N\rightarrow Y^N), \label{eq:Ach3}
\end{align}
where $\cU$ is the set of all strategies, and for \eqref{eq:Ach2} $P(s_0)$ is arbitrary.
\begin{figure}[t]
\begin{center}
\begin{psfrags}
    \psfragscanon
    \psfrag{E}[][][0.7]{$M$}
    \psfrag{S}[][][0.7]{\begin{tabular}{@{}l@{}}
    $S^{i-1}$
    \end{tabular}}
    \psfrag{A}[\hspace{2cm}][][0.75]{Encoder}
	 \psfrag{F}[\hspace{2cm}][][0.75]{$U_i(M,Y^{i-1})$}
	 \psfrag{W}[\hspace{2cm}][][0.75]{$P(s_i,y_i|x_i,s_{i-1})$}
	 \psfrag{M}[\hspace{2cm}][][0.75]{Finite State Channel}
	 \psfrag{P}[\hspace{2cm}][][0.75]{\textbf{New FSC  $P(s_i,y_i|u_i,s_{i-1})$}}
	 \psfrag{G}[][][0.75]{$Y_i$}
	 \psfrag{C}[\hspace{2cm}][][0.75]{Decoder}
	 \psfrag{R}[\hspace{2cm}][][0.75]{$S_{i-1}$}
	 \psfrag{K}[][][0.75]{$\hat{M}$}
	 \psfrag{H}[\hspace{2cm}][][0.75]{$Y_i$}
	 \psfrag{D}[\hspace{2cm}][][0.75]{Delay}
	 \psfrag{Z}[\hspace{2cm}][][0.75]{$X_i$}
	 \psfrag{J}[\hspace{2cm}][][0.75]{$Y_{i-1}$}
	 \psfrag{L}[\hspace{2cm}][][0.75]{Finite-State Channel}
	 \psfrag{T}[\hspace{2cm}][][0.75]
	 {$x_i=f(u_i,s_{i-1})$}
\includegraphics[scale=0.7]{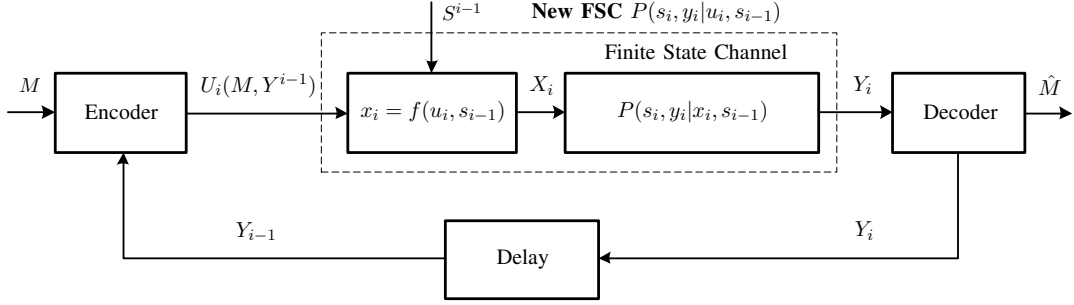}
\caption{An equivalent setting for the lower bound formulation with a new FSC $P_{Y,S^+|U,S}$ where the state is not available at the encoder.} \label{fig:shannon_strategy}
\psfragscanoff
\end{psfrags}
\end{center}
\end{figure}

\textit{Proof of Inequality \eqref{eq:Ach1}:} 
We introduce a new FSC $P_{Y,S^+|U,S}$ with input $U$ (instead of input $X$) without SI (see Fig. \ref{fig:shannon_strategy}), where the encoder is informed of the initial state at the beginning of each communication block. 
Inducing a new FSC without SI can be done as follows. Instead of directly encoding over the input alphabet $\cX$, at time $i$, the transmitter encodes an auxiliary RV $u_i(m,y^{i-1})$ of alphabet $\cU$, and transmits $x_i=f(u_i,s_{i-1})$, where  $f:\cU \times \cS \to \cX$ is a time-invariant function. The induced FSC is $P_{Y,S^+|U,S}=P_{Y,S^+|X=f(U,S),S}$, with 
input $U_i$, current state $S_{i-1}$, output $Y_i$ and new state $S_i$. Thus, using \eqref{eq:LB_NOcsi}, we can replace $X_i$ with $U_i$ in this expression in order to lower bound $C_{\text{fb-csi}}$ as in \eqref{eq:Ach1}.

\textit{Proof of Equality \eqref{eq:Ach2}:} The proof of this step follows in the same manner as of the proof of \cite[Equality (15) and (16)]{Permuter06_trapdoor_submit} (Steps (a)-(d) there) and considering the following modifications. Replace $X_i$ with $U_i$, and in Step (c) use the following lemma instead of \cite[Lemma 2]{Permuter06_trapdoor_submit}. 
\begin{lemma}
\label{lemma:LemmaConnected}
(Analogue to \cite[Lemma 2]{Permuter06_trapdoor_submit}) 
For a connected FSC with SI known causally at the encoder, given any input distribution $P_1(u^N||y^{N-1},s_0)$ and any $s'_0$, there exists an input distribution $P_2(u^N||y^{N-1},s'_0)$ such that
\begin{align}
  \frac{1}{N} | I_{P_1}(U^N\rightarrow Y^N|s_0)-I_{P_2}(U^N\rightarrow Y^N|s'_0)| \le \frac{c}{N},    
\end{align}
where $c$ is a constant that does not depend on $N,s_0,s'_0$. The term $I_{P_1}(U^N\rightarrow Y^N|s_0)$ denotes the DI induced by $P_1(u^N||y^{N-1},s_0)$, where $s_0$ is the initial state. Similarly, $I_{P_2}(U^N\rightarrow Y^N|s_0)$ denotes the DI induced by $P_2(u^N||y^{N-1},s'_0)$, where $s'_0$ is the initial state.
\end{lemma}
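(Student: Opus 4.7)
The plan is to follow the strategy of \cite[Lemma 2]{Permuter06_trapdoor_submit}, adapted to the auxiliary-input (Shannon-strategy) setting. The core idea is to exploit strong connectivity by prepending a bounded-length ``driving'' prefix to $P_1$: starting from $s'_0$, apply a driving policy for $T_0$ steps so as to reach $s_0$ with positive probability, and then imitate $P_1$ for the remaining $N-T_0$ time steps.

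Concretely, strong connectivity guarantees the existence of a finite $T_0$ and a causal driving policy $\tilde{P}(u^{T_0}||y^{T_0-1},s'_0)$ such that $\inf_{s'_0,s_0\in\cS}\Pr(S_{T_0}=s_0\mid S_0=s'_0,\tilde{P})\ge p^\ast>0$, where both $T_0$ and $p^\ast$ depend only on the channel and not on $N$. I would define $P_2$ to use $\tilde{P}$ for times $1,\ldots,T_0$ and then apply $P_1$ with the time index shifted by $T_0$ and the assumed initial state taken to be $s_0$. By the chain rule of directed information,
\begin{equation*}
I_{P_2}(U^N\to Y^N\mid s'_0) = I_{P_2}(U^{T_0}\to Y^{T_0}\mid s'_0) + \sum_{i=T_0+1}^{N} I(U^i;Y_i\mid Y^{i-1},s'_0),
\end{equation*}
where the first term is at most $T_0\log|\cY|$, i.e., $O(1)$ since $T_0$ is a constant. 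For the tail sum, I would condition on $S_{T_0}$: on the event $\{S_{T_0}=s_0\}$, of probability at least $p^\ast$, the Markov structure of the channel makes the tail DI under $P_2$ coincide with $I_{P_1}(U^{N-T_0}\to Y^{N-T_0}\mid s_0)$, and the latter differs from $I_{P_1}(U^N\to Y^N\mid s_0)$ by at most another $T_0\log|\cY|=O(1)$. The matching reverse inequality is obtained symmetrically, by swapping the roles of $(P_1,s_0)$ and $(P_2,s'_0)$ and using connectivity to drive $s_0\to s'_0$.

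The main obstacle will be showing that the complementary event $\{S_{T_0}\ne s_0\}$, which carries positive probability up to $1-p^\ast$, contributes only a channel-dependent constant (rather than $\Theta(N)$) to the DI gap. This is where the classical FSC argument from \cite{PermuterWeissmanGoldsmith09} enters: the per-letter difference $\big|I(U^i;Y_i\mid Y^{i-1},s_0)-I(U^i;Y_i\mid Y^{i-1},s'_0)\big|$ telescopes and can be bounded uniformly in $N$ using the finiteness of $\cS$ and the time-invariance of the channel. Adapting this accounting to the auxiliary-input setting should be straightforward: after substituting $x=f(u,s)$, the induced channel from $U$ to $Y$ is again a strongly connected FSC, and the encoder's causal access to $S$ makes the driving policy realizable.
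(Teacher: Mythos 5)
Your overall strategy (drive to $s_0$ using strong connectivity, then imitate $P_1$) is the right one and matches the paper's, but the fixed-horizon version of it has a genuine gap that you yourself flag and then do not close. With a driving prefix of fixed length $T_0$, the event $\{S_{T_0}\ne s_0\}$ has probability up to $1-p^\ast$, and on that event the shifted copy of $P_1$ is run from the \emph{wrong} state; its tail directed information can then differ from $I_{P_1}(U^N\to Y^N\mid s_0)$ by an amount that grows linearly in $N$, and weighting by the constant probability $1-p^\ast$ still leaves a $\Theta(N)$ discrepancy. The patch you propose -- that the per-letter differences $\bigl|I(U^i;Y_i\mid Y^{i-1},s_0)-I(U^i;Y_i\mid Y^{i-1},s'_0)\bigr|$ ``telescope'' to a constant via \cite{PermuterWeissmanGoldsmith09} -- is circular: uniform-in-$N$ insensitivity of the directed information to the initial state under a fixed policy is essentially the statement being proven, and the relevant result there (Lemma 4) only bounds $|I(U^N\to Y^N)-I(U^N\to Y^N|S_0)|\le\log|\cS|$, i.e., it compares averaging out a random $S_0$ with conditioning on it, not two different deterministic initial states. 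Note also that strong connectivity guarantees communication of states under \emph{some} policy, not under the policy $P_1$, so there is no a priori control on the wrong-state branch.

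The paper's proof (following \cite[Lemma~2]{Permuter06_trapdoor_submit}) avoids this by replacing the fixed horizon with a stopping time: the driving policy $\{P(x_i|s_{i-1})\}_{i=1}^{T}$ is applied \emph{repeatedly} until the first time $L$ at which the state actually equals $s_0$, and only then does $P_2$ start imitating $P_1$ with a shift of $L$. Since each length-$T$ attempt succeeds with probability at least $p^\ast>0$, $L$ has a geometrically decaying tail and $\mathbb{E}[L]=O(1)$, so the entire directed-information loss is confined to a prefix of bounded expected length; there is no failure branch to account for. Two implementation details you gesture at but should make explicit: during the driving phase the encoder uses only the $|\cX|$ constant strategies $u$ (those mapping every state to a fixed $x$), selected according to $P(x_i|s_{i-1})$ using its causal state knowledge, and it is this same causal knowledge that lets it detect the arrival at $s_0$ and switch to imitating $P_1$ at the random time $L$.
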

The proof of Lemma \ref{lemma:LemmaConnected} is given in Appendix \ref{appendix:proof_LemmaConnected}.

\textit{Proof of Equality \eqref{eq:Ach3}:} The last step follows from \cite[Lemma 4]{PermuterWeissmanGoldsmith09}), i.e., for any joint distribution $P(u^n,y^n,s_0)$, $|I(U^N\to Y^N)-I(U^N \to Y^N|S_0)|\le \log |\cS|$.
Finally, we claim that the limit in \eqref{eq:Ach1} exists as follows. Define $\underline{\tilde{C}}_N \triangleq \frac{1}{N} \max_{P(u^N||y^{N-1},s_0)} \min_{s_0} I(U^N \to Y^N|s_0)$. $\lim \underline{\tilde{C}}_N$ exists due to the super-additive property of the sequence $N(\underline{\tilde{C}}_N-\frac{\log{\cS}}{N})$, which follows from the proof of \cite[Th. 4]{PermuterWeissmanGoldsmith09} when replacing $X_i$ with $U_i$ and conditioning on $s_0$ in the CCD.
\qed
\subsection{Theorem \ref{theorem:capDI} - Proof of Converse}
\label{subsec:converseProof}
Here, we prove that an achievable rate $R$ of any FSC (not necessarily connected) with feedback and SI known casually at the encoder must satisfy $R \le \lim_{N\to\infty} \frac{1}{N} \max_{P(u^N||y^{N-1}),{\{x_i=f(u_i,s_{i-1})\}}_{i=1}^N}
 I(U^N \to Y^N)$.
\begin{proof}
For a fixed sequence of $(2^{nR},n)$ codes such that $P_e^{(n)}\to 0$ as $n \to \infty$, we bound their achievable rate as
\begin{align}
    nR-n\epsilon_n&\stackrel{(a)}\leq I(M;Y^n) \nonumber\\
    &=\sum_{i=1}^n I(M;Y_i|Y^{i-1}) \nonumber\\
    &\stackrel{(b)}= \sum_{i=1}^n I(U_i;Y_i|Y^{i-1}) \nonumber\\
    &\stackrel{(c)}\leq 
    \max_{{\{P(u_i|u_{i-1},y_{i-1}),P(x_i|u_i,s_{i-1})\}}_{i=1}^n} \sum_{i=1}^n I(U_i;Y_i|Y^{i-1}) \nonumber\\
       &\stackrel{(d)}= 
    \max_{{\{P(u_i|u_{i-1},y^{i-1}),P(v_i),x_i=f_i(u_i,v_i,s_{i-1})
        \}}_{i=1}^n} \sum_{i=1}^n I(U_i;Y_i|Y^{i-1}) \nonumber\\
       &\stackrel{(e)}\leq 
    \max_{{\{P(\Tilde{u}_i|\Tilde{u}_{i-1},y^{i-1}),x_i=f_i(\Tilde{u}_i,s_{i-1})
        \}}_{i=1}^n} \sum_{i=1}^n I(\Tilde{U}_i;Y_i|Y^{i-1}) \nonumber\\
     &\stackrel{(f)}= \max_{{\{P(\Tilde{u}_i|\tilde{u}_{i-1},y^{i-1}),x_i=f(s_{i-1},\Tilde{u}_i,i)\}}_{i=1}^n} \sum_{i=1}^n I(\tilde{U}_i;Y_i|Y^{i-1})  \nonumber\\
 &\stackrel{(g)}\leq \max_{{\{P(\dbtilde{u}_i|\dbtilde{u}_{i-1},y^{i-1}),x_i=f(\dbtilde{u}_i,s_{i-1})\}}_{i=1}^n}
 \sum_{i=1}^n I(\dbtilde{U}_i;Y_i|Y^{i-1})  \nonumber \\
 &\stackrel{(h)}\leq \max_{P(\dbtilde{u}^n||y^{n-1}),{\{x_i=f(\dbtilde{u}_i,s_{i-1})\}}_{i=1}^n}
 I(\dbtilde{U}^n \to Y^n),   \label{UB:last_step}
\end{align}
where $\epsilon_n\to 0$ as $n \to \infty$, and
\begin{enumerate}[label={(\alph*)}]
\item follows from Fano's inequality;
\item follows from defining $U_i\triangleq(M,Y^{i-1})$ for every $i \in [1:n]$. This definition satisfies $U_i=(U_{i-1},Y_{i-1})$ and the Markov chain $(Y_i,S_i)-(X_i,S_{i-1})-U_i$ due to the assumption that the channel is a FSC;
\item follows because the objective is determined by $\{P(u_i,y_i,s_i)\}_{i=1}^n$ due to the definition of $U_i$ and from the following lemma, whose proof appears in Appendix \ref{appendix:lemUpperBound}:
\begin{lemma}
\label{lem:UpperBound}
For any $k$, $P(u_k,y_k,s_k)$ is determined by $\{P(u_i|u_{i-1},y^{i-1})P(x_i|u_i,s_{i-1})\}_{i=1}^k$;
\end{lemma}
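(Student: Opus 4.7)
I would prove Lemma~\ref{lem:UpperBound} by induction on $k$, establishing the slightly stronger statement that the joint $P(u_k, y^k, s_k)$ is a function of $\{P(u_i|u_{i-1},y^{i-1}), P(x_i|u_i,s_{i-1})\}_{i=1}^k$ together with the fixed data $P(s_0)$ and the channel law $P_{Y,S^+|X,S}$. The desired marginal $P(u_k, y_k, s_k)$ is then obtained by summing out $y^{k-1}$. The stronger form is needed because the update step conditions on the full history $y^k$ through the factor $P(u_{k+1}|u_k, y^k)$.

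\textbf{Base case ($k=1$).} A direct calculation gives
\begin{equation*}
P(u_1, y_1, s_1) \;=\; \sum_{s_0, x_1} P(s_0)\, P(u_1)\, P(x_1 | u_1, s_0)\, P_{Y,S^+|X,S}(y_1, s_1 | x_1, s_0),
\end{equation*}
where $P(u_1)$ is the $i=1$ instance of the conditional $P(u_i|u_{i-1},y^{i-1})$ (with $u_0, y^0$ empty). Every factor on the right is either fixed or is one of the specified conditionals.

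\textbf{Inductive step.} Applying the chain rule and three Markov reductions yields
\begin{equation*}
P(u^{k+1}, x^{k+1}, s_0^{k+1}, y^{k+1}) \;=\; P(u^k, x^k, s_0^k, y^k)\, P(u_{k+1}|u_k, y^k)\, P(x_{k+1}|u_{k+1}, s_k)\, P_{Y,S^+|X,S}(y_{k+1}, s_{k+1}|x_{k+1}, s_k),
\end{equation*}
using (i) the definitional recursion $U_{k+1} = (U_k, Y_k)$, which makes $P(u_{k+1}|u^k, x^k, s_0^k, y^k)$ collapse to $P(u_{k+1}|u_k, y^k)$; (ii) the stipulated form $P(x_{k+1}|\cdot) = P(x_{k+1}|u_{k+1}, s_k)$ of the lemma; and (iii) the FSC Markov property \eqref{eq:BasicFSCMarkov}. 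Marginalizing out everything except $(u_{k+1}, y^{k+1}, s_{k+1})$, the first factor collapses to $P(u_k, y^k, s_k)$, which is determined by the inductive hypothesis; the remaining factors are the new conditionals at step $k+1$, so $P(u_{k+1}, y^{k+1}, s_{k+1})$ is determined as well.

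\textbf{Expected obstacle.} The only delicate point is justification of clause (ii): in the actual code, the encoder $f_i(m, s_0^{i-1}, y^{i-1})$ may depend on the entire state history $s_0^{i-1}$, so a literal factorization of the \emph{true} joint with $P(x_{k+1}|u_{k+1}, s_k)$ in that slot is not immediate. For the lemma as stated, however, one interprets the specified conditionals as \emph{given} data defining a factorized joint, whence the conclusion follows tautologically from the construction. In the converse argument of Section~\ref{subsec:converseProof}, this is exactly what is required at step (c) to pass from the true objective to a supremum over parameters of the claimed form, since the optimization allows any such choice of $\{P(u_i|u_{i-1},y^{i-1}), P(x_i|u_i,s_{i-1})\}$.
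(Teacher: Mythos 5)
Your overall strategy---induction on $k$, chain rule, and the Markov reductions coming from $U_{i}=(U_{i-1},Y_{i-1})$ and the FSC property \eqref{eq:BasicFSCMarkov}---is the same as the paper's. (A side remark: your ``stronger'' hypothesis on $P(u_k,y^k,s_k)$ is not actually stronger, since $Y^{k-1}$ is a deterministic function of $U_k=(M,Y^{k-1})$; the paper inducts directly on $P(u_k,y_k,s_k)$, which carries the same information.)

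The genuine problem is in your inductive step and in how you resolve the ``expected obstacle.'' The displayed factorization of the \emph{full} joint $P(u^{k+1},x^{k+1},s_0^{k+1},y^{k+1})$ with the factor $P(x_{k+1}\mid u_{k+1},s_k)$ is false for the code-induced distribution: there the true conditional is $\mathbbm{1}\{x_{k+1}=f_{k+1}(m,s_0^{k},y^{k})\}$, which depends on the entire state history $s_0^{k-1}$ beyond $s_k$. You correctly identify this, but your fix---reinterpreting the specified conditionals as ``given data defining a factorized joint,'' so that the lemma ``follows tautologically''---destroys the lemma's purpose. In step (c) of the converse, one needs that the \emph{true} $P(u_k,y_k,s_k)$ (from which the objective is computed) is a function of the conditionals \emph{induced by the true code}; a statement about an artificially constructed factorized joint would not let you pass to the maximization. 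The correct repair, which is what the paper does, is to marginalize before factorizing: write
\begin{equation*}
P(u_k,y_k,s_k)=\sum_{s_{k-1}}P(u_{k-1},y_{k-1},s_{k-1})\sum_{x_k}P(u_k\mid u_{k-1},y_{k-1})\,P(x_k\mid u_k,s_{k-1})\,P(y_k,s_k\mid x_k,s_{k-1}),
\end{equation*}
where $P(x_k\mid u_k,s_{k-1})$ is the induced conditional obtained by averaging over $s_0^{k-2}$ given $(u_k,s_{k-1})$, and the last factor reduces to $P_{Y,S^+|X,S}(y_k,s_k\mid x_k,s_{k-1})$ by \eqref{eq:BasicFSCMarkov} (via the tower property) \emph{regardless} of the hidden dependence of $X_k$ on $s_0^{k-2}$. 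This one-step identity for the marginal is exactly where the content of the lemma lies---it is not a tautology---and with it your induction closes because $u_k=(u_{k-1},y_{k-1})$ makes $P(u_k,s_{k-1})=P(u_{k-1},y_{k-1},s_{k-1})$ available from the previous step.
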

\item follows from the Functional Representation Lemma \cite[p.~626]{el2011network}, i.e., for every $i \in [1:n]$ there exists a RV $V_i$, such that $X_i$ can be represented as a function of $(U_i,S_{i-1},V_i)$, where $V_i$ is of cardinality $|\mathcal{V}_i|\leq |\cU_i \times \cS|(|\cX|-1)+1$, such that $V_i$ is independent of $(U_i,S_{i-1})$, and the Markov chain $(Y_i,S_i)-(U_i,S_{i-1},X_i)-V_i$ holds (then $(Y_i,S_i)-(X_i,S_{i-1})-(V_i,U_i)$ holds as well),
and from the following lemma:
\begin{lemma}
\label{lem:vtilde} 
For any $k$, $P(u_k,y_k,s_k)$ is determined by $\{P(u_i|u_{i-1},y^{i-1})P(v_i)x_i(v_i,u_i,s_{i-1})\}_{i=1}^k$;
\end{lemma}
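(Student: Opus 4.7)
The plan is to reduce Lemma \ref{lem:vtilde} to the already-stated Lemma \ref{lem:UpperBound}, rather than redoing the inductive marginalization argument from scratch. Lemma \ref{lem:UpperBound} already guarantees that $P(u_k, y_k, s_k)$ is determined (together with the fixed channel law $P_{Y,S^+|X,S}$ and the initial state distribution $P(s_0)$, which are part of the problem data) by the family $\{P(u_i|u_{i-1}, y^{i-1}) P(x_i|u_i, s_{i-1})\}_{i=1}^k$. Thus it is enough to show that each $P(x_i|u_i, s_{i-1})$ is determined by the pair $\bigl(P(v_i), f_i(\cdot,\cdot,\cdot)\bigr)$, where the function satisfies $x_i = f_i(v_i, u_i, s_{i-1})$.

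First I would invoke the structural conclusions of step (d) in the converse chain: by the Functional Representation Lemma, $V_i$ is drawn from $P(v_i)$ independently of $(U_i, S_{i-1})$, and $X_i = f_i(V_i, U_i, S_{i-1})$ is a deterministic function. Consequently, marginalizing out $V_i$ in the extended joint yields
\begin{align*}
P(x_i \mid u_i, s_{i-1})
= \sum_{v_i} P(v_i \mid u_i, s_{i-1}) \, \mathbbm{1}\{x_i = f_i(v_i, u_i, s_{i-1})\}
= \sum_{v_i} P(v_i) \, \mathbbm{1}\{x_i = f_i(v_i, u_i, s_{i-1})\},
\end{align*}
so $P(x_i \mid u_i, s_{i-1})$ is indeed a function of $P(v_i)$ and $f_i$ alone. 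Substituting this identity into Lemma \ref{lem:UpperBound} proves the claim: $P(u_k, y_k, s_k)$ is determined by $\{P(u_i|u_{i-1}, y^{i-1}) P(v_i) x_i(v_i, u_i, s_{i-1})\}_{i=1}^k$.

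The main subtlety to verify, which I expect to be the only real (and minor) obstacle, is that the independence and Markov structures promised by the Functional Representation Lemma continue to hold \emph{within} the full joint distribution used in Lemma \ref{lem:UpperBound}, not merely in isolation at time $i$. Concretely, one needs $V_i$ to be independent of the prior history $(U^{i-1}, V^{i-1}, X^{i-1}, S_0^{i-1}, Y^{i-1})$ and not just of $(U_i, S_{i-1})$, so that the factor $P(v_i)$ cleanly separates out. This is automatic if the $V_i$'s are drawn afresh and independently of all previously generated randomness at each step of the construction, which is the standard interpretation of the Functional Representation Lemma applied time-by-time. With this independence in hand, the channel Markov property \eqref{eq:BasicFSCMarkov} combined with the deterministic map $X_i = f_i(V_i, U_i, S_{i-1})$ keeps the induced conditional $P(x_i \mid u_i, s_{i-1})$ coincident with the conditional required by Lemma \ref{lem:UpperBound}, and the reduction goes through.
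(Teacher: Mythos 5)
Your reduction is correct. The paper itself omits the proof of Lemma \ref{lem:vtilde}, stating only that it is ``similar to that of Lemma \ref{lem:UpperBound}'' --- i.e., the intended argument is to rerun the induction of Appendix \ref{appendix:lemUpperBound} with the factor $P(x_k|u_k,s_{k-1})$ replaced by $\sum_{v_k}P(v_k)\mathbbm{1}\{x_k=f_k(v_k,u_k,s_{k-1})\}$ inside the sum over $x_k$. You instead use Lemma \ref{lem:UpperBound} as a black box and observe that the data $\bigl(P(v_i),f_i\bigr)$ determine the conditional $P(x_i|u_i,s_{i-1})=\sum_{v_i}P(v_i)\,\mathbbm{1}\{x_i=f_i(v_i,u_i,s_{i-1})\}$ via the independence $V_i\perp(U_i,S_{i-1})$ supplied by the Functional Representation Lemma. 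The mathematical content is the same identity in both cases; your organization buys brevity and avoids duplicating the induction, at the cost of having to note (as you correctly do) that the freshly drawn $V_i$ is independent of the entire past, not merely of $(U_i,S_{i-1})$, so that the conditional law of $X_i$ given the history collapses to $P(x_i|u_i,s_{i-1})$ and the factorization assumed in Lemma \ref{lem:UpperBound} is preserved. Since the $V_i$ are generated independently at each step in the construction of step (d), this holds, and your proof goes through.
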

The proof of Lemma \ref{lem:vtilde} is similar to that of Lemma \ref{lem:UpperBound} and therefore it is omitted;
\item follows from defining $\Tilde{U_i}\triangleq(U_i,V_i)$, from the fact that \\$P(\Tilde{u}_i|\Tilde{u}_{i-1},y^{i-1})=P(v_i|u_{i-1},v_{i-1},y^{i-1})P(u_i|v_i,u_{i-1},v_{i-1},y^{i-1})$. $P(v_i)$ and $P(u_i|v_i,u_{i-1},v_{i-1},y^{i-1})$ are sub-domains of $P(v_i|u_{i-1},v_{i-1},y^{i-1})$ and $P(u_i|u_{i-1},y^{i-1})$, respectively; the mutual information increases since conditioning reduces entropy;
\item follows since there exists an invariant function $f(\Tilde{u},s,i)=f_i(\Tilde{u}_i,s_{i-1})$.
\item follows from defining $\dbtilde{U}_i=(\Tilde{U}_i,T=i)$, where $T$ represents the time index.
\item follows since conditioning reduces entropy;
\end{enumerate}
Finally, we divide \eqref{UB:last_step} by $n$, rename $\dbtilde{U}$ by $U$, and obtain
\begin{align}
\label{eq:UBepsilonN}
R-\epsilon_n&\le \frac{1}{n} \max_{P(u^n||y^{n-1}),{\{x_i=f(u_i,s_{i-1})\}}_{i=1}^n}
 I(U^n \to Y^n),
\end{align}
which completes the proof of converse by tending $n\to \infty$.
\end{proof}

\subsection{Proof of Theorem \ref{theorem:anyN_LB_UB}}
\label{subsec:computableN}
First, we show the LHS of \eqref{eq:corLBUB}, then we show the RHS of it.

\textit{Proof of the LHS of \eqref{eq:corLBUB}:} Inducing a new FSC $P_{Y,S^+|U,S}$ with input $U$ without SI as was explained in the proof of Ineq. \eqref{eq:Ach1} and using the feedback capacity expression for such a setting as given in \eqref{eq:LB_anyFSC} imply that 
  \begin{align}
      C_{\text{fb-csi}}\ge \lim_{N\to \infty} \underline{C}_N=\sup_{N} \left[ \underline{C}_N-\frac{\log{\cS}}{N}\right],
      \label{eq:corLHS_LBUB}
  \end{align}
  where the equality above follows from the super-additive property of the sequence $N(\underline{C}_N-\frac{\log{\cS}}{N})$, which follows directly from the proof of 
  \cite[Th. 4]{PermuterWeissmanGoldsmith09} when replacing $X_i$ with $U_i$.
  
\textit{Proof of the RHS of \eqref{eq:corLBUB}:}
Combining \eqref{eq:UBepsilonN} and Lemma \ref{lemma:LemmaConnected} implies that 
\begin{align}
      C_{\text{fb-csi}}&\le \frac{1}{N} \max_{P(u^N||y^{N-1})} I(U^N \to Y^N|S_0) \nn\\
        &\le \lim_{N\to \infty} \overline{C}_N \nn\\
        &=\inf_{N} \left[ \underline{C}_N+\frac{\log{\cS}}{N}\right],
        \label{eq:corRHS_LBUB}
\end{align}
where the equality follows from the sub-additive property of the sequence $N(\overline{C}_N+\frac{\log{\cS}}{N})$, which follows directly from the proof of \cite[Th. 16]{PermuterWeissmanGoldsmith09} (by replacing $X_i$ with $U_i$).

Finally, \eqref{eq:corLBUB} is a direct consequence of \eqref{eq:corLHS_LBUB} and \eqref{eq:corRHS_LBUB}. 
\qed

\section{Conclusions}
\label{sec:conclusions}
The feedback capacity of connected FSCs with SI available causally at the encoder was derived. It is expressed as two equivalent multi-letter expressions which consist of a sequence of auxiliary RVs with memory. The first expression consists of DI and have a finite cardinality bound for the set of the auxiliary RVs, but the expression is complicated. The second expression does not have a finite cardinality bound, yet the expression is more simple since the auxiliary RVs constitute a first-order Markov process given the past outputs. Although both of the capacity expressions are multi-letter, we utilized them to provide computable lower and upper bounds on the feedback capacity. 
First, sequences of lower and upper bounds were given, i.e., for any integer $N$, finite-letter, computable bounds are obtained. Furthermore, by fixing a finite cardinality $|\cU|$ in the second multi-letter capacity expression, which renders it an achievable rate, two methods were given to compute lower bounds. The first method was a DP formulation of the achievable rate expression, and the second was a single-letter $Q$-graph lower bound. These methods were demonstrated on several examples and were shown to be useful in deriving achievable rates both analytically and numerically, and they were shown to be tight in some of the examples.

\appendices
\section{Proof of Lemma \ref{lem:DP_formulation}}
\label{appendix:lem_DP_proof}
In this proof, we show three parts regarding the state evolution, disturbance and reward of the DP.
\subsubsection{State evolution}
We shall prove that given a policy $\pi=(\mu_1,\mu_2,\dots)$, the new DP state is a time-invariant function of the current DP state, action and disturbance, i.e., there exists a function $F$ such that $z_i=F(z_{i-1},a_i,w_i)$, and in our case
$\beta_i=F(\beta_{i-1},a_i,y_i)$. For any $u_i,s_i \in \cU \times \cS$,
\begin{align}
\beta_i(u_i,s_i) &=P(u_i,s_i|y^i) \nonumber \\
&=\sum_{u_{i-1},s_{i-1}}P(u_i,s_i,u_{i-1},s_{i-1}|y^t) \nonumber \\ 
&=\frac{\sum\limits_{u_{i-1},s_{i-1}}P(u_i,s_i,u_{i-1},s_{i-1},y_i|y^{i-1})}{P(y_i|y^{i-1})} \nonumber\\
&=\frac{\sum\limits_{u_{i-1},s_{i-1}}P(u_i,s_i,u_{i-1},s_{i-1},y_i|y^{i-1})}{\sum\limits_{u'_i,u_{i-1},s_{i-1}}P(u'_i,u_{i-1},s_{i-1},y_i|y^{i-1})} \nonumber\\
&\stackrel{(a)}=\frac{\sum_{u_{i-1},s_{i-1}}P(u_{i-1},s_{i-1}|y^{i-1})P(u_i|u_{i-1},y^{i-1})P_{Y,S^+|X,S}(y_i,s_i|f(u_i,s_{i-1}),s_{i-1})}{\sum_{u_{i-1},u'_i,s_{i-1}} P(u_{i-1},s_{i-1}|y^{i-1})P(u'_i|u_{i-1},y^{i-1})P_{Y|X,S}(y_i|f(u'_i,s_{i-1}),s_{i-1})} \label{eq:BCJR} \\
&=\frac{\sum_{u_{i-1},s_{i-1}}\beta_{i-1}(u_{i-1},s_{i-1})a_i(u_i,u_{i-1},y^{i-1})P_{Y,S^+|X,S}(y_i,s_i|f(u_i,s_{i-1}),s_{i-1})}{\sum_{u_{i-1},u'_i,s_{i-1}}\beta_{i-1}(u_{i-1},s_{i-1})a_i(u_i',u_{i-1},y^{i-1})P_{Y|X,S}(y_i|f(u'_i,s_{i-1}),s_{i-1})} \label{eq:evolution}\text{,}
\end{align}
where (a) follows from the channel model and the Markov chain $U_i-(U_{i-1},Y^{i-1})-S_{i-1}$ implied from the joint distribution in \eqref{eq:joint_distTrunc}. Hence, there exists such a function $F$, i.e., $\beta_i=F(\beta_{i-1},a_i,y_i)$. \qed

\subsubsection{Disturbance}
We need to show that the disturbance satisfies the Markov chain $y_i-(\beta_{i-1},a_i)-(\beta^{i-2},a^{i-1},y^{i-1})$. 
Consider:
\begin{align}
P(y_i|\beta^{i-1},y^{i-1},a^i) &=\sum_{u_{i-1},s_{i-1},u_i} P(y_i,u_i,u_{i-1},s_{i-1}|\beta^{i-1},y^{i-1},a^i) \nonumber \\
&\stackrel{(a)}= \sum_{u_{i-1},s_{i-1},u_i} P(u_{i-1},s_{i-1}|\beta_{i-1},a_i)P(u_i|u_{i-1},s_{i-1},\beta_{i-1},a_i)P(y_i|u_i,u_{i-1},s_{i-1},\beta_{i-1},a_i) \nonumber \\
&= \sum_{u_{i-1},s_{i-1},u_i} P(y_i,u_i,u_{i-1},s_{i-1}|\beta_{i-1},a_i)\nonumber \\
&= P(y_i|\beta_{i-1},a_i)\nonumber {,}
\end{align}
where (a) follows due to $x_i=f(u_i,s_{i-1})$ and the channel model. \qed
\subsubsection{Reward}
Finally, we show that the reward is a time-invariant function of the current DP state and action.
Note that the reward depends only on the joint distribution $P(u_i,u_{i-1},y_i|y^{i-1})$. Consider:
\begin{align}
P(u_i,u_{i-1},y_i|y^{i-1})&=\sum_{s_{i-1}}P(u_i,u_{i-1},y_i,s_{i-1}|y^{i-1}) \nonumber \\
&\stackrel{(a)}=\sum_{s_{i-1}}P(u_{i-1},s_{i-1}|y^{i-1})P(u_i|u_{i-1},y^{i-1})P(y_i|f(u_i,s_{i-1}),s_{i-1}) \nonumber \\
    &= \sum_{s_{i-1}} \beta_{i-1}(u_{i-1},s_{i-1}) a_i(u_i,u_{i-1},y^{i-1})P(y_i|f(u_i,s_{i-1}),s_{i-1}) \label{eq:dp_reward} \text{,}
\end{align}
where (a) follows from the Markov chain $U_i-(U_{i-1},Y^{i-1})-S_{i-1}$ and the channel model. Therefore, $g(\beta_{i-1},a_i)=I(U_i,U_{i-1};Y_i|\beta_{i-1},a_i)\text{.}$ \qed

\section{Proof of Lemma \ref{lem:LB_supliminf}}
\label{appendix:LB_supliminf}
The following lemma is technical and will help to establish the proof of Lemma \ref{lem:LB_supliminf}.
\begin{lemma}
\label{lem:limInfHaim}
(Analogue to \cite[Lemma 4]{Permuter06_trapdoor_submit}) 
For any FSC with SI known causally at the encoder, the following equality holds:
\begin{align}
     \lim_{N\to \infty} \frac{1}{N} \max_{P(u^N||y^{N-1},s_0)} \min_{s_0} I(U^N \to Y^N|s_0)=
     \sup_{\{P(u_i|u^{i-1},y^{i-1},s_0)\}_{i\ge 1}} \liminf_{N\to\infty}   \frac{1}{N} \min_{s_0} I(U^N \to Y^N|s_0). \label{eq:limInfCap}
\end{align}
\end{lemma}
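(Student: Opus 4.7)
The lemma has the familiar shape of ``limit of per-block maxima equals supremum over non-anticipative policies of the liminf per-block rate,'' so my plan is a two-sided argument: the easy direction (LHS $\ge$ RHS) by restriction, and the hard direction (LHS $\le$ RHS) by a block-concatenation construction. Throughout, write $C_N^{\min} \triangleq \tfrac{1}{N}\max_{P(u^N\|y^{N-1},s_0)}\min_{s_0} I(U^N\to Y^N|s_0)$. The limit $\lim_N C_N^{\min}$ exists by the super-additivity of $N(C_N^{\min}-\tfrac{\log|\cS|}{N})$, which was already invoked in the achievability proof (Section~\ref{subsec:achievabilityProof}), so the LHS is a bona fide limit that also equals $\sup_N (C_N^{\min}-\tfrac{\log|\cS|}{N})$.

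For LHS $\ge$ RHS, I fix any infinite-horizon policy $\pi=\{P(u_i|u^{i-1},y^{i-1},s_0)\}_{i\ge 1}$. Its first $N$ conditional kernels define a particular $P_\pi(u^N\|y^{N-1},s_0)$, hence
\begin{equation*}
\min_{s_0} I_\pi(U^N\to Y^N|s_0) \le \max_{P(u^N\|y^{N-1},s_0)}\min_{s_0} I(U^N\to Y^N|s_0) = N\,C_N^{\min}.
\end{equation*}
Dividing by $N$, taking the liminf in $N$, and then the supremum over $\pi$ gives the inequality.

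For LHS $\le$ RHS (the main obstacle), I use a block construction. Given $\epsilon>0$, choose $N_0$ with $C_{N_0}^{\min}\ge R-\epsilon$, where $R\triangleq\lim_N C_N^{\min}$, and let $P^{*}_{N_0}(u^{N_0}\|y^{N_0-1},s_0)$ be a maximizer. Define an infinite-horizon policy $\pi^{*}$ that, at times $i\in[(k-1)N_0+1:kN_0]$ for $k=1,2,\dots$, uses the kernels of $P^{*}_{N_0}$ driven only by the within-block history $(u_{(k-1)N_0+1}^{i-1},y_{(k-1)N_0+1}^{i-1})$ and by a nominal ``block-initial state'' $\tilde s_0$ that is declared by the policy itself rather than read from the channel. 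Because the inner min over $s_0$ in the definition of $C_{N_0}^{\min}$ gives a guarantee that is \emph{uniform} over the block-initial state, the actual (random) state $S_{(k-1)N_0}$ at the start of block $k$ plays the role of an adversarial $s_0$ for that block, and the block-wise rate is still at least $N_0 C_{N_0}^{\min}$. Concretely, using the chain rule for directed information over the $K$ blocks and conditioning on the state at each block boundary,
\begin{align*}
I_{\pi^{*}}(U^{KN_0}\to Y^{KN_0}|s_0) \ge \sum_{k=1}^{K} \mathbb{E}\!\left[\min_{s}I(U_{(k-1)N_0+1}^{kN_0}\to Y_{(k-1)N_0+1}^{kN_0}\mid S_{(k-1)N_0}=s)\right] - K\log|\cS|,
\end{align*}
where the $-K\log|\cS|$ term absorbs the at-most $\log|\cS|$ gap incurred each time one switches between conditioning on $s_0$ and on an intermediate state (cf.\ \cite[Lemma 4]{PermuterWeissmanGoldsmith09}, invoked already in \eqref{eq:Ach3}). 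Each bracketed summand is at least $N_0 C_{N_0}^{\min}\ge N_0(R-\epsilon)$, so dividing by $KN_0$ and sending $K\to\infty$ yields $\liminf_{N\to\infty}\tfrac{1}{N}\min_{s_0}I_{\pi^{*}}(U^N\to Y^N|s_0)\ge R-\epsilon-\tfrac{\log|\cS|}{N_0}$.

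The hard part is making the block construction \emph{legal} as an infinite-horizon kernel of the prescribed form $P(u_i|u^{i-1},y^{i-1},s_0)$ (no access to intermediate true states) while still being able to treat the random block-boundary state as an adversarial initial condition; this is exactly where the uniformity supplied by $\min_{s_0}$ is crucial, together with the strong-connectivity hypothesis which ensures no block is ``stuck'' at an unreachable state. Letting $\epsilon\downarrow 0$ and $N_0\to\infty$ then gives $\sup_{\pi}\liminf_{N}\tfrac{1}{N}\min_{s_0}I_\pi(U^N\to Y^N|s_0)\ge R$, completing the reverse inequality and hence the lemma. The rest is routine bookkeeping of the $O(1/N_0)$ boundary corrections, in close analogy to \cite[Lemma 4]{Permuter06_trapdoor_submit}.
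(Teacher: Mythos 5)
Your skeleton — the easy direction by restricting an infinite-horizon policy to its first $N$ kernels, the hard direction by concatenating the $N_0$-block maximizer and paying $\log|\cS|$ per block boundary, with Fekete/super-additivity identifying the limit with the supremum — is exactly the argument the paper points to: its proof of Lemma~\ref{lem:limInfHaim} is a one-line reduction to \cite[Lemma 4]{Permuter06_trapdoor_submit}, whose engine is the super-additivity of $N(\underline{\tilde{C}}_N-\tfrac{\log|\cS|}{N})$ obtained from \cite[Th.~4]{PermuterWeissmanGoldsmith09} with $X_i$ replaced by $U_i$. So the route is the intended one. However, there is a genuine gap at precisely the step you flag as ``the hard part'' and then do not carry out. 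The maximizer $P^{*}_{N_0}(u^{N_0}\|y^{N_0-1},s_0)$ is a \emph{family} of kernels indexed by the block-initial state, and the guarantee $\min_{s_0} I_{P^{*}}(U^{N_0}\to Y^{N_0}|s_0)=N_0C^{\min}_{N_0}$ is a \emph{matched} guarantee: it is uniform over $s_0$ only when the section being used agrees with the true initial state. Your concatenated policy must lie in the class $\{P(u_i|u^{i-1},y^{i-1},s_0)\}$, which observes the global $s_0$ but not the random block-boundary state $S_{(k-1)N_0}$; plugging in a ``nominal'' $\tilde s_0$ produces a mismatched pair, and $I_{P^{*}_{\tilde s_0}}(\,\cdot\mid S_{(k-1)N_0}=s)$ for $s\neq\tilde s_0$ is \emph{not} lower-bounded by $N_0C^{\min}_{N_0}$, so your per-block estimate is unjustified. (In \cite{Permuter06_trapdoor_submit} this issue does not arise because the channel is unifilar, so the block-boundary state is a deterministic function of the available history and the matching section can be selected legally; the general repair, and what the super-additivity statement in \cite[Th.~4]{PermuterWeissmanGoldsmith09} actually concatenates, is to run the later blocks with a maximizer of the \emph{unconditioned} class $P(u^{N_0}\|y^{N_0-1})$, whose $\min_{s_0}$ guarantee is uniform and policy-independent, and then account for the resulting loss.)

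A second, smaller point: the lemma is stated for \emph{any} FSC, with no connectivity hypothesis, and none is needed — the $\min_{s_0}$ already absorbs worst-case block-boundary states. Your appeal to strong connectivity to ensure that ``no block is stuck at an unreachable state'' is therefore both unavailable under the stated hypotheses and unnecessary; connectivity enters only later, in Lemma~\ref{lemma:LemmaConnected}, when the $\min_{s_0}$ is removed.
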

The proof of Lemma \ref{lem:limInfHaim} follows directly from the proof of
\cite[Lemma 4]{Permuter06_trapdoor_submit}, which relies on the super-additive property of the sequence $N(\underline{C}_N-\frac{\log{\cS}}{N})$. In our case, the super-additive property follows straightforwardly from the proof of \cite[Th. 4]{PermuterWeissmanGoldsmith09} by replacing $X_i$ with $U_i$.

\textit{Proof of Lemma \ref{lem:LB_supliminf}:} 
The proof consists of two parts. In the first part we show that 
\begin{equation}
    C_{\text{fb-csi}} = \sup_{{\{P(u_i|u^{i-1},y^{i-1})\}}_{i\geq1}} \liminf_{N\to\infty} \frac{1}{N} I(U^N \to Y^N), \label{eq:supliminfDirected}
\end{equation}
where $|\cU|=|\cX|^{|\cS|}$. In the second part, we use a transformation in order to render \eqref{eq:supliminfDirected} to \eqref{eq:supliminfcapCSI} for completing the proof. 

In the capacity proof of Theorem \ref{theorem:capDI} (Eqs. \eqref{eq:Ach1}-\eqref{eq:Ach3} in the achievability proof and Ineq. \eqref{eq:UBepsilonN} in the converse proof), it is shown that for the case of a connected FSC with SI known causally at the encoder, the LHS of \eqref{eq:limInfCap} characterizes the capacity and is equal to $\lim_{N \to \infty} \frac{1}{N}  \max_{P(u^N||y^{N-1})} I(U^N \to Y^N)$. By following the same arguments of \eqref{eq:Ach1}-\eqref{eq:Ach3} and \eqref{eq:UBepsilonN}, an equality between the RHS of \eqref{eq:limInfCap} and the RHS of \eqref{eq:supliminfDirected} also holds. Therefore, using Lemma \ref{lem:limInfHaim} we obtain \eqref{eq:supliminfDirected}.

It is possible to render \eqref{eq:supliminfDirected} to \eqref{eq:supliminfcapCSI}, i.e., with $|\cU_i|= |\cX|^{{|\cS|}^i}$ instead of $|\cU|= |\cX|^{|\cS|}$ and $(U_i,U_{i-1})$ shown in the expression instead of $U^i$, simply by the transformation given after Theorem \ref{theorem:capTrunc} in Section~\ref{sec:main_results}. 
\qed

\section{The SS Does Not Increase The Capacity of the I.I.D. $ZS$-Channel}
\label{appendix:CSInotIncreaseZScapacity}
\begin{theorem}
\label{theorem:noisyPOSTcap}
For the i.i.d. $ZS$-channel, SI available causally at the encoder does not increase the capacity.
\end{theorem}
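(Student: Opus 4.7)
The plan is to establish the two directions $C \le C_{\text{CSI-E}} \le C$, where $C = 1 - H(1/4)$ denotes the no-SI capacity and $C_{\text{CSI-E}}$ denotes the capacity with causal SI at the encoder. First I would compute the marginal channel
\begin{equation*}
P_{Y|X}(y|x) = \tfrac{1}{2}P_{Y|X,S}(y|x,0) + \tfrac{1}{2}P_{Y|X,S}(y|x,1),
\end{equation*}
which for the $ZS$-topology of Fig.~\ref{fig:ZS-channel} yields a BSC$(1/4)$. Hence $C = 1 - H(1/4)$, and since ignoring the SI is always a legitimate strategy, $C_{\text{CSI-E}} \ge C$.

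For the reverse direction, I would invoke Shannon's strategy formulation $C_{\text{CSI-E}} = \max_{P(U)} I(U;Y)$ with $U$ ranging over the four deterministic maps $f: \cS \to \cX$, labeled $u_1$ (constant $0$), $u_2$ (identity), $u_3$ (negation), $u_4$ (constant $1$). A direct computation gives the induced outputs
\begin{equation*}
P_{Y|U}(1|u_1) = \tfrac{1}{4}, \quad P_{Y|U}(1|u_2) = P_{Y|U}(1|u_3) = \tfrac{1}{2}, \quad P_{Y|U}(1|u_4) = \tfrac{3}{4}.
\end{equation*}
The key observation is that $u_2$ and $u_3$ produce a uniform output, and their conditional output equals the symmetric average of the outputs of $u_1$ and $u_4$, i.e., $P_{Y|U}(\cdot|u_j) = \tfrac{1}{2}P_{Y|U}(\cdot|u_1) + \tfrac{1}{2}P_{Y|U}(\cdot|u_4)$ for $j \in \{2,3\}$.

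The main step is a standard mass-transport argument that eliminates $u_2, u_3$ without reducing $I(U;Y)$. Given any $P(U) = (p_1, p_2, p_3, p_4)$, define $P'(U)$ by $p_1' = p_1 + (p_2+p_3)/2$, $p_4' = p_4 + (p_2+p_3)/2$, and $p_2' = p_3' = 0$. By the mixture identity above, the marginal $P(Y)$ is preserved, so $H(Y)$ is unchanged. On the other hand, since $H(Y|U=u_2) = H(Y|U=u_3) = 1 > H(1/4) = H(Y|U=u_1) = H(Y|U=u_4)$, one checks that
\begin{equation*}
H(Y|U) - H(Y|U)' = (p_2+p_3)\bigl(1 - H(1/4)\bigr) \ge 0,
\end{equation*}
so $I(U;Y)' \ge I(U;Y)$. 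Hence the maximum is attained on distributions supported on $\{u_1, u_4\}$, which is precisely the BSC$(1/4)$ induced on the no-SI setting. Maximizing over such distributions gives $C_{\text{CSI-E}} \le 1 - H(1/4) = C$, completing the proof. The only mild obstacle is bookkeeping in the mass-transport identity, but it follows immediately from the symmetry of the channel under swapping $(x,s) \leftrightarrow (\bar x, \bar s)$.
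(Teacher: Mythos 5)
Your proof is correct and follows essentially the same route as the paper: both start from the Shannon-strategy formula $\max_{P(u)} I(U;Y)$ over the four maps $\cS\to\cX$, move the mass of the two state-dependent strategies onto the two constant strategies (your $p_1'=p_1+(p_2+p_3)/2$, $p_4'=p_4+(p_2+p_3)/2$ is exactly the paper's construction of $P_2(x)$ matching the induced input marginal $P_1(x)$), verify that $P(Y)$ is unchanged, and compute the identical entropy gap $(p_2+p_3)\bigl(1-H(\tfrac14)\bigr)\ge 0$. Your explicit observation that $P_{Y|U}(\cdot|u_j)=\tfrac12 P_{Y|U}(\cdot|u_1)+\tfrac12 P_{Y|U}(\cdot|u_4)$ for the state-dependent strategies is a slightly cleaner way to see why the output marginal is preserved, but the argument is the same.
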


\begin{proof}[Proof of Theorem \ref{theorem:noisyPOSTcap}]
\label{appendix:ProofOfCardinalityNoisyPOST}
Recall the SS capacity formula:  $C_{\text{CSI-E}}=\max_{p(u),x(u,s)}I(U;Y)$. Consider $|\cU|=|\cX|^{|\cS|}=4$ with all possible strategies as detailed in Table \ref{table:strategies}. 
\begin{table}[t]
\caption{All the strategies of binary input and binary state alphabets, $\cS=\cX=\{0,1\}$.} \centering
\label{table:strategies}
\begin{tabular}[b]{||c|c|c|c||}
\hline \hline
$x(u,s)$ & $s=0$ & $s=1$ \\
\hline \hline
$u_0$ & $0$ & $0$ \\
\hline
$u_1$ & $0$ & $1$ \\
\hline
$u_2$ & $1$ & $0$ \\
\hline
$u_3$ & $1$ & $1$ \\
\hline \hline
\end{tabular}
\end{table}
Assume that $I_{P_1}(U;Y)$ is the capacity of this channel with SI available causally at the encoder, induced by some input distribution $P_1(u)$ with the corresponding joint distribution
\begin{align}
    &P_1(u,x,y) =\sum_{s} P(s)P_1(u)\mathbbm{1}\{x=f(u,s)\}P(y|x,s), \nn
\end{align}
We construct an input distribution $P_2(x)$ with the corresponding conditional mutual information satisfying $I_{P_2}(X;Y)=I_{P_1}(U;Y)$ induced by the joint distribution
\begin{align}
    P_2(x,y)&=P_2(x) \sum_s P(s) P(y|x,s), 
\end{align}
that is, $X$ is independent of $S$. Clearly, $I_{P_1}(U;Y)\ge I_{P_2}(X;Y)$; thus, our goal is to show that 
$I_{P_1}(U;Y)\le I_{P_2}(X;Y)$. In the construction of $P_2(x)$, we only demand that it satisfies
\begin{align}
    P_2(x)=P_1(x) \quad \forall x\in \cX, \label{eq:preserveP12}
\end{align}
where $P_1(x)$ is the input distribution induced by $P_1(u)$, and given by 
\begin{align}
    P_1(x)&= \sum_{u,s} P_1(u,s,x)=\sum_{s} P(s) \sum_u P_1(u) \mathbbm{1}\{x=f(u,s)\}. \nn
\end{align}
Hence, for our channel we obtain
\begin{align}
    P_2(X=1)&\triangleq \textstyle \frac{1}{2}[P_1(u_1)+P_1(u_2)]+P_1(u_3) , \label{eq:P2_a}    
\end{align}
From the construction in \eqref{eq:preserveP12}, it follows that the output distributions are also equal, i.e., $P_2(y)=P_1(y)$:

\begin{align}
    P_2(Y=1)&=\sum_{s,x} P(s) P_2(x) P(y|x,s)=\textstyle \frac{1}{4}P_2(x=0)+\frac{3}{4}P_2(x=1)=\frac{1}{2}P_2(x=1)+\frac{1}{4}\nn\\
    &=\textstyle \frac{1}{4}P_1(u_1)+\frac{1}{4}P_1(u_2)+\frac{1}{2}P_1(u_3)+\frac{1}{4} \nn\\
    &= \textstyle \frac{1}{4}P_1(u_0)+\frac{1}{2}P_1(u_1)+\frac{1}{2}P_1(u_2)+\frac{3}{4}P_1(u_3) \nn\\
    &=\sum_{s,u,x} P(s) P_1(u) \mathbbm{1}\{x=f(u,s)\} P(y|x,s) \nn\\
    &=P_1(Y=1)
\end{align}
Consequently, $H_{P_2}(Y)=H_{P_1}(Y)$ hold; thus,
\begin{align}
    I_{P_1}(U;Y)&=H_{P_2}(Y)-H_{P_1}(Y|U) \nn\\
    & \stackrel{(a)} \le H_{P_2}(Y)-H_{P_2 }(Y|X) \nn\\
    &= I_{P_2}(X;Y) , \nn 
\end{align}
where (a) follows from defining 
$q \triangleq  H_{P_1}(Y|U)-H_{P_2}(Y|X)\ge 0$. 
We show that $q\ge 0$ by applying the channel model as follows:
\begin{align}
    H_{P_1}(Y|U)&=\textstyle [P_1(u_0)+P_1(u_3)]H(\frac{1}{4})+P_1(u_1)+P_1(u_2)\nn\\
    &=\textstyle[1-P_1(u_1)-P_1(u_2)]H(\frac{1}{4})+P_1(u_1)+P_1(u_2) \nn\\
    H_{P_2}(Y|X)&=\textstyle P_2(x=0) H(\frac{1}{4})+P_2(x=1) H(\frac{3}{4})=H(\frac{1}{4}).
\end{align}
Hence, we deduce that
\begin{align}
    q&=\textstyle [P_1(u_1)+P_1(u_2)](1-H(\frac{1}{4}))     \ge 0.
\end{align}
To conclude, $I_{P_2}(X;Y)=I_{P_1}(U;Y)$, which implies that causal SI available at the encoder does not increase the capacity of the i.i.d. $ZS$-channel.
\end{proof}

\section{Proof of Corollary \ref{corollary:trapdoor}}
\label{appendix:ProofCorTrapdoor}
\begin{proof}
Define $b_1 \triangleq\sqrt{5}-2$, $b_2 \triangleq\frac{3-\sqrt{5}}{2}$, $b_3 \triangleq\frac{\sqrt{5}-1}{2}$, $b_4 \triangleq 3-\sqrt{5}$.
The proof follows by application of Theorem \ref{theorem:qgraph_LB}. Constructing the corresponding $(S,U,Q)$-graph and straightforward calculation of Eq. \eqref{eq:suq_transition} give the transition matrix $P(s^+,u^+,q^+|s,u,q)$, which has a unique stationary distribution $\pi(s,u,q)$ with the marginal distribution of the $Q$-graph: 
\begin{align}
    \mu\triangleq[\pi(Q=1), \pi(Q=2), \pi(Q=3), \pi(Q=4)]
    =\bigg[\frac{b_2}{b_4+2},\frac{1}{b_4+2},\frac{b_2}{b_4+2},\frac{1}{b_4+2} \bigg] \nn
\end{align}
and the conditional distribution:
\begin{align}
  &[\pi_{U,S|Q}(0,0|1),\pi_{U,S|Q}(1,0|1),\pi_{U,S|Q}(0,1|1),\pi_{U,S|Q}(1,1|1)]= [0,b_2,b_3,0]\nn\\
  &[\pi_{U,S|Q}(0,0|2),\pi_{U,S|Q}(1,0|2),\pi_{U,S|Q}(0,1|2),\pi_{U,S|Q}(1,1|2)]= [0,b_1,b_4,0]\nn\\
  &[\pi_{U,S|Q}(0,0|3),\pi_{U,S|Q}(1,0|3),\pi_{U,S|Q}(0,1|3),\pi_{U,S|Q}(1,1|3)]= [b_3,0,0,b_2]\nn\\
  &[\pi_{U,S|Q}(0,0|4),\pi_{U,S|Q}(1,0|4),\pi_{U,S|Q}(0,1|4),\pi_{U,S|Q}(1,1|4)]= [b_4,0,0,b_1]\nn,
\end{align}
and the BCJR-invariant property for each node can be verified. The reward of each node, i.e., $I(U^+,U;Y|Q=i)$ for $i=1,2,...,|\cQ|$, is induced from
\begin{align}
P(u^+,u,y|q)&= \sum_{s_{i-1}} \pi(u,s|q) P(u^+|u,q)P(y|f(u^+,s),s) \text{,}
\end{align}
yielding:
 $r\triangleq[I(U^+,U;Y|Q=1), I(U^+,U;Y|Q=2), I(U^+,U;Y|Q=3), I(U^+,U;Y|Q=4)]$, 
where $I(U^+,U;Y|Q=i)=1-b_2 b_3$ for $i=1,3$, and $I(U^+,U;Y|Q=j)=H\big(b_2(b_3+1)\big)-b_2 b_4$ for $j=2,4$. Finally, $\mu \cdot r^T= \log \phi$ is a lower bound on the capacity.
\end{proof}
\section{Proof of Corollary \ref{corollary:ising}}
\label{appendix:ProofCorIsing}
\begin{proof}
Constructing the corresponding $(S,U,Q)$-graph and straightforward calculation of Eq. \eqref{eq:suq_transition} give the transition matrix $P(s^+,u^+,q^+|s,u,q)$, which has a unique stationary distribution $\pi(s,u,q)$ with the marginal distribution of the $Q$-graph: 
\begin{align}
    \mu\triangleq[\pi(Q=1), \pi(Q=2), \pi(Q=3), \pi(Q=4)]
    =\bigg[\frac{a+1}{2(a+3)},\frac{1}{a+3},\frac{a+1}{2(a+3)},\frac{1}{a+3} \bigg] \nn
\end{align}
and the conditional distribution:
\begin{align}
  &[\pi_{U,S|Q}(0,0|1),\pi_{U,S|Q}(1,0|1),\pi_{U,S|Q}(0,1|1),\pi_{U,S|Q}(1,1|1)]= \bigg[\frac{\bar{a}}{a+1},0,0,\frac{2a}{a+1}\bigg]\nn\\
  &[\pi_{U,S|Q}(0,0|2),\pi_{U,S|Q}(1,0|2),\pi_{U,S|Q}(0,1|2),\pi_{U,S|Q}(1,1|2)]= [1,0,0,0]\nn\\
  &[\pi_{U,S|Q}(0,0|3),\pi_{U,S|Q}(1,0|3),\pi_{U,S|Q}(0,1|3),\pi_{U,S|Q}(1,1|3)]= \bigg[\frac{2a}{a+1},0,0,\frac{\bar{a}}{a+1}\bigg]\nn\\
  &[\pi_{U,S|Q}(0,0|4),\pi_{U,S|Q}(1,0|4),\pi_{U,S|Q}(0,1|4),\pi_{U,S|Q}(1,1|4)]= [0,0,0,1]\nn,
\end{align}
and the BCJR-invariant property for each node can be verified. The reward of each node, i.e., $I(U^+,U;Y|Q=i)$ for $i=1,2,\dots,|\cQ|$, is 
\begin{align}
r\triangleq&[I(U^+,U;Y|Q=1), I(U^+,U;Y|Q=2), I(U^+,U;Y|Q=3), I(U^+,U;Y|Q=4)], \nn
\end{align}
where $I(U^+,U;Y|Q=i)=H\big(\frac{2a}{a+1}\big)$ for $i=1,3$, and $I(U^+,U;Y|Q=j)=H\big(\frac{a+1}{2}\big)+a-1$ for $j=2,4$. Finally, $\mu \cdot r^T= \frac{2 H(a)}{a+3}$ is a lower bound on the capacity.
\end{proof}

\section{Proof of Corollary \ref{corollary:BEC}}
\label{appendix:ProofCorBEC}
\begin{proof}
Throughout the proof, we use the notations $H_2(\alpha),H_3(\alpha_1,\alpha_2)$ for the distinction between the binary and ternary entropies (since $|\cY|=3$), respectively, where $\alpha,\alpha_1,\alpha_2 \in [0,1]$ and $\alpha_1+\alpha_2\le 1$.
The proof follows by application of Theorem \ref{theorem:qgraph_LB} with $\left|\mathcal{U}\right|=2$, function
(which respects the input constraint) the $Q$-graph from Fig. \ref{fig:Qgraph_ising} and the auxiliary conditional distribution
Constructing the corresponding $(S,U,Q)$-graph and straightforward calculation of Eq. \eqref{eq:suq_transition} give the transition matrix $P(s^+,u^+,q^+|s,u,q)$, which has a unique stationary distribution $\pi(s,u,q)$ with the marginal distribution of the $Q$-graph: 
\begin{align}
    \mu\triangleq[\pi(Q=1), \pi(Q=2), \pi(Q=3)]
    =\bigg[\frac{p\bar{\epsilon}}{1+p\bar{\epsilon}},\frac{\bar{\epsilon}}{1+p\bar{\epsilon}},\frac{\epsilon}{1+p\bar{\epsilon}} \bigg] \nn
\end{align}
and the conditional distribution:
\begin{align}
  &[\pi_{U,S|Q}(0,0|1),\pi_{U,S|Q}(1,0|1),\pi_{U,S|Q}(0,1|1),\pi_{U,S|Q}(1,1|1)]= [0,0,0,1]\nn\\
  &[\pi_{U,S|Q}(0,0|2),\pi_{U,S|Q}(1,0|2),\pi_{U,S|Q}(0,1|2),\pi_{U,S|Q}(1,1|2)]= [1,0,0,0]\nn\\
  &[\pi_{U,S|Q}(0,0|3),\pi_{U,S|Q}(1,0|3),\pi_{U,S|Q}(0,1|3),\pi_{U,S|Q}(1,1|3)]= [\bar{p},0,0,p]\nn
\end{align}
where $\pi_{U,S|Q}(u,s|q)=\pi_{U,S|Q}(U=u,S=s|Q=q)$, and the BCJR-invariant property for each node can be verified. The reward of each node, i.e., $I(U^+,U;Y|Q=i)$ for $i=1,2,\dots,|\cQ|$, is 
\begin{align}
  r\triangleq&[I(U^+,U;Y|Q=1), I(U^+,U;Y|Q=2), I(U^+,U;Y|Q=3)], \nn
\end{align}
where $I(U^+,U;Y|Q=1)=0$, 
$I(U^+,U;Y|Q=i)=H_3(P(\bar{\epsilon}),\epsilon)-H_2(\epsilon)$ for $i=2,3$. 
Finally, $\mu \cdot r^T=\frac{H(p)}{\frac{1}{\bar{\epsilon}}+p}$ is a lower bound on the capacity for any $p$.
\end{proof}

\section{Proof of Lemma \ref{lemma:LemmaConnected}}
\label{appendix:proof_LemmaConnected}
The proof follows exactly in the same steps of \cite[Lemma~2]{Permuter06_trapdoor_submit}, yet with the following modifications. Replace each occurrence of $X_i$ with $U_i$, and construct $P_2(u^N||y^{N-1},s'_0)$ as follows. First, construct $P_2(u^N||y^{N-1},s'_0)$ due to an input distribution of the form $\{P(x_i|s_{i-1})\}_{i=1}^T$ with positive probability of reaching $s_0$ in $T$ time epochs, i.e., $\sum_{i=1}^T P(S_i=s_0|S_0=\tilde{s})>0, \tilde{s}\in \cS$, denoted by $\tilde{P}$ \footnote{Such an input distribution $\tilde{P}$ exists because the FSC is assumed to be strongly connected.}, repeatedly, until the time epoch that the channel first reaches $s_0$, denoted by time $L$. In particular, until time $L$, the encoder uses only the $|\cX|$ strategies $u$ that map all states to a specific input $x$. After each time epoch within a time-window $T$, the encoder observes the next reached state $s_{i-1}$ and constructs $P(u_i)=P(u_i|s_{i-1})$ exactly according to the next time epoch in $\tilde{P}$. This is operatively possible because of the SI known causally at the encoder at each time epoch. After time $L$, construct $P_2$ exactly as $P_1$ would (had time started then), i.e., for $i>L$:
\begin{align}
    P_2(u_i|u^{i-1},y^{i-1},s_0)=P_1(u_{i-L}|u^{i-1-L},y^{i-1-L},s_0).
\end{align}
The proof is concluded by following \cite[Inequality (13)]{Permuter06_trapdoor_submit}.

\section{Proof of lemma \ref{lem:UpperBound}.}
\label{appendix:lemUpperBound}
We need to show that for a FSC with feedback and SI known causally at the encoder, for any $k$, the joint distribution $P(u_k,y_k,s_k)$, where $u_i\triangleq(m,y^{i-1}), i\in [1:k]$,
is determined by $\{P(u_i|u_{i-1},y^{i-1})P(x_i|u_i,s_{i-1})\}_{i=1}^k$. We prove it by induction.
For $n=1$ we have:
\begin{align}
    P(u_1,y_1,s_1)=\sum_{s_0,x_1} P(y_1,s_1|x_1,s_0)P(x_1|u_1,s_0)P(u_1)P(s_0), \nonumber 
\end{align}
because from the definition $U_i\triangleq(M,Y^{i-1})$ it follows that $U_1=M$ is independent of $S_0$, and $(Y_1,S_1)-(X_1,S_0)-U_1$ form a Markov chain due to the FSC definition. Assume $P(u_{k-1},y_{k-1},s_{k-1})$ is determined by $\{P(u_i|u_{i-1},y^{i-1}),P(x_i|u_i,s_{i-1})\}_{i=1}^{k-1}$, we have 
\begin{align}
P(u_k,y_k,s_k) =\sum_{s_{k-1}}P(u_k,y_k,s_k|u_{k-1},y_{k-1},s_{k-1})P(u_{k-1},y_{k-1},s_{k-1}), \nonumber
\end{align}
and it is sufficient to show that $P(u_k,y_k,s_k|u_{k-1},y_{k-1},s_{k-1})$ depends on $\{P(u_i|u_{i-1},y_{i-1}),P(x_i|u_i,s_{i-1})\}_{i=1}^k$:
\begin{align}
    P(u_k,y_k,s_k|u_{k-1},y_{k-1},s_{k-1})&=\sum_{x_k}P(u_k,y_k,s_k,x_k|u_{k-1},y_{k-1},s_{k-1})\nonumber\\
    &\stackrel{(a)}=\sum_{x_k}P(u_k|u_{k-1},y_{k-1})P(x_k|u_k,s_{k-1})P(y_k,s_k|x_k,s_{k-1},u_k)\nonumber\\
    &\stackrel{(b)}=\sum_{x_k}P(u_k|u_{k-1},y_{k-1})P(x_k|u_k,s_{k-1})P(y_k,s_k|x_k,s_{k-1})\nonumber\\ \nonumber
\end{align}
where
\begin{enumerate}[label={(\alph*)}]
\item follows from $U_k=(U_{k-1},Y_{k-1})$ due to the definition of $U_k$.
\item follows from the Markov chain $(Y_k,S_k)-(X_k,S_{k-1})-U_k$ that holds due to the definition of $U_k$ and the fact that the channel is a FSC.
\end{enumerate}
\qed

\section{Proof of $|\cU|=|\cX|^{|\cS|}$} 
\label{appendix:cardinality}
\begin{proof}
Let $T=\{x_u (s):\cS \to \cX|u\in \{0,1,\dots,|\cX|^{|\cS|}-1\}\}$ be the set of all $|\cX|^{|\cS|}$ different functions. 
Assume, to the contrary, that 
\eqref{eq:capDI} is achieved with distribution $P_1(\tilde{u}^N,y^N)$ where at least one variable $U_i, i\in [1:N]$ has cardinality greater than $|\mathcal{T}|$ (i.e., $|\cU_i|>|\cX|^{|\cS|}$ for some $i$), and that it is greater than the objective induced by any distribution $P(u^N,y^N)$ with $|\cU_i|=|\mathcal{T}|=|\cX|^{|\cS|}, \forall i\in [1:N]$. Our goal is to construct such distribution $P_2(u^N,y^N)$ that achieves \eqref{eq:capDI} as well. The DI induced by $P_1(\tilde{u}^N,y^N)$ can be written as
\begin{align}
    I_{P_1}(U^N \to Y^N|s_0)&= \frac{1}{N}\sum_{i=1}^{N} I_{P_1}(U^i;Y_i|Y^{i-1},s_0) \nn\\
    &= \frac{1}{N}\sum_{i=1}^{N} H_{P_1}(Y_i|Y^{i-1})-H_{P_1}(Y_i|Y^{i-1},U^i), \label{eq:DIcondEnt}
\end{align}
where, using the definition of conditional entropy, the entropies in \eqref{eq:DIcondEnt} can explicitly be written as
\begin{align}
    &H_{P_1}(Y_i|Y^{i-1})= -\sum_{y^i} \sum_{\tilde{u}^i\in {\cU}^i} P_1(\tilde{u}^i,y^i) \log P_1(y_i|y^{i-1}) , \label{eq:H1} \\
    &H_{P_1}(Y_i|Y^{i-1},U^{i})=-\sum_{y^i} \sum_{\tilde{u}^i\in {\cU}^i}  P_1(\tilde{u}^i,y^i) \log P_1(y_i|\tilde{u}^{i},y^{i-1}).\label{eq:H2}
\end{align}
We are going to show that a legitimate distribution $P_2(u^N,y^N)$ satisfies $H_{P_2}(Y_i|Y^{i-1})=H_{P_1}(Y_i|Y^{i-1})$ and $H_{P_2}(Y_i|Y^{i-1},U^{i})=H_{P_1}(Y_i|Y^{i-1},U^{i})$. We focus on \eqref{eq:H2} first, in which
\begin{align}
P_1(y_i|\tilde{u}^{i},y^{i-1})&\stackrel{(a)}=\sum_{s_{i-1}} P_1(s_{i-1}|\tilde{u}^{i},y^{i-1})P_1(y_i|\tilde{u}^{i},s_{i-1},y^{i-1}) \nn \\
&\stackrel{(b)}=\sum_{s_{i-1}} P_1(s_{i-1}|\tilde{u}^{i},y^{i-1})P(y_i|f(\tilde{u}_i,s_{i-1}),s_{i-1}) \nn \\
&\stackrel{(c)}=\sum_{s_{i-1}} P_1(s_{i-1}|\tilde{u}^{i-1},y^{i-1})P(y_i|f(\tilde{u}_i,s_{i-1}),s_{i-1}) \nn\\
&\stackrel{(d)}=\sum_{s_{i-1}}  
\frac{\sum_{s^{i-2}}\prod_{j=1}^{i-1} P(y_j,s_j|f(\tilde{u}_j,s_{j-1}),s_{j-1}) P(y_i|f(\tilde{u}_i,s_{i-1}),s_{i-1})}{\sum_{s^{i-2}}\big(\prod_{j=1}^{i-2} P(y_j,s_j|f(\tilde{u}_j,s_{j-1}),s_{j-1})\big) P(y_{i-1}|f(\tilde{u}_{i-1},s_{i-2}),s_{i-2})}, \label{eq:yGivenPast}
\end{align}
where
\begin{enumerate}[label={(\alph*)}]
\item follows follows from the law of total probability;
\item follows from the channel model;
\item follows from the Markov chain $U_i-(U^{i-1},Y^{i-1})-S_{i-1}$;
\item follows since $P_1(s_{i-1}|\tilde{u}^{i-1},y^{i-1})=\frac{P_1(s_{i-1},\tilde{u}^{i-1},y^{i-1})}{P_1(\tilde{u}^{i-1},y^{i-1})}$ where
\begin{align}
P_1(s_{i-1},\tilde{u}^{i-1},y^{i-1})&=( \prod_{j=1}^{i-1} P_1(\tilde{u}_j|\tilde{u}^{j-1},y^{j-1}) ) \sum_{s^{i-2}} \prod_{j=1}^{i-1} P(y_j,s_j|f(\tilde{u}_j,s_{j-1}),s_{j-1}), \nn\\
P_1(\tilde{u}^{i-1},y^{i-1})&=( \prod_{j=1}^{i-1} P_1(\tilde{u}_j|\tilde{u}^{j-1},y^{j-1}) ) \sum_{s^{i-2}} \prod_{j=1}^{i-2} P(y_j,s_j|f(\tilde{u}_j,s_{j-1}),s_{j-1}) \nn\\ &\quad \times P(y_{i-1}|f(\tilde{u}_{i-1},s_{i-2}),s_{i-2}). \nn
\end{align}
\end{enumerate}
That is, $P_1(y_i|\tilde{u}^{i},y^{i-1})$ does not depend on the actual symbols of $\tilde{u}^i$; rather, it depends on the strategies $(f(\tilde{u}_1,s_{0}),f(\tilde{u}_2,s_{1}),\dots,f(\tilde{u}_i,s_{i-1}))$ for all $s_0^{i-1} \in \cS^i$. Hence, it has a single, specific $u^{i}$ of all different $|\cX|^{{|\cS|}^i}$ vectors such that $P_1(y_i|\tilde{u}^{i},y^{i-1})=P_2(y_i|u^{i},y^{i-1})$, where the latter is defined similarly to \eqref{eq:yGivenPast} as
\begin{align}
P_2(y_i|u^{i},y^{i-1})&\stackrel{(a)}=\sum_{s_{i-1}}  
\frac{\sum_{s^{i-2}}\prod_{j=1}^{i-1} P(y_j,s_j|f(u_j,s_{j-1}),s_{j-1}) P(y_i|f(u_i,s_{i-1}),s_{i-1})}{\sum_{s^{i-2}}\big(\prod_{j=1}^{i-2} P(y_j,s_j|f(u_j,s_{j-1}),s_{j-1})\big) P(y_{i-1}|f(u_{i-1},s_{i-2}),s_{i-2})}. \label{eq:yGivenPast2}
\end{align}
Here, we denote the $j$-th vector of strategies by $u^i_{(j)}$ (in this context, it is unimportant how we index the vectors, but it is important to note that $u^i_{(j)}$ and $u^i_{(k)}$ are two different vector for $j\ne k$). Further, we denote some $\tilde{u}^i$ with the same strategies as of a specific $u^i_{(j)}$ by $\tilde{u}_{(j)}^i$.
Consequently, by constructing $P_2(u^i_{(k)},y^i)\triangleq \sum_{\tilde{u}^i_{(k)}} P_1(\tilde{u}^i_{(k)},y^i)$
for a fixed $y^i$ and for all $k=1,2,\dots,|\cX|^{{|\cS|}^{i}}$, we obtain that any term in \eqref{eq:H2} satisfies
\begin{align}
     \sum_{\tilde{u}^i }  P_1(\tilde{u}^i,y^i) \log P_1(y_i|\tilde{u}^{i},y^{i-1})&=\sum_{k=1}^{|\cX|^{{|\cS|}^{i}}} \sum_{\tilde{u}_{(k)}^i }  P_1(\tilde{u}_{(k)}^i,y^i) \log P_1(y_i|\tilde{u}_{(k)}^{i},y^{i-1})\nn\\
     &=\sum_{k=1}^{|\cX|^{{|\cS|}^{i}}}\sum_{\tilde{u}_{(k)}^i }  P_1(\tilde{u}_{(k)}^i ,y^i) \log P_2(y_i|u_{(k)}^{i} ,y^{i-1}) \nn\\
    &=\sum_{k=1}^{|\cX|^{{|\cS|}^{i}}} \log P_2(y_i|u_{(k)}^{i},y^{i-1}) \sum_{\tilde{u}_{(k)}^i}  P_1(\tilde{u}_{(k)}^i ,y^i) \nn\\
    &=\sum_{k=1}^{|\cX|^{{|\cS|}^{i}}} P_2(u_{(k)}^i,y^i) \log P_2(y_i|u_{(k)}^{i} ,y^{i-1}) . 
\end{align}
Further, the construction of $P_2(u^i_{(k)},y^i)$ for all $k$ and all $y^i$ fully attains a constructed joint distribution $P_2(u^i,y^i)$. We repeat this construction for all $i \in [1,\dots,N]$, and then we uniquely get the conditional distribution $P_2(u_i|u^{i-1},y^{i-1})$ by the relation 
\begin{align}
    P_2(u^i,y^i)=P_2(u^{i-1},y^{i-1})P_2(u_i|u^{i-1},y^{i-1})P_2(y_i|u^i,y^{i-1}),
\end{align}
where $P_2(u^i,y^i),P(u^{i-1},y^{i-1})$ are known from the previously described construction, and $P_2(y_i|u^i,y^{i-1})$ is given by \eqref{eq:yGivenPast2}. As a result, we have 
a legitimate distribution $P_2(u^N,y^N)$ with $|\cU|=|\cX|^{|\cS|}$ that induces the same value of $H(Y_i|Y^{i-1},U^i)$ as induced by $P_1(\tilde{u}^N,y^N)$ in \eqref{eq:H2}, for all $i \in [1,\dots,N]$. 

It remains to prove that $P_2(u^N,y^N)$ induces 
$H_{P_2}(Y_i|Y^{i-1})=H_{P_1}(Y_i|Y^{i-1})$.
Now,
\begin{align}
    P_1(y_i|y^{i-1})&=\frac{P_1(y^{i})}{P_1(y^{i-1})} \nn\\
    &=\frac{\sum_{\tilde{u}^i\in \cU^i} P_1(\tilde{u}^i,y^i)}{\sum_{\tilde{u}^{i-1}\in \cU^{i-1}} P_1(\tilde{u}^{i-1},y^i)} \nn\\
    &=\frac{\sum_{k=1}^{|\cX|^{{|\cS|}^{i}}} \sum_{\tilde{u}^i_{(k)}} P_1(\tilde{u}^i_{(k)},y^i)}{\sum_{k=1}^{|\cX|^{{|\cS|}^{i-1}}} \sum_{\tilde{u}^{i-1}_{(k)}} P_1(\tilde{u}^{i-1}_{(k)},y^i)} \nn\\
    &=\frac{\sum_{k=1}^{|\cX|^{{|\cS|}^{i}}} P_2(u^i_{(k)},y^i)}{\sum_{k=1}^{|\cX|^{{|\cS|}^{i-1}}} P_2(u^{i-1}_{(k)},y^i)} \nn\\
    &=\frac{P_2(y^{i})}{P_2(y^{i-1})} \nn\\
    &=P_2(y_i|y^{i-1}).
\end{align}
Hence, for a fixed $y^i$, any term in \eqref{eq:H1} satisfies 
\begin{align}
    \sum_{\tilde{u}^i \in \cU^i} P_1(\tilde{u}^i,y^i) \log P_1(y_i|y^{i-1})&=\sum_{k=1}^{|\cX|^{{|\cS|}^{i}}} 
    \sum_{\tilde{u}^i_{(k)}} P_1(\tilde{u}^i_{(k)},y^i) \log P_1(y_i|y^{i-1}) \nn\\
    &=\sum_{k=1}^{|\cX|^{{|\cS|}^{i}}} 
    P_2(u^i_{(k)},y^i) \log P_1(y_i|y^{i-1}) \nn\\
    &=\sum_{k=1}^{|\cX|^{{|\cS|}^{i}}} 
    P_2(u^i_{(k)},y^i) \log P_2(y_i|y^{i-1}), 
\end{align}
and it holds for all $i \in [1:N]$. Finally, $P_2(u^N,y^N)$ achieves \eqref{eq:capDI} as well, and the supposition is false.
\end{proof}


\bibliographystyle{IEEEtran}
\bibliography{FSC_submit}

\begin{thebibliography}{10}
\providecommand{\url}[1]{#1}
\csname url@samestyle\endcsname
\providecommand{\newblock}{\relax}
\providecommand{\bibinfo}[2]{#2}
\providecommand{\BIBentrySTDinterwordspacing}{\spaceskip=0pt\relax}
\providecommand{\BIBentryALTinterwordstretchfactor}{4}
\providecommand{\BIBentryALTinterwordspacing}{\spaceskip=\fontdimen2\font plus
\BIBentryALTinterwordstretchfactor\fontdimen3\font minus
  \fontdimen4\font\relax}
\providecommand{\BIBforeignlanguage}[2]{{%
\expandafter\ifx\csname l@#1\endcsname\relax
\typeout{** WARNING: IEEEtran.bst: No hyphenation pattern has been}%
\typeout{** loaded for the language `#1'. Using the pattern for}%
\typeout{** the default language instead.}%
\else
\language=\csname l@#1\endcsname
\fi
#2}}
\providecommand{\BIBdecl}{\relax}
\BIBdecl

\bibitem{Shannon58}
C.~E. Shannon, ``Channels with side information at the transmitter,'' \emph{IBM
  J. Res. Devel.}, vol.~2, pp. 289--293, Oct. 1958.

\bibitem{shannon56}
C.~Shannon, ``The zero error capacity of a noisy channel,'' \emph{IEEE Trans.
  Inf. Theory}, vol.~2, no.~3, pp. 8--19, Sep. 1956.

\bibitem{Sadeghi}
P.~{Sadeghi}, R.~A. {Kennedy}, P.~B. {Rapajic}, and R.~{Shams}, ``Finite-state
  {M}arkov modeling of fading channels - a survey of principles and
  applications,'' \emph{IEEE Signal Processing Mag.}, vol.~25, no.~5, pp.
  57--80, Sep. 2008.

\bibitem{ZHANG}
{Qinqing Zhang} and S.~A. {Kassam}, ``Finite-state {M}arkov model for rayleigh
  fading channels,'' \emph{IEEE Trans. Commun.}, vol.~47, no.~11, pp.
  1688--1692, Nov. 1999.

\bibitem{turin1990performance}
W.~Turin, \emph{Performance analysis of digital transmission systems}.\hskip
  1em plus 0.5em minus 0.4em\relax Computer Science Press, Inc., Mar. 1990.

\bibitem{FSCTransWirelessComm}
M.~{Hassan}, M.~M. {Krunz}, and I.~{Matta}, ``Markov-based channel
  characterization for tractable performance analysis in wireless packet
  networks,'' \emph{IEEE Trans. Commun.}, vol.~3, no.~3, pp. 821--831, May
  2004.

\bibitem{FSCTransWirelessComm1}
H.~S. Wang and N.~{Moayeri}, ``Finite-state {M}arkov channel-a useful model for
  radio communication channels,'' \emph{IEEE Trans. Veh. Technol.}, vol.~44,
  no.~1, pp. 163--171, Feb. 1995.

\bibitem{pimentel2004finite}
C.~Pimentel, T.~H. Falk, and L.~Lisb{\^o}a, ``Finite-state {M}arkov modeling of
  correlated {R}ician-fading channels,'' \emph{IEEE Trans. Veh. Technol.},
  vol.~53, no.~5, pp. 1491--1501, Sep. 2004.

\bibitem{zhong2008model}
L.~Zhong, F.~Alajaji, and G.~Takahara, ``A model for correlated rician fading
  channels based on a finite queue,'' \emph{IEEE Trans. Veh. Technol.},
  vol.~57, no.~1, pp. 79--89, 2008.

\bibitem{MolecFSCTransComm}
L.~{Galluccio}, A.~{Lombardo}, G.~{Morabito}, S.~{Palazzo}, C.~{Panarello}, and
  G.~{Schembra}, ``Capacity of a binary droplet-based microfluidic channel with
  memory and anticipation for flow-induced molecular communications,''
  \emph{IEEE Trans. Commun.}, vol.~66, no.~1, pp. 194--208, Jan. 2018.

\bibitem{MolecularSurvey}
N.~{Farsad}, H.~B. {Yilmaz}, A.~{Eckford}, C.~{Chae}, and W.~{Guo}, ``A
  comprehensive survey of recent advancements in molecular communication,''
  \emph{IEEE Communications Surveys Tutorials}, vol.~18, no.~3, pp. 1887--1919,
  thirdquarter 2016.

\bibitem{Immink}
K.~E. {Schouhamer Immink}, P.~H. {Siegel}, and J.~K. {Wolf}, ``Codes for
  digital recorders,'' \emph{IEEE Trans. Inf. Theory}, vol.~44, no.~6, pp.
  2260--2299, Oct. 1998.

\bibitem{Blackwell58}
L.~B. D.~Blackwell and A.~Thomasian, ``Proof of {S}hannon's transmission
  theorem for finite-state indecomposable channels,'' \emph{Ann. Math. Stat},
  vol.~29, pp. 1209--1220, Dec. 1958.

\bibitem{Gallager68}
R.~G. Gallager, \emph{Information theory and reliable communication}.\hskip 1em
  plus 0.5em minus 0.4em\relax New-York, NY, USA: Wiley, 1968.

\bibitem{GBAA}
P.~O. {Vontobel}, A.~{Kavcic}, D.~M. {Arnold}, and H.~{Loeliger}, ``A
  generalization of the blahut–arimoto algorithm to finite-state channels,''
  \emph{IEEE Trans. Inf. Theory}, vol.~54, no.~5, pp. 1887--1918, May 2008.

\bibitem{Ziv85}
J.~Ziv, ``Universal decoding for finite-state channels,'' \emph{IEEE Trans.
  Inf. Theory}, vol.~31, no.~4, pp. 453--460, Jul. 1985.

\bibitem{PfisterISI}
H.~D. {Pfister}, J.~B. {Soriaga}, and P.~H. {Siegel}, ``On the achievable
  information rates of finite state {ISI} channels,'' in \emph{GLOBECOM IEEE
  Global Telecommun. Conf.}, Nov. 2001, pp. 2992--2996.

\bibitem{RGray}
R.~{Gray}, M.~{Dunham}, and R.~{Gobbi}, ``Ergodicity of {M}arkov channels,''
  \emph{IEEE Trans. Inf. Theory}, vol.~33, no.~5, pp. 656--664, Sep. 1987.

\bibitem{weissman2006source}
T.~Weissman and A.~El~Gamal, ``Source coding with limited-look-ahead side
  information at the decoder,'' \emph{IEEE Trans. Inf. Theory}, vol.~52,
  no.~12, pp. 5218--5239, 2006.

\bibitem{DasNarayan}
A.~Das and P.~Narayan, ``Capacities of time-varying multiple-access channels
  with side information,'' \emph{IEEE Trans. Inf. Theory}, vol.~48, no.~1, pp.
  4--25, 2002.

\bibitem{latticeStrategies}
U.~Erez, S.~Shamai, and R.~Zamir, ``Capacity and lattice strategies for
  canceling known interference,'' \emph{IEEE Trans. Inf. Theory}, vol.~51,
  no.~11, pp. 3820--3833, 2005.

\bibitem{ozel2012optimal}
O.~Ozel, J.~Yang, and S.~Ulukus, ``Optimal broadcast scheduling for an energy
  harvesting rechargeable transmitter with a finite capacity battery,''
  \emph{IEEE Trans. Commun.}, vol.~11, no.~6, pp. 2193--2203, 2012.

\bibitem{tutuncuoglu2012optimum}
K.~Tutuncuoglu and A.~Yener, ``Optimum transmission policies for battery
  limited energy harvesting nodes,'' \emph{IEEE Trans. Commun.}, vol.~11,
  no.~3, pp. 1180--1189, 2012.

\bibitem{mao2013capacity}
W.~Mao and B.~Hassibi, ``On the capacity of a communication system with energy
  harvesting and a limited battery,'' in \emph{2013 IEEE Int. Symp. Inf. Theory
  (ISIT)}.\hskip 1em plus 0.5em minus 0.4em\relax IEEE, 2013, pp. 1789--1793.

\bibitem{dong2014approximate}
Y.~Dong and A.~{\"O}zg{\"u}r, ``Approximate capacity of energy harvesting
  communication with finite battery,'' in \emph{2014 IEEE Int. Symp. Inf.
  Theory (ISIT)}.\hskip 1em plus 0.5em minus 0.4em\relax IEEE, 2014, pp.
  801--805.

\bibitem{jog2014energy}
V.~Jog and V.~Anantharam, ``An energy harvesting awgn channel with a finite
  battery,'' in \emph{2014 IEEE Int. Symp. Inf. Theory (ISIT)}.\hskip 1em plus
  0.5em minus 0.4em\relax IEEE, 2014, pp. 806--810.

\bibitem{shaviv2016capacity}
D.~Shaviv, P.-M. Nguyen, and A.~{\"O}zg{\"u}r, ``Capacity of the
  energy-harvesting channel with a finite battery,'' \emph{IEEE Trans. Inf.
  Theory}, vol.~62, no.~11, pp. 6436--6458, 2016.

\bibitem{tutuncuoglu2017binary}
K.~Tutuncuoglu, O.~Ozel, A.~Yener, and S.~Ulukus, ``The binary energy
  harvesting channel with a unit-sized battery,'' \emph{IEEE Trans. Inf.
  Theory}, vol.~63, no.~7, pp. 4240--4256, 2017.

\bibitem{NOST}
E.~Shemuel, O.~Sabag, and H.~H. Permuter, ``The feedback capacity of noisy
  output is the state (nost) channels,'' \emph{IEEE Trans. Inf. Theory},
  vol.~68, no.~8, pp. 5044--5059, 2022.

\bibitem{Sabag_UB_IT}
O.~Sabag, H.~H. Permuter, and H.~D. Pfister, ``A single-letter upper bound on
  the feedback capacity of unifilar finite-state channels,'' \emph{IEEE Trans.
  Inf. Theory}, vol.~63, no.~3, pp. 1392--1409, Mar. 2017.

\bibitem{Tatikonda2000ControlUC}
S.~Tatikonda and S.~K. Mitter, ``Control under communication constraints,''
  \emph{IEEE Trans. Automatic Control}, vol.~49, pp. 1056--1068, 2000.

\bibitem{TatikondaMitter_IT09}
S.~Tatikonda and S.~Mitter, ``The capacity of channels with feedback,''
  \emph{IEEE Trans. Inf. Theory}, vol.~55, no.~1, pp. 323--349, Jan. 2009.

\bibitem{yang2005feedback}
S.~Yang, A.~Kavcic, and S.~Tatikonda, ``Feedback capacity of finite-state
  machine channels,'' \emph{IEEE Trans. Inf. Theory}, vol.~51, no.~3, pp.
  799--810, 2005.

\bibitem{Permuter06_trapdoor_submit}
H.~H. Permuter, P.~W. Cuff, B.~Van-Roy, and T.~Weissman, ``Capacity of the
  trapdoor channel with feedback,'' \emph{IEEE Trans. Inf. Theory}, vol.~54,
  pp. 3150--3165, Jul. 2008.

\bibitem{Graph-Based}
O.~{Sabag}, B.~{Huleihel}, and H.~H. {Permuter}, ``Graph-based encoders and
  their performance for finite-state channels with feedback,'' \emph{IEEE
  Trans. Commun.}, vol.~68, no.~4, pp. 2106--2117, Apr. 2020.

\bibitem{elishco2014capacity}
O.~Elishco and H.~Permuter, ``Capacity and coding for the {I}sing channel with
  feedback,'' \emph{IEEE Trans. Inf. Theory}, vol.~60, no.~9, pp. 5138--5149,
  2014.

\bibitem{Sabag_BEC}
O.~Sabag, H.~Permuter, and N.~Kashyap, ``The feedback capacity of the binary
  erasure channel with a no-consecutive-ones input constraint,'' \emph{IEEE
  Trans. Inf. Theory}, vol.~62, no.~1, pp. 8--22, Jan. 2016.

\bibitem{massey1990causality}
J.~Massey, ``Causality, feedback and directed information,'' in \emph{Proc.
  Citeseer Int. Symp. Inf. Theory Applic. (ISITA)}, Nov. 1990, pp. 303--305.

\bibitem{PermuterWeissmanGoldsmith09}
H.~H. Permuter, T.~Weissman, and A.~J. Goldsmith, ``Finite state channels with
  time-invariant deterministic feedback,'' \emph{IEEE Trans. Inf. Theory},
  vol.~55, no.~2, pp. 644--662, Feb. 2009.

\bibitem{Kramer03}
G.~Kramer, ``Capacity results for the discrete memoryless network,'' \emph{IEEE
  Trans. Inf. Theory}, vol.~49, no.~1, pp. 4--21, Jan. 2003.

\bibitem{permuter2006capacity}
H.~Permuter, T.~Weissman, and A.~Goldsmith, ``Capacity of finite-state channels
  with time-invariant deterministic feedback,'' in \emph{Proc. IEEE Int. Symp.
  Inf. Theory (ISIT)}, Jul. 2006, pp. 64--68.

\bibitem{GePi80}
S.~I. Gel'fand and M.~S. Pinsker, ``Coding for channels with random
  parameters,'' \emph{Prob. Contr. and Inf. Theory}, vol.~9, no.~1, pp. 19--31,
  1980.

\bibitem{HeegardElGamal_state_encoded83}
C.~Heegard and A.~A.~E. Gamal, ``On the capacity of computer memory with
  defects,'' \emph{IEEE Trans. Inf. Theory}, vol.~29, no.~5, pp. 731--739, Sep.
  1983.

\bibitem{el2011network}
A.~El~Gamal and Y.-H. Kim, \emph{Network Information Theory}.\hskip 1em plus
  0.5em minus 0.4em\relax Cambridge, U.K.: Cambridge Univ. Press, 2011.

\bibitem{blackwell1961information}
D.~Blackwell, ``Information theory, ser. modern mathematics for the engineer:
  Second series,'' pp. 183--193, 1961.

\bibitem{Ising_artyom_IT}
A.~Sharov and R.~M. Roth, ``On the capacity of generalized {I}sing channels,''
  \emph{IEEE Trans. Inf. Theory}, vol.~63, no.~4, pp. 2338--2356, 2017.

\bibitem{Sabag_BIBO_IT}
O.~Sabag, H.~H. Permuter, and N.~Kashyap, ``Feedback capacity and coding for
  the {BIBO} channel with a no-repeated-ones input constraint,'' \emph{IEEE
  Trans. Inf. Theory}, vol.~64, no.~7, pp. 4940--4961, Jul. 2018.

\bibitem{PeledSabagBEC}
O.~{Peled}, O.~{Sabag}, and H.~H. {Permuter}, ``Feedback capacity and coding
  for the $(0,k)$-{RLL} input-constrained bec,'' \emph{IEEE Trans. Inf.
  Theory}, vol.~65, no.~7, pp. 4097--4114, Jul. 2019.

\end{thebibliography}

\end{document}